\DeclareMathAlphabet{\mathpzc}{OT1}{pzc}{m}{it}
\definecolor{trustcolor}{rgb}{0.71,0.14,0.07}
\numberwithin{equation}{section}
\theoremstyle{plain}
\newtheorem{theorem}{Theorem}[section]
\newtheorem{lemma}{Lemma}[section]
\theoremstyle{remark}
\newtheorem{remark}{Remark}[section]
\newtheorem*{quest*}{Question}
\newtheorem*{remark*}{Remark}
\theoremstyle{remark}
\theoremstyle{definition}
\newtheorem{definition}{Definition}[section]
\newtheorem*{definition*}{Definition}
\newtheorem*{notation*}{Notation}
\newtheorem*{notations*}{Notations}
\providecommand{\B}{\mathbf}
\providecommand{\bS}[1]{\boldsymbol{#1}}
\providecommand{\C}{\mathcal}
\providecommand{\CS}{\mathscr}
\providecommand{\D}{\mathbb}
\providecommand{\R}{\mathrm}
\newcommand{\ee}{\mathrm{e}}
\newcommand{\eul}{\mathrm{e}}
\newcommand{\ii}{\mathrm{i}}
\def\boxx{\mathrm{B}}
\def\ball{\mathrm{B}}
\def\bball{\mathbf{B}}
\DeclareMathOperator{\dist}{dist}
\DeclareMathOperator{\card}{card}
\DeclareMathOperator*{\essup}{ess\,sup}
\DeclareMathOperator*{\supp}{supp}
\DeclareMathOperator{\expect}{\mathbb{E}}
\DeclareMathOperator{\one}{\mathbf{1}}
\DeclareMathOperator{\diam}{{\rm diam}}
\def\loc{$m\mathcal{-L}${\textit{oc}}}
\def\nloc{$m\mathcal{-N}\mathcal{L}${\textit{oc}}}
\def\mathnloc{m\mathcal{-N}\mathcal{L}{\textit{oc}}}
\def\mT{{\rm $m$-T}\xspace}
\def\NT{{\rm $m$-NT}\xspace}
\def\ER{{\rm $E$-R}\xspace}
\def\NR{{\rm $E$-NR}\xspace}
\def\CNR{{\rm $E$-CNR}\xspace}
\def\PR{{\rm $E$-PR}\xspace}
\def\EmS{{\rm $(E,m)$-S}\xspace}
\def\EmNS{{\rm $(E,m)$-NS}\xspace}
\def\lam{{\lambda}}
\def\eps{\epsilon}
\def\Bn{\B{n}}
\def\Bx{\B{x}}
\def\By{\B{y}}
\def\Bu{\B{u}}
\def\Bv{\B{v}}
\def\Bw{\B{w}}
\def\Bz{\B{z}}
\def\tf{\tilde{f}}
\def\vBx{\vec{\B{x}}}
\def\BPsi{{\bS{\Psi}}}
\def\BPhi{{\bS{\Phi}}}
\def\Const{{\rm{Const}}}
\def\Bpsi{{\bS{\psi}}}
\def\Bphi{{\bS{\varphi}}}
\def\Lam{{\Lambda}}
\def\lam{{\lambda}}
\def\Ba{\B{a}}
\def\Bb{\B{b}}
\def\BA{\B{A}}
\def\BB{\B{B}}
\def\BG{\B{G}}
\def\BH{\B{H}}
\def\BK{\B{K}}
\def\BP{\B{P}}
\def\BS{\B{S}}
\def\BU{\B{U}}
\def\BV{\B{V}}
\def\BW{\B{W}}
\def\BX{\B{X}}
\def\BDelta{\B{\Delta}}
\def\BLam{\B{\Lambda}}
\def\bcH{\boldsymbol{\C{H}}}
\def\bcHN{\boldsymbol{\C{H}^N}}
\def\uw{\B{w}}
\def\DC{\D{C}}
\def\DD{\D{D}}
\def\DP{\D{P}}
\def\DR{\D{R}}
\def\DZ{\D{Z}}
\def\DN{\D{N}}
\def\cD{\C{D}}
\def\cE{\C{E}}
\def\cF{\C{F}}
\def\cG{\C{G}}
\def\cH{\C{H}}
\def\cJ{\C{J}}
\def\cL{\C{L}}
\def\cM{\C{M}}
\def\cN{\C{N}}
\def\cR{\C{R}}
\def\cS{\C{S}}
\def\cX{\C{X}}
\def\csB{\CS{B}}
\def\cZp{\boldsymbol{\C{Z}^N_{>}}}
\def\cZpzero{\boldsymbol{\C{Z}^N_{=}}}
\def\cZpeq{\boldsymbol{\C{Z}^N_{\ge}}}
\def\bcEp{\boldsymbol{\C{E}^N_{>}}}
\def\cZN{\boldsymbol{\C{Z}^N}}
\def\mytimes{\operatornamewithlimits{\hbox{\huge$\times$}}}
\def\cZ{\C{Z}}
\def\rd{{\R{d}}}
\def\rdS{{{\rho_S}}}
\def\rdN{{\R{d_{\cZp}}}}
\def\rc{{\R{c}}}
\def\hx{\hat{x}}
\def\hBx{\hat{\mathbf{x}}}
\def\tm{ {\widetilde{m}}}
\def\jknot{{j_\circ}}
\def\muxy{{\mu^{\Bx,\By}_{\om}}}
\def\bmubxy{{\mu^{\Bx,\By}_{\bball,\om}}}
\def\bmukxy{{\mu^{\Bx,\By}_{\bball_{L_k},\om}}}
\def\bmuxy{{\mu^{\Bx,\By}_{\om}}}
\def\be{\begin{equation}}
\def\ee{\end{equation}}
\def\ba{\begin{array}{l}}
\def\ea{\end{array}}
\def\bal{\begin{aligned}}
\def\eal{\end{aligned}}
\def\fF{\mathfrak{F}}
\def\fS{\mathfrak{S}}
\def\om{{\omega}}
\def\Om{{\Omega}}
\def\pr#1{\D{P}\left\{\,#1\,\right\}}
\def\esm#1{\D{E}\left[\, #1\, \right]}
\def\pt{\partial}
\def\half{\frac{1}{2}}
\def\quart{\frac{1}{4}}
\def\nb#1{{ \langle #1 \rangle}}
\def\truc#1#2#3{\smash{\mathop{\,\, #1 \,\, }\limits^{#2}_{#3}}}
\def\tto#1{\smash{\mathop{\,\,\,\, \longrightarrow \,\,\,\, }\limits_{#1}}}
\def\myset#1{{\left\{\,#1\,\right\}}}
\def\ketbra#1#2{{ | {#1} \rangle \langle {#2}| }}
\def\bra#1{{ \langle {#1}| }}
\def\ket#1{{ | {#1} \rangle }}
\def\mymax#1{{ \truc{\max} {} {#1}}}
\def\diy{\displaystyle}
\def\Uzero{{\rm\textsf{U0}}\xspace}
\def\Uone{{\rm\textsf{U1}}\xspace}
\def\Wone{{\rm\textsf{W1}}\xspace}
\def\Wtwo{{\rm\textsf{W2}}\xspace}
\def\Wthree{{\rm\textsf{W3}}\xspace}
\begin{document}

\title[Direct Scaling Analysis. II. Multi-particle systems]
{Direct Scaling Analysis \\of localization in disordered systems.\\ II.  Multi-particle lattice systems}

\author[V. Chulaevsky]{Victor Chulaevsky}


\address{D\'{e}partement de Math\'{e}matiques\\
Universit\'{e} de Reims, Moulin de la Housse, B.P. 1039\\
51687 Reims Cedex 2, France\\
E-mail: victor.tchoulaevski@univ-reims.fr}

\date{}
\begin{abstract}
We adapt a simplified version of the Multi-Scale Analysis presented in \cite{C11} to multi-particle tight-binding Anderson models. Combined with a recent eigenvalue concentration bound for multi-particle systems \cite{C10}, the new method leads to a simpler proof of the multi-particle dynamical localization with more optimal decay bounds on eigenfunctions than   in \cite{CS09b,AW09a,AW09b}, for a large class of strongly mixing random potentials. All earlier results required the random potential to be IID. We also extend the result on multi-particle  localization to models with a rapidly decaying interaction.

\textbf{Note: this is an improved version of the manuscript originally uploaded on 11.06.2011.}

\end{abstract}

\maketitle
\section{Introduction. The model, assumptions and the main results} \label{sec:intro}

Analysis of localization phenomena in multi-particle quantum systems with nontrivial interaction in a random environment is a relatively new direction in the Anderson localization theory, where during half a century, since the seminal paper by P. Anderson \cite{A58}, most efforts were concentrated on the study of disordered systems in single-particle approximation, i.e., without inter-particle interaction. While in numerous physical models such an approximation is fairly reasonable, it was pointed out already in the first works  by Anderson that multi-particle models presented a real challenge.

The number of results on multi-particle localization obtained both by physicists and mathematicians remains yet rather limited. We do not review here results obtained by physicists,
based on methods of theoretical physics and considered as firmly established by the physical community.
The rigorous mathematical results on multi-particle localization obtained so far (cf. \cite{CS09a,CS09b,CBS11}, \cite{AW09a,AW09b}) apply to $N$-particle systems with arbitrary, but fixed $N>1$, and the range of parameters (such as the amplitude of the disorder and/or proximity to the edge(s) of the spectrum) is rapidly degrading as $N\to\infty$.

In this paper, we study spectral properties of random lattice Schr\"{o}dinger operators (LSO)  in the general framework of the Multi-Scale Analysis  (MSA) developed in \cite{FMSS85,Spe88,DK89,DK91}.
Specifically, we study random Hamiltonians of the form
\be\label{eq:H}
\BH(\om) = \BH_0 + g\BV(\om) + \BU, \;g\in\DR,
\ee
where $\BH_0$ is a finite-difference operator representing the kinetic energy, e.g., the nearest-neighbor lattice (negative) Laplacian $(-\BDelta)$, $\BU$ is the operator of multiplication by the interaction potential $\Bx\mapsto\BU(\Bx)$, and $\BV(\om)$ is the operator of multiplication by
the function $\Bx\mapsto \BV(\Bx;\om) = V(x_1;\om) + \cdots  V(x_N;\om)$,
where $V:\DZ^d\times\Om\to\DR$ is a random field relative to some probability space
$(\Om,\cF,\DP)$.
As was shown in \cite{CS09b}, \cite{AW09a,AW09b}, Anderson localization can be established in the
entire space $\cH_N$, but in this paper we consider only the restriction of $\BH(\om)$ to the
fermionic subspace. The bosonic subspace can be treated essentially in the same way. Using either
kind of quantum statistics makes the scaling analysis more transparent and proofs substantially
simpler.

Aizenman and Warzel \cite{AW09a,AW09b} established the exponential dynamical localization in multi-particle systems in the framework of a parametric analysis allowing to prove
local stability (in the parameter space) of the Anderson localization phenomenon (including
exponential strong dynamical localization) for Anderson-type Hamiltonians with an IID external random potential under perturbations by a short-range interaction.

Apart from simplifications of the Multi-Particle Multi-Scale Analysis (MPMSA) developed in
\cite{CS09a,CS09b}, the novelty of this paper is three-fold:
\begin{itemize}
  \item the external random potential is not assumed to be IID, but is strongly mixing;
  \item the interaction potential is not necessarily of finite range, but can be exponentially or even sub-exponentially fast decaying at infinity;
  \item the decay of eigenfunctions is proven with respect to a norm-distance in $\DZ^{Nd}$
  (more precisely, in a symmetrized norm-distance), while earlier results used explicitly (\cite{AW09a}) or implicitly (\cite{CS09b}) a more complicated and less natural notion of distance in the multi-particle configuration space\footnote{A detailed  analysis of the nature of this technical difficulty, performed by Aizenman and Warzel \cite{AW09a}, proved instrumental for a solution proposed later in \cite{C10}.}.
\end{itemize}

Note that our proof of multi-particle localization for strongly mixing random potentials and rapidly decaying infinite-range interaction can be extended to the entire Hilbert space $\cH_N$, at the price of greater complexity. On the other hand, the proofs can be made simpler and more streamlined, if only a \emph{power-law} decay of the eigenfunction correlators were to be established (as is customary in the framework of the MSA); the latter is not to be confused with the decay of \emph{eigenfunctions} which is exponential.

We consider here only strongly disordered systems; an adaptation of our approach to localization at "extreme" energies in weakly disordered multi-particle systems, as well as to weak perturbations of localized non-interacting multi-particle systems, has been recently obtained by Ekanga \cite{E11} in the framework of his PhD project.

The new scaling procedure used in this paper is an adaptation of the method proposed earlier in \cite{C11} in the single-particle context; it is close in spirit to a very simple method developed by Spencer \cite{Spe88} for a fixed-energy analysis of Green functions\footnote{I thank Tom Spencer, Abel Klein and Boris Shapiro for fruitful discussions of the work \cite{Spe88}.}.

The proofs of all statements not given in the main text can be found in Appendix.

The structure of this paper is as follows.

\begin{itemize}

  \item In Section \ref{sec:basic.notions} we describe main geometrical notions and constructions relative to fermionic systems; for brevity of presentation, we
      consider first the case of one-dimensional particles. Nevertheless,
      our scheme in Sections \ref{sec:deterministic.bounds}--\ref{sec:SimpleDL.finite.L}
      is not specific to one dimension; indeed, it has to be stressed that from the
      analytic point of view, an $N$-particle system in $\DZ^1$ (starting already
      with $N=2$) gives rise to a multi-dimensional (discrete) Schr\"{o}dinger operator.
  \item In Section \ref{sec:deterministic.bounds} we describe the analytic core of the new scaling procedure, which is essentially the same as in the single-particle case treated in \cite{C11}.

  \item The probabilistic inductive bounds for the model with a finite-range interaction are given in Section \ref{sec:MPMSA.finite.range}. The key statements here are Lemma \ref{lem:PI}, Lemma \ref{lem:DS.mixed} and Theorem \ref{thm:loc.ind}. A reader familiar with \cite{CS09b} may notice that an important component of the the proof of Lemma \ref{lem:PI}, given by Lemma \ref{lem:instead.PITRONS}, is much simpler than its counterpart in \cite{CS09b}, due to the use of eigenfunctions instead of Green functions.

  \item An adaptation to infinite-range interactions is given in Section \ref{sec:MPMSA.infinite.range}.

  \item The derivation of the strong dynamical localization from the results of the scaling analysis is given in Section \ref{sec:SimpleDL.finite.L}, where we follow the same path as in \cite{C11}. A reader familiar, e.g., with \cite{GD98,DS01} may want to skip Section \ref{sec:SimpleDL.finite.L}.

  \item An adaptation to more general graphs is discussed in Section \ref{sec:general.graph}.

  \item For the reader's convenience, we prove in Appendix the new eigenvalue concentration (EVC) bound  (Theorem \ref{thm:PCT}; cf. also \cite{C10}), allowing to establish the exponential decay of eigenfunctions with respect to the max-norm in the $N$-particle configuration space.
\end{itemize}

To conclude the introduction, note that we do not discuss here the ergodicity issues
for the Hamiltonians $\BH(\om)$, for several reasons.

First of all, it is well-known already in the single-particle localization theory,
where the Hamiltonian has the form $H(\om) = H_0 + gV(\om)$, say, on the lattice $\DZ^d$,
the ergodicity of the random field $V:\DZ^d\times\Om\to\DR$  per se is irrelevant for the usual
proofs of complete localization (pure point spectrum) under the assumption of strong disorder
($|g|\gg 1$). In the case of an IID potential, the location of the a.s. spectrum can be described
withthe help of the Weyl argument, again, without using explicitly the ergodicity (which
is, of course, granted for an IID random field).

The second reason is that already in the case of a two-particle Hamiltonian
$\BH(\om) = \BH_0 + \BW(\Bx;\om) = \BH_0 + \BV(\Bx;\om) + \BU(\Bx)$, the random field
$$
\big( (x_1, x_2), \om) \mapsto V(x_1;\om) + V(x_1;\om) + \BU(x_1,x_2)
$$
is no longer stationary. Specifically, let $d=1$ and $V(x;\om)\in[0,1]$, say,
uniformly distributed
in $[0,1]$, and $\BU(x_1,x_2) = U \one_{\{x_1=x_2\}}$, $U < -2$. Then
for all points of the form $\Bx=(x,x)$, $\BW(\Bx;\om) < 0$, while for all others
$\BW(\Bx;\om) \ge 0$. The translation invariance of the field $\BW(\Bx;\om)$ holds
only for a subrgoup of "diagonal" shifts $(x_1, x_2)\mapsto (x_1+a, x_2+a)$,
$a\in\DZ$.

The third, more important and less formal reason is that the language of the density
of states (DoS), very helpful and instructive in the context of single-particle models,
is much less so in the framework of multi-particle models with a nontrivial interaction.
Recall that, according to an earlier result by Klopp and Zenk \cite{KZ03}, proven for
multi-particle Schr\"{o}dinger operators in $\DR^d$ with decaying interaction, the DoS
is the same for operators with and without interaction. (An adaptation of their techniques
to the lattice is not difficult.) Consider again the two-particle operator $\BH(\om)$
from the previous paragraph; let $\BH_0\ge 0$. Then the \emph{density of states} is supported by the non-negative half-line $\DR_+$, but it is obvious that, with $U < -\|\BH_0\| - \|\BV\|$, the \emph{spectrum} has a negative component,
for the quadratic form $\langle f| \BH f\rangle$ is not sign-definite, having positive
and negative diagonal matrix elements in the delta-basis. Speaking informally, the DoS measure
indicates the location of the "bulk" spectrum, while the negative interaction gives rise
to some "internal surface" spectrum due to eigenfunctions (square-summable or generalized)
essentially supported by a neighborhood of $\supp \BU$ and decaying
away from it (which can be seen, e.g., with the help of the Combes--Thomas argument).

The sign of the interaction is not crucial for the above observation; this can be seen already from the fact that finite-difference Schr\"{o}dinger operators $\pm \BH(\om)$ have similar
qualitative spectral properties. Specifically, let $\BU(x_1,x_2) = U \one_{\{x_1=x_2\}}$,
$U = 11 = 1 + \|\BH_0\| + \|\BV\|$, $\BH_0\ge 0$, $\|\BH_0\|=8$. Then
the Dos is the same as with $U=0$, hence, contained in
$\big[0, \|\BH_0\| + \|\BV\| \,\big] = [0, 10]$
(recall that $0\le V\le 1$ in this example). On the other hand,
for $\Bx = (0,1)$, one has, by non-negativity of $\BH_0 + \BV$,
$$
\langle \one_\Bx | \BH \one_\Bx \rangle \ge U  = 11.
$$
Therefore, the spectrum of $\BH(\om)$ in $(10, 11]$ is nonempty, while the DoS vanishes
in  $(10, 11]$.

\smallskip

Nevertheless, it is true that, even though the potential \emph{random field} $\BW(\Bx;\om)$ is
\emph{not ergodic}, Hamiltonians $\BH(\om)$ still form an \emph{ergodic family of operators},
hence their spectral components are a.s. nonrandom.


\section{Basic definitions, assumptions and main results} \label{sec:basic.notions}

\subsection{Configurations of indistinguishable particle  in $\DZ^1$}
\label{ssec:interaction.potentials}

In the first part of this paper, we work with configurations of $N\ge 1$ quantum particles in
the one-dimensional lattice $\DZ$. In quantum mechanics, particles of the same kind
are considered indistinguishable; more precisely, depending on the nature of the particles,
the wave functions describing $N>1$ particles have to be either symmetric (Bose--Einstein
quantum statistics) or antisymmetric (Fermi--Dirac quantum statistics). We choose here fermionc case; this gives rise to slightly simpler notations and constructions.

For clarity, we use boldface symbols for objects related to
multi-particle systems.

Quantum states of an $N$-particle fermionic system in $\DZ$ are elements of the
Hilbert space $\bcHN = \bcH^{N,-}$ of all square-summable antisymmetric functions
$\BPsi:\DZ^N\to\DC$, with the inner product $\langle \cdot\,|\,\cdot\rangle$
inherited from the Hilbert space $\ell^{2}(\DZ^N) = \left(\cH^1\right)^{\otimes N}$.
In this particular case where the "physical" configuration space
is one-dimensional, $\bcH^N$ admits a simple representation which we will use.

First, note that
any antisymmetric function $\Bx=(x_1, \ldots, x_N) \mapsto \BPsi(x_1, \ldots, x_N)$ vanishes
on all hyperplanes $\{\Bx: x_i = x_j\}$, $1\le i < j \le N$. Further, any function
defined on the "positive sector" $\cZp=\{(x_1, \ldots, x_N): x_1>x_2 > \cdots >x_N\}$
admits a unique antisymmetric continuation to $\cZN := \DZ^N$. An orthonormal basis
in $\bcHN$ can be chosen in the usual form

\def\myotimes{\operatornamewithlimits\otimes}

$$
\BPhi_{\Ba}(\Bx) = \frac{1}{ \sqrt{N!}} \sum_{\pi\in\fS_N}
(-1)^{\pi} \myotimes_{j=1}^{N} \one_{ a_{\pi^{-1}(j)}},
\quad \Ba\in\cZN, \; \#\{a_1, \ldots, a_N\}=N.
$$
Here $\fS_N$ is the symmetric group acting in $\DZ^N$ by permutations of the particle
positions, $\pi(\Ba) = (a_{\pi^{-1}(1)}, \ldots, a_{\pi^{-1}(N)})$, and
$(-1)^{\pi}\in\{+1, -1\}$ denotes the parity of the permutation $\pi$. It is readily seen
that $\bcHN$ is unitarily isomorphic  to the Hilbert space
$\ell^2_{-}(\cZp)$ of square-summable functions on $\cZp$;
equivalently, one can consider square-summable functions on  the
set $\cZpeq = \{ (x_1 \ge \cdots \ge x_N \}$ vanishing on the boundary
$\cZpzero = \{\Bx\in\cZpeq:\, \exists\, i\ne j,\; x_i = x_j\}$. Indeed, the isomorphism
is induced by the bijection between the orthonormal bases $\{\BPhi_{\Ba}\}$
and $\{\one_{\Ba}\}$:
$$
\BPhi_{\Ba} \leftrightarrow  \one_{\Ba} =\one_{a_1} \otimes \cdots \otimes \one_{a_N},
\quad \Ba\in\cZp.
$$
The subspace $\bcHN$ is invariant under any operator commuting with the action of the symmetric
group $\fS_N$.

An advantage of the above representation is that $\cZp$ inherits its explicit, natural graph
structure from $\DZ^N$.

Occasionally we will denote by $\vBx$ the ordered configuration (i.e., a vector)
$(x_1, \ldots, x_N)$ corresponding to an unordered configuration
$\Bx=\{x_1, \ldots, x_N\}$.

\subsection{Fermionic Laplacians}
\label{ssec:interaction.potentials}

Any unordered finite or countable connected graph $(\cZ,\cE)$ with the set of vertices $\cZ$ and the set of edges $\cE$ is endowed with the canonical graph distance
$(x,y)\mapsto \rd_{\cZ}(x, y)$ (defined as the length f the shortest path
$x\rightsquigarrow y$ over the edges) and with the canonical (negative) graph Laplacian $(-\Delta_\cZ)$:
$$
(-\Delta_\cZ f)(x) = \sum_{ \nb{x,y} } \big( f(x) - f(y) \big)
= n_\cZ(x) f(x) - \sum_{ \nb{x,y} } f(y);
$$
here $\nb{x,y}$ denotes a pair of nearest neighbors, and $n_\cZ(x)$ is the coordination
number of the point $x$ in $\cZ$, i.e., the number of its nearest neighbors.

In particular, one can take $\cZ = \DZ^N$ with the edges $\nb{x,y}$ formed by
vertices $x,y$ with $|x-y|_1 := |x_1 - y_1| + \cdots + |x_N - y_N|=1$. In other words,
the vector norm $|\cdot|_1$ induces the graph distance on $\DZ^N$. The (negative) Laplacian
on $\DZ^N$, which we will now denote by $(-\BDelta)$,
can be written as follows:
$$
(-\BDelta) = \sum_{j=1}^N  \left(\myotimes_{i=1}^{j-1} \one^{(i)} \right)
\otimes \big(-\Delta^{(j)} \big)\otimes
\left( \myotimes_{k=j+1}^{N} \one^{(k)} \right)
$$
where $\one^{(i)}$ is the identity operator acting on the $i$-th variable, and
$(-\Delta^{(j)})$ is the one-dimensional negative lattice Laplacian in the $j$-th variable:
$$
(-\Delta^{(j)} f)(x_j) = 2f(x_j) - f(x_j-1) - f(x_j+1), \quad x_j\in\DZ.
$$
Its restriction to the subspace $\bcHN$ of antisymmetric functions can be equivalently defined
in terms of functions supported by the positive sector $\cZp$, hence vanishing on its border
$\cZpzero$. Indeed, the matrix elements of $\BDelta$ in the basis $\BPhi_{\Ba}$ can be nonzero only for pairs $\BPhi_\Ba$, $\BPhi_\Bb$ with $|\Ba - \Bb|_1=1$, so that,
for some $j\in[1,N]$, $|a_j - b_j|=1$, while for all $i\ne j$, $a_i = b_i$.
If $\Ba, \Bb\in\cZp$, then
$\langle \BPhi_\Ba\, | \,\BDelta \, | \,\BPhi_\Bb \rangle
= \frac{1}{N!}\langle \one_\Ba\, | \,\BDelta \, |\, \one_\Bb \rangle$.
If, say, $\Ba\in\cZpzero$,
then the respective  matrix element of the Laplacian's restriction  to $\cZp$
with Dirichlet boundary conditions on $\cZpzero$ vanishes, but so does the
function $\BPhi_\Ba$ (which is no longer an element of the basis in $\bcHN$).
Therefore, up to a constant factor, the restriction of the $N$-particle Laplacian to the fermionic subspace $\bcHN$ is unitarily equivalent to the standard graph Laplacian on $\cZp$. From this point on, we will work with the latter, occasionally making references to the space $\ell^2(\DZ^N)$.

It will be convenient to use in the course of the scaling analysis
a different notion of distance on $\DZ^N$ (hence, also on $\cZp\subset\DZ^N$):
the max-distance defined by
$$
\rho(\Bx, \By) = \max_{1\le j \le N} \rd_{\cZ}(x_j, y_j),
$$
and work with balls relative to the max-distance,
\be\label{eq:ball.factor}
\bball_L(\Bx) = \{\By:\, \rho(\Bx,\By)\le L\} = \mytimes_{j=1}^N \ball_L(x_j).
\ee
Here $\ball_L(x) = [x-L, x+L]\cap\DZ$.

In the positive sector $\cZp$, the above factorization of $\rho$-balls
is subject to the condition that the RHS  of \eqref{eq:ball.factor}
is itself a subset of the positive sector $\cZp$ (but the inclusion
"$\subset$" always holds true). Considering configurations
$\Bx=\{x_1, \ldots, x_n\}$ as subsets of $\DZ$, one can define the distance
between two configurations $\Bx'\in{\boldsymbol{\cZ^{(n')}_{>}}}$ and
$\Bx''\in{\boldsymbol{\cZ^{(n'')}_{>}}}$ in a usual way:
$$
\dist( \Bx', \Bx'') = \min_{u\in\Bx'} \min_{v\in\Bx''} |u-v|.
$$

\begin{lemma}
Let $\Bx\in\cZp$  be a union of two subconfiguraitons
$\Bx'\in{\boldsymbol{\cZ^{(n')}_{>}}}$ and
$\Bx''\in{\boldsymbol{\cZ^{(n'')}_{>}}}$, such that
$\dist(\Bx', \Bx'') > 2L$. Then the following identity holds true:
$$
\bball^{(N)}_{L}(\Bx) = \bball^{(n')}_{L}(\Bx') \times \bball^{(n'')}_{L}(\Bx'').
$$
\end{lemma}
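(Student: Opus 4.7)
The plan is to exploit the separation hypothesis $\dist(\Bx', \Bx'') > 2L$, which ensures that $L$-perturbations of the two subconfigurations cannot collide. Interpreting each ball-set as a set of configurations viewed as subsets of $\DZ$, and reading the Cartesian product $\bball^{(n')}_{L}(\Bx') \times \bball^{(n'')}_{L}(\Bx'')$ as the collection of disjoint unions $\By' \sqcup \By''$ with $\By' \in \bball^{(n')}_{L}(\Bx')$ and $\By'' \in \bball^{(n'')}_{L}(\Bx'')$, the identity reduces to two set inclusions, each proved by elementary triangle-inequality estimates controlled by the gap $> 2L$.

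For the inclusion $\supset$, I would take $(\By', \By'')$ in the product. Any $y' \in \By'$ lies within $L$ of some $x' \in \Bx'$, and any $y'' \in \By''$ lies within $L$ of some $x'' \in \Bx''$, so the triangle inequality yields
$$
|y' - y''| \ge |x' - x''| - 2L \ge \dist(\Bx', \Bx'') - 2L > 0.
$$
Thus $\By' \cap \By'' = \emptyset$ and the union $\By = \By' \sqcup \By''$ is a genuine $N$-particle configuration. The bound $\rho(\By, \Bx) \le L$ follows from the bounds on the two blocks, provided the particle matching between $\By$ and $\Bx$ is assembled consistently across the two groups, which is legitimate because $\rho$ coincides on $\cZp$ with the symmetrized max-distance by the rearrangement inequality.

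For the reverse inclusion $\subset$, I would take $\By \in \bball^{(N)}_{L}(\Bx)$ and use the particle-matching underlying $\rho(\By, \Bx) \le L$ to partition $\By$ into subconfigurations $\By', \By''$ paired with $\Bx', \Bx''$ respectively. The separation hypothesis once again prevents ambiguity: a particle lying within $L$ of some $x' \in \Bx'$ is at distance $> L$ from every $x'' \in \Bx''$, so each $y \in \By$ is unambiguously assigned to one subconfiguration, and the inherited bounds give $\By' \in \bball^{(n')}_{L}(\Bx')$ and $\By'' \in \bball^{(n'')}_{L}(\Bx'')$.

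I expect the main (minor) technical obstacle to be the careful bookkeeping between the ordered vector representation (in which $\cZp$ is described) and the unordered set-theoretic viewpoint — in particular, verifying that when $\Bx'$ and $\Bx''$ interleave inside $\DZ$ but are well separated in the sense of $\dist$, the natural map $(\By', \By'') \mapsto \By' \cup \By''$ is still a bijection onto $\bball^{(N)}_{L}(\Bx)$, respecting both the fermionic ordering and the max-distance $\rho$.
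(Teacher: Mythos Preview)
The paper does not give a proof of this lemma; it simply states ``The proof is straightforward and will be omitted.'' Your proposal supplies precisely the kind of direct double-inclusion argument, driven by the triangle inequality and the gap $\dist(\Bx',\Bx'')>2L$, that the author evidently had in mind, and it is correct --- including your honest flagging of the ordered-vs-unordered bookkeeping as the only place requiring care.
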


The proof is straightforward and will be omitted.

The graph distance $\rdN$ will be useful in some geometrical
constructions and definitions, referring to the graph structure
inherited from $\cZN$.

Given a subgraph $\BLam\subset\cZp$, we define its internal, external and the
so-called graph (or edge) boundary,
in terms of the canonical graph distance:
$$
\bal
\pt^-\BLam &= \myset{ \By\in\BLam:\; \rdN(\By, \BLam^\rc) = 1},
%
\quad \pt^+\BLam
= \pt^-\BLam^\rc
\\
\pt\BLam &= \myset{ (\Bx,\By)\in\BLam\times\BLam^\rc:\; \rdN(\Bx, \By) = 1}.
\eal
$$

We also define the occupation numbers relative to a configuration
$\Bx=(x_1, \ldots, x_N)$.
Namely, define a function $\Bn_{\Bx}:\DZ \mapsto \DN$ by
$
\Bn_{\Bx}(y) = \# \{j: \, x_j = y\}, \;\; y\in\cX.
$

\subsection{Multi-particle Hamiltonians}

The matrix elements of resolvents $\BG_\BLam(E) = (\BH_\BLam - E)^{-1}$,
for $E \not\in \Sigma(\BH_\BLam)\equiv {\rm spec}(\BH_\BLam)$, in the canonical delta-basis,
usually referred to as the Green functions, will be denoted by $\BG_\BLam(\Bx,\By;E)$. In the context of random operators, the dependence upon the element $\om\in\Om$ will be often omitted, unless required or instructive. Similar notations will be used for infinite $\BLam$.

\subsection{Graph Laplacians and fermionic Hamiltonians}

Given a real-valued function $\BW:\cZp\to\DR$, we identify it with the operator of
multiplication by $\BW$ and consider the fermionic $N$-particle random Hamiltonian
$$
\BH(\om) = \BH_0 + \BW(\om),
$$
where $\BH_0$ is a second-order finite-difference operator in $\ell^2(\cZp)$, for example
the graph Laplacian $\BDelta$. We assume that
$$
\BW(\Bx) = \BW(\Bx;\om) = g\BV(\Bx;\om) + \BU(\Bx)
$$
where $g\BV(\cdot;\om)$ is the external random potential energy of the form
$$
\BV(\Bx;\om) = V(x_1;\om) + \cdots + V(x_N;\om),
$$
and $V:\DZ\times\Om\to\DR$ is a random field on $\DZ$ relative to a probability space
$(\Om,\fF,\DP)$; the expectation relative to the measure $\DP$ will be denoted by
$\esm{\cdot}$. Our assumptions on $V$ are listed below (cf. \Wone -- \Wthree).

Further, $\BU$ is the interaction energy operator; for notational brevity, we assume
that it is generated by a two-body interaction potential $U$,
$$
\BU(\Bx) = \sum_{i\ne j} U(|x_i - x_j|),
$$
satisfying one of the hypotheses \Uzero --\Uone (see below).

Note that, without loss of generality, if $\|\BH_0\|<\infty$, then one can always
assume that the absolute values of the matrix elements of $\BH_0$
are bounded by $1$. Otherwise
one can consider a two-parameter family
$\BH_0 + h^{-1}(g\BV(\Bx;\om) + \BU(\Bx)) = h^{-1}(h\BH_0 + g\BV(\Bx;\om) + \BU(\Bx))$,
so the two operators share the eigenvectors, while the matrix elements of $h\BH_0$
can be made smaller than $1$ by taking $h$ small enough.

One can also consider higher-order finite-difference operators $\BH_0$; this requires only minor
technical modifications; see the discussion in Section \ref{sec:deterministic.bounds}.

\subsection{Assumptions on the random potential}
\label{ssec:assumptions.V}

We assume that the random field $V:\DZ^d\times\Omega\to\DR$ is (possibly) correlated, but strongly mixing; this includes of course the IID potentials treated earlier in \cite{CS09a,CS09b} and  \cite{AW09a,AW09b}.
Let $F_{V,x}(t) = \pr{ V(x;\omega) \leq t}$, $x\in\DZ^d$, be the marginal probability distribution functions (PDF)  of the random field $V$, and
$F_{V,x}(t\,|\, \fF_{\ne x}) := \pr{ V(x;\om) \le t \,|\, \fF_{\ne x}}$ the  conditional distribution functions (CDF) of the random field $V$ given the sigma-algebra
$\fF_{\ne x}$   generated by random variables $\{V(y;\om), y\ne x\}$. Our assumptions on correlated potentials are summarized as follows:

\begin{enumerate}[\Wone]
  \item The marginal CDFs are uniformly H\"{o}lder-continuous: for some $\kappa>0$,
\be\label{eq:cond.continuity.V}
\essup \, \sup_{x\in\DZ^d}\, \sup_{a\in \DR}
\left[ F_{V,x}(a+s\,|\, \fF_{\ne x}) - F_{V,x}(a\,|\, \fF_{\ne x}) \right]
\le \Const\,s^\kappa.
\ee
\end{enumerate}
\begin{enumerate}[\Wtwo]
  \item (\emph{Rosenblatt mixing})
For any pair of subsets  $\ball'$, $\ball''\subset \DZ^d$ with  $\rd(\ball',\ball'')\ge L$,
any events $\cE'\in\fF_{\ball'}$, $\cE''\in\fF_{\ball''}$ and some $C>0$
\be\label{eq:Cmix}
\ba
\Big|\pr{ \cE' \cap \cE''} - \pr{ \cE'} \pr{ \cE'' } \Big|
 \le e^{-C \, \ln^2 L}.
\ea
\ee
\end{enumerate}

One can easily check that for any $p>0$, $\alpha\in(1,2)$,  $b\in(0,1)$, arbitrarily large $a>0$ and sufficiently large $L_0\in\DN$
\be\label{eq:Cmix.prob}
e^{-C \ln^2 L_0^{\alpha^k}} < \left( L_0^{\alpha^k} \right)^{-ap(1+b)^k}, \quad k\ge 0.
\ee
The rate of decay of correlations indicated in the RHS of \eqref{eq:Cmix.prob} is required to prove dynamical localization bounds with decay rate of EF correlators faster than polynomial, and it can be relaxed to a power-law decay, if one aims to prove only a power-law decay of EF correlators\footnote{In this case, \Wone\, can be relaxed to a form of log-H\"{o}lder continuity of the marginal CDFs.}.

Assumptions \Wone--\Wtwo\, are sufficient for the proof of spectral and strong dynamical localization in multi-particle systems. In particular, the role of \Wone\, is to guarantee Wegner-type estimates used in the MPMSA scheme. However, it was discovered in \cite{AW09a,AW09b} and in \cite{CS09a,CS09b} that traditional, Wegner-type EVC estimates do not provide all necessary information for efficient decay bounds on the eigenfunctions of multi-particle operators. More precisely, conventional EVC bounds seem so far insufficient for the proof of the exponential decay of eigenfunctions with respect to a norm in the configuration space of $N$-particle systems, starting from\footnote{For $N=2$, one can use a simpler EVC bound proven in \cite{CS09a}.}  $N=3$. For this reason, we proposed earlier \cite{C10} a new method allowing to compare spectra of two strongly correlated multi-particle subsystems and extending the  Wegner-type EVC estimate. For the new method to apply, one needs an additional assumption on the random potential field which we will describe now.

First, introduce the following notations. Given a lattice parallelepiped  $Q\subset \DZ^d$, we denote by $\xi_{Q}(\omega)$ the sample mean of the random field $V$ over the $Q$,
$$
 \xi_{Q}(\omega) =  \langle V \rangle_Q  = | Q |^{-1} \sum_{x\in Q} V(x,\omega)
$$
and define the "fluctuations" of $V$ relative to the sample mean,
$
 \eta_x  = V(x,\omega) - \xi_{Q}(\omega), \; x\in Q.
$
Denote by $\fF_{V, Q}$ the sigma-algebra generated by $\{\eta_x, V_y:\, \,x\in Q, y\not\in Q\}$, and by
$F_\xi( \cdot\,| \fF_{V, Q})$ the conditional distribution function of $\xi_Q$ given
$\fF_{V, Q}$.
Further, consider parallelepipeds $Q\subset\DZ^d$ with $\diam(Q) \le R<\infty$, and introduce an
$\fF_{V, Q}$-measurable random variable
\begin{equation}\label{eq:def.nu.R}
\nu_R(s;\om) :=
\sup_{t\in\DR}    |F_\xi(t+s\,| \fF_{V, Q}) - F_\xi(t\,| \fF_{V, Q})|,
\; s\ge 0.
\end{equation}

We will assume that the random field $V$ fulfills the following condition:
\par\medskip
\Wthree:
\textit{ There exist $C',C'', A', A'', b', b''\in(0,+\infty)$ such that
\begin{equation}\label{eq:CMnu}
\forall \, s\in[0,1]\;\;
{\rm ess} \sup
\pr{ \nu_R(s;\om) \ge C' R^{A'} s^{b'}} \le C'' R^{A''} s^{b''}.
\end{equation}
} 

In the particular case of a Gaussian  IID field, e.g.,  with zero mean and unit variance, $\xi_Q$ is a Gaussian random variable with variance ${|Q|}^{-1}$ independent of the "fluctuations" $\eta_x$, so that its probability density $p_{V,|Q|}$ is bounded,
although $\| p_{V,|Q|}\|_\infty \sim |Q|^{1/2}\to\infty$ as $|Q|\to\infty$.
The property \Wthree\, has been recently proven for a larger class of IID potentials by Gaume \cite{G10} in the framework of his PhD project.
Note that the property \Wthree\, is not quite obvious, since the conditional distribution given by $F_\xi(t\,| \fF_{V, Q})$ can be very singular -- even discontinuous -- for \emph{some} conditions, e.g., in the case of an IID random field $V$ with uniform marginal distribution.
Nevertheless, the proof of \Wthree for an IID random field with uniform marginal probability distribution  is quite simple.

\subsection{Assumptions on the interaction potential}
\label{ssec:assumptions.U}

We assume that the interaction potential $U$ generating the interaction
energy $\BU$
satisfies one of the following decay conditions:
\begin{enumerate}[\Uzero]
  \item \emph{There exists $r_0<\infty$ such that
\be\label{eq:U.decay}
\forall\, r\ge r_0\quad  U(r) = 0.
\ee
}
\end{enumerate}

%

\begin{enumerate}[\Uone]
  \item \emph{There are some $C\in(0,+\infty)$, $\delta\in\big(0,\frac{1}{14}\big)$ and
  $\theta\in\big(0, \frac{\delta}{1+\delta}\big)$ such that
\be\label{eq:U.decay}
\forall\, r\ge 0 \quad | U(r)| \le C e^{-c r^{1-\theta}}.
\ee
}
\end{enumerate}

Naturally, \Uzero $\Rightarrow$ \Uone.
We will prove the multi-particle localization first under the strongest
assumption \Uzero (leading to a simpler proof), to illustrate the general structure
of the new MPMSA scheme, and then
extend the proof to interactions satisfying \Uone.

\subsection{Main results}

\begin{theorem}\label{thm:Main.DL}
Assume that the random field $V$ fulfills conditions \Wone--\Wthree, and the interaction $\BU$ fulfills one of the conditions \Uzero, \Uone. There exists $g_0\in(0,+\infty)$ such that if $|g|\ge g_0$, then with probability one, the fermionic operator $\BH(\om)=\BDelta + g\BV(\om)+\BU$ has pure point spectrum and all its eigenfunctions $\BPsi_j(\om)$ are exponentially decaying at infinity:  for each $\BPsi_j$, some $\hBx_j$  and all $\Bx$ with $\rho(\hBx,\Bx)$ large enough,
\be\label{eq:thm.Main.DL.1}
| \BPsi_j(\Bx,\om) | \le e^{-m \rho(\hBx,\Bx)}, \; m>0.
\ee
In addition, for all points $\Bx,\By$ with $\rho(\Bx,\By)$ large enough and some $a,c>0$,
for any bounded Borel function $f:\DR\to\DR$
\be\label{eq:thm.Main.DL.2}
\esm{  \big| \langle \one_{\By}\, | \,  f(\BH(\omega)) \, | \, \one_{\Bx}\rangle\big| }
\le e^{ -a\ln^{1+c} \rho(\Bx,\By) } \|f\|_\infty.
\ee
Consequently,  for any finite subset $\BK\subset\DZ^{Nd}$
\be\label{eq:thm.Main.DL.3}
\esm{ \Big\| e^{ a\ln^{1+c} \BX} f(\BH(\omega))  \one_{\BK} \Big\|} < C(\BK) \, \|f\|_\infty
<\infty
\ee
where operator
$\BX_{\BK}$ is the defined by
$(\BX_{\BK} f)(\Bx) = (\rho(\BK,\Bx)+1) f(\Bx)$.
In particular,
\be\label{eq:thm.Main.DL.4}
\esm{ \sup_{t\in\DR} \, \Big\|e^{ a\ln^{1+c} \BX} e^{ -it \BH(\omega)} \one_{\BK} \Big\|} < \infty.
\ee
\end{theorem}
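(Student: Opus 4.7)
The overall strategy is the Multi-Particle Multi-Scale Analysis (MPMSA) at a doubly-exponential sequence of length scales $L_k = L_0^{\alpha^k}$ with $\alpha \in (1,2)$, following the streamlined scheme of \cite{C11}. At each scale I would establish an inductive probabilistic bound of the form $\pr{ \bball_{L_k}(\Bx) \text{ is $(m,E)$-NS}} \le L_k^{-ap(1+b)^k}$ uniformly in $E$, where the key quantitative ingredients are: the Hölder-type Wegner bound driven by \Wone; the new eigenvalue concentration (EVC) estimate Theorem \ref{thm:PCT}, which replaces the Wegner bound for pairs of distant but \emph{correlated} multi-particle boxes and requires \Wthree; and the mixing hypothesis \Wtwo with the decay rate \eqref{eq:Cmix.prob}, used to make distant-but-not-disjoint multi-particle boxes approximately independent. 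The role of \Uzero (resp.\ \Uone) is to separate pairs of $N$-particle configurations into \emph{interactive} and \emph{non-interactive}, so that distant non-interactive pairs factor and the inductive step can be completed, with \Uone handled by an extra truncation of the tail of $U$ at each scale, as carried out in Section \ref{sec:MPMSA.infinite.range}.

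The inductive step itself splits along the now standard MPMSA dichotomy: either the box $\bball_{L_k}(\Bx)$ is fully interactive (configurations whose diameter is $\le L_k$), in which case an $N$-particle Wegner plus Combes--Thomas argument applied through Lemma \ref{lem:PI} and Lemma \ref{lem:DS.mixed} yields the bound using the decoupling produced by the new EVC Theorem \ref{thm:PCT}; or it is partially interactive, and then it factorizes (up to exponentially small interaction tails) as a Cartesian product of two lower-complexity sub-systems, to which the induction hypothesis at smaller $N$ (resp.\ smaller scale) applies. The output at every scale, Theorem \ref{thm:loc.ind}, is a polynomial bound on the probability of having two disjoint $(m,E)$-singular boxes at scale $L_k$, which after summation over energies in a compact interval translates into a variable-energy MSA bound.

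From the variable-energy MSA bound I would extract both conclusions of the theorem. Spectral localization \eqref{eq:thm.Main.DL.1} follows from a standard Borel--Cantelli argument: almost surely, for every generalized eigenfunction $\BPsi_j$ with localization centre $\hBx_j$, all annuli $\{\Bx : L_k \le \rho(\hBx_j,\Bx) < L_{k+1}\}$ are eventually $(m,E_j)$-NS, which gives the exponential decay of $|\BPsi_j(\Bx,\om)|$ with respect to the max-distance $\rho$. The EF-correlator bound \eqref{eq:thm.Main.DL.2} is derived in Section \ref{sec:SimpleDL.finite.L} from the MSA by the now classical route (as in \cite{GD98,DS01} and in \cite{C11}): one combines the probabilistic bound on singular boxes with a deterministic estimate relating $|\langle \one_\By | f(\BH) | \one_\Bx \rangle|$ to Green functions in an annulus around $\Bx$, using $\|f\|_\infty$ to control the spectral projector. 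Bounds \eqref{eq:thm.Main.DL.3}--\eqref{eq:thm.Main.DL.4} are then immediate consequences of \eqref{eq:thm.Main.DL.2} by summing $e^{a\ln^{1+c}\rho(\Bx,\By)} |\langle \one_\By|f(\BH)|\one_\Bx\rangle|$ over $\By\in\cZp$ (with the sub-exponential weight beating the polynomial volume growth) and over the finite $\BK$, and the operator-norm statement \eqref{eq:thm.Main.DL.4} is the special case $f(\cdot)=e^{-it\cdot}$ with $\|f\|_\infty=1$ uniformly in $t$.

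The principal obstacle is the inductive step for pairs of multi-particle boxes whose configurations are close in the symmetrized norm $\rho$ but whose underlying single-particle clusters overlap: the random potentials on such boxes are strongly correlated (even identically distributed after a diagonal shift), so the standard Wegner argument is powerless. This is precisely where the new EVC bound Theorem \ref{thm:PCT}, based on analysis of the conditional distribution of the sample mean $\xi_Q$ under \Wthree, is indispensable, and where the simplification mentioned in the introduction--working with eigenfunctions (Lemma \ref{lem:instead.PITRONS}) rather than Green functions (as in \cite{CS09b})--becomes crucial for handling the non-IID, merely mixing, potentials allowed by \Wtwo.
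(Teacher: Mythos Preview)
Your outline captures the overall MPMSA architecture correctly, but it misidentifies the central inductive object and therefore misses the specific ``direct'' feature of the paper's scheme. The paper does \emph{not} propagate a bound on $\pr{\bball_{L_k}(\Bx)\text{ is }(E,m)\text{-S}}$ uniformly in $E$ and then sum over energies; instead it propagates the energy-free quantity $\pr{\bball_{L_k}(\Bx)\text{ is }m\text{-non-Loc}}$ (Definition~\ref{def:loc}, Theorem~\ref{thm:loc.ind}), where $m$-Loc is a statement about decay of \emph{all} eigenfunction correlators $|\Bpsi_j(\Bx)\Bpsi_j(\By)|$ simultaneously. The passage to Green functions at a given $E$ is a one-line consequence (Lemma~\ref{lem:loc.and.NR.imply.NS.IID}) used only when needed, and no energy summation ever occurs. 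This is exactly why the PI step (Lemma~\ref{lem:instead.PITRONS}) becomes trivial: tensor eigenfunctions inherit the eigenfunction-correlator decay directly, with no resolvent algebra.

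A few attributions in your sketch are also off: Lemma~\ref{lem:PI} handles \emph{PI} balls, not FI; the FI inductive step is Lemma~\ref{lem:DS.mixed}, and it uses neither Wegner nor Combes--Thomas (those appear only at the initial scale, Lemma~\ref{lem:L.0.Np}) but rather the two-ball EVC bound of Theorem~\ref{thm:PCT} together with the mixing hypothesis \Wtwo. Finally, the derivation of \eqref{eq:thm.Main.DL.2} in Section~\ref{sec:SimpleDL.finite.L} (Theorem~\ref{thm:GK}) again bypasses Green functions entirely: it bounds $\sum_j|\BPsi_j(\Bx)\BPsi_j(\By)|$ directly via the GRI for eigenfunctions and Bessel's inequality, then passes to infinite volume by vague convergence and Fatou. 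Your proposed route (fixed-energy NS bounds plus energy integration, Green-function annulus estimates) is closer to the older \cite{CS09b,GD98,DS01} machinery and would also succeed, but it is precisely the detour the present paper is designed to avoid.
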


\section{Deterministic bounds}
\label{sec:deterministic.bounds}


\subsection{Geometrical resolvent inequality}

The most essential part of the scaling analysis concerns finite-volume approximations
$\BH_\BLam(\om)$ of the random Hamiltonian $\BH(\om)$, acting in finite-dimensional spaces $\bcH_\BLam:=\ell^2(\BLam)$,
$\BLam\subset\cZN$, $|\Lam|\equiv \card\Lam < \infty$.

Operator $\BDelta$ can be represented as follows:
$$
-\BDelta = n_{\cZN}  - \sum_{ \nb{\Bx,\By} } (\Gamma_{\Bx \By} + \Gamma_{\By \Bx}),
\quad
(\Gamma_{\Bx \By} f)(\Bx) := \delta_{\Bx \By} f(\By),
$$
and $\delta_{\Bx \By}$ is the Kronecker symbol. Similar formulae are valid for the restriction $\Delta_\Lam$ of $\Delta$ to a finite subset $\BLam\subset\cZN$ with Dirichlet boundary conditions outside $\Lam$; in this case, one has to keep only the pairs $\nb{\Bx,\By}\in \BLam\times\BLam$.

We denote by $\BG_{\BLam}(E) = (\BH_{\BLam}  - E)^{-1}$ the resolvent of $\BH_{\BLam}$ and by $\BG(\Bx,\By;E)$ the matrix elements thereof (a.k.a. Green functions) in the standard delta-basis.

The so-called Geometric Resolvent Inequality for the Green functions can be easily deduced from the second resolvent identity:
\be\label{eq:GRI.equal}
\left| \BG_{\bball_L}(\Bx,\By;E) \right|
\le C_\ell \,  \max_{\Bv \in\pt^- \bball_\ell(\Bx)} |\BG_{\bball_\ell(\Bx)}(\Bx,\Bv;E) |
\; \max_{\Bv'\in\pt^{+} \bball_\ell(\Bx)} \, \left|\BG_{\bball_L}(\Bv', \By; E) \right|
\ee
where
\be\label{eq:GRI.C.d}
C_\ell = |\pt  \bball_\ell(\Bv)| \le C(N,d) \ell^{Nd}.
\ee
Clearly, \eqref{eq:GRI.equal} implies the  inequality
\be\label{eq:GRI}
\left| \BG_{\ball_L}(\Bx,\By;E) \right|
\le \Big(\, C_\ell \,  \max_{\rho(\Bx,\Bv)=\ell} |\BG_{\bball_\ell(\Bx)}(\Bx,\Bv;E) | \,\Big)
\; \max_{\rho(\Bu,\Bv')\le \ell+1} \, \left|\BG_{\bball_L}(\Bv', \By; E) \right|
\ee
which is weaker than \eqref{eq:GRI.equal}, but sufficient for the purposes of the scaling analysis. Sometimes we shall use the following inequality which stems from
\eqref{eq:GRI.equal}:
\be\label{eq:GRI.equal.S.NR}
\left| \BG_{\bball_L}(\Bx,\By;E) \right|
\le C_\ell \,  \| \BG_{\bball_\ell(\Bu)}(E) \|
\; \max_{\Bv: \rho(\Bu,\Bv) \le \ell+1} \, \left|\BG_{\bball_L}(\Bv, \By; E) \right|.
\ee
Similarly, for the solutions $\psi$ of the eigenfunction equation $\BH\Bpsi = E\Bpsi$ we have
\be\label{eq:GRI.EF}
| \Bpsi(\Bx) |
\le \, C_\ell \, \,  \| \BG_{\bball_\ell(\Bx)}(E) \| \,
\; \max_{\By: \rho(\Bx,\By)\le \ell+1} \, | \Bpsi(\By) |,
\ee
provided that $E$ is not an eigenvalue of the operator $\BH_{\bball_\ell(\Bu)}$ and
$\Bx\in\bball_\ell(\Bu)$.

\subsection{Radial descent bounds for Green functions and eigenfunctions }\label{ssec:descent}

The analytic statements of this subsection apply indifferently to any LSO, regardless of their single- or multi-particle structure of the potential energy operator. To avoid any confusion, we denote the dimension of the lattice by $D$; in applications to $N$-particle models in $\DZ^d$, one has to set $D=Nd$.


\begin{definition}\label{def:SubH.IID}
fix an integer $\ell\geq 0$ and a  number $q\in(0,1)$. Consider a finite connected subgraph $\Lam\subset\DZ^D$  and a ball $\ball_R(u)\subsetneq\Lam$. A function
$f:\, \ball_R(u)\to\DR_+$ is called $(\ell,q)$-subharmonic in $\ball_R(u)$ if for any
ball $\ball_\ell(x)\subset\Lam_R(u)$ one has
\be\label{eq:def.l.q.subh.IID}
f(x) \leq q \;\;\mymax{y:\, \rd(x,y) \le\ell} f(y).
\ee
\end{definition}

\begin{lemma}\label{lem:SubH.IID}
Le be given integers $L\ge \ell\ge 0$ and a number $q\in(0,1)$.
If $f: \Lam\to\DR_+$ on a finite connected graph $\Lam$ is $(\ell,q)$-subharmonic
in a ball $\ball_L(x)\subsetneq\Lam$, then
\be\label{eq:lem.SubH.IID}
 f(x) \leq q^{ \left\lfloor \frac{L+1}{\ell+1} \right\rfloor} \cM(f, \ball)
 \leq q^{ \frac{L - \ell}{\ell+1} } \cM(f, \ball),
 \quad\cM(f,  \Lam) := \max_{x\in  \Lam} |f(x)|.
\ee
\end{lemma}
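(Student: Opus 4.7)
The plan is a straightforward iteration of the one-step subharmonicity bound, with care taken that at every step the ball of radius $\ell$ where the inequality is invoked remains inside the big ball $\ball_L(x)$, which is where subharmonicity is assumed to hold.

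Set $k := \lfloor (L+1)/(\ell+1) \rfloor$, so that $k(\ell+1) - 1 \le L$, equivalently $k\ell \le L - (k-1) \le L$. I will construct iteratively a sequence of points $x_0 = x, x_1, \ldots, x_k \in \ball_L(x)$ and show by induction on $j$ that
$$
f(x) \le q^{j} f(x_j), \qquad \rd(x, x_j) \le j\ell.
$$
The base case $j=0$ is trivial. For the inductive step, suppose I have produced $x_j$ with $\rd(x, x_j) \le j\ell$. If $j < k$, then $(j+1)\ell \le k\ell \le L$, so in particular $\rd(x, x_j) + \ell \le j\ell + \ell \le L$; by the triangle inequality this gives $\ball_\ell(x_j) \subset \ball_L(x)$. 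Hence the $(\ell,q)$-subharmonicity hypothesis applies at $x_j$ and delivers a point $x_{j+1}$ with $\rd(x_j, x_{j+1}) \le \ell$ and $f(x_j) \le q f(x_{j+1})$. Combining with the inductive hypothesis yields $f(x) \le q^{j+1} f(x_{j+1})$ and $\rd(x, x_{j+1}) \le (j+1)\ell$, as required.

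Applying this at $j = k$ gives $f(x) \le q^{k} f(x_k)$ with $x_k \in \ball_{k\ell}(x) \subset \ball_L(x)$, and therefore $f(x_k) \le \cM(f, \ball_L(x))$, proving the first inequality in \eqref{eq:lem.SubH.IID}. The second inequality is pure arithmetic: since $\lfloor (L+1)/(\ell+1) \rfloor \ge (L+1)/(\ell+1) - 1 = (L-\ell)/(\ell+1)$ and $q \in (0,1)$, monotonicity of $t\mapsto q^{t}$ gives $q^{\lfloor (L+1)/(\ell+1) \rfloor} \le q^{(L-\ell)/(\ell+1)}$.

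There is no real obstacle here; the only subtlety is the bookkeeping that ensures the $j$-th subharmonicity inequality can be invoked, i.e., that $\ball_\ell(x_j) \subset \ball_L(x)$ holds for all $j < k$. This is where the choice of $k = \lfloor (L+1)/(\ell+1) \rfloor$ is tight: one extra iteration would require $(k+1)\ell \le L$, which is not guaranteed. Hence the stated bound is the best one obtains from this naive radial descent. The argument is purely combinatorial/metric and is completely independent of the single- or multi-particle structure of the potential, consistent with the remark preceding the definition that these deterministic bounds apply to any LSO on $\DZ^D$.
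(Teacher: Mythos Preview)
Your proof is correct and follows the same radial-descent idea as the paper's one-line argument. The paper phrases the iteration as a chain of inequalities on maxima over growing concentric balls, $f(x)\le q\,\cM(f,\ball_{\ell+1})\le\cdots\le q^j\cM(f,\ball_{j(\ell+1)})\le\cdots$, whereas you track an explicit sequence of maximizers $x_0,\dots,x_k$; the bookkeeping verifying $\ball_\ell(x_j)\subset\ball_L(x)$ at each step is exactly what the paper leaves implicit.
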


\proof The claim follows from \eqref{eq:def.l.q.subh.IID} by induction
(``radial descent''):
$$
f(x) \le q \cM(f, \ball_{\ell+1}) \le
\cdots \le  q^j \cM(f, \ball_{j(\ell+1)})
\le \cdots \le q^{ \left\lfloor \frac{L+1}{\ell+1} \right\rfloor} \cM(f,\ball).
\qed
$$

\begin{lemma}\label{lem:BiSubH.IID}
Let $\Lam$ be a finite connected graph and $f:\Lam\times\Lam\to \DR_+$,
$(x',x'')\mapsto f(x',x'')$, a function
which is separately $(\ell,q)$-subharmonic in $x'\in\ball_{r'}(u')\subsetneq\Lam$
and in $x''\in\ball_{r''}(u'')\subsetneq\Lam$, with $\rd(u',u'')\ge r'+r''+2$.
Then
\be\label{eq:RDL.IID}
 f(u', u'') \leq q^{\left\lfloor \frac{r'+1}{\ell+1} \right\rfloor
    + \left\lfloor \frac{r''+1}{\ell+1} \right\rfloor} \cM(f, \ball)
    \leq q^{ \frac{r'+r'' - 2\ell}{\ell+1} } \cM(f, \ball).
\ee
\end{lemma}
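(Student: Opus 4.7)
The strategy is to apply the single-variable radial-descent Lemma \ref{lem:SubH.IID} twice in succession, once in each argument, exploiting the \emph{separate} subharmonicity hypothesis. The separation condition $\rd(u', u'') \ge r' + r'' + 2$ guarantees that the balls $\ball_{r'}(u')$ and $\ball_{r''}(u'')$ are disjoint, so the two descents can be carried out independently without interference.

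First I would fix $x'' = u''$ and consider the single-variable function $g(x') := f(x', u'')$ on $\ball_{r'}(u')$. By hypothesis $g$ is $(\ell, q)$-subharmonic on $\ball_{r'}(u')$, so Lemma \ref{lem:SubH.IID} gives
\[
f(u', u'') \; \le \; q^{ \lfloor (r'+1)/(\ell+1) \rfloor } \; \max_{y' \in \ball_{r'}(u')} f(y', u'').
\]
Next, for \emph{every} $y' \in \ball_{r'}(u')$, the partial function $h_{y'}(x'') := f(y', x'')$ is, by hypothesis, $(\ell, q)$-subharmonic in $x''$ on $\ball_{r''}(u'')$. A second application of Lemma \ref{lem:SubH.IID} therefore yields
\[
f(y', u'') \; \le \; q^{\lfloor (r''+1)/(\ell+1) \rfloor} \; \max_{y'' \in \ball_{r''}(u'')} f(y', y'').
\]
Composing the two bounds and estimating the resulting double maximum over $\ball_{r'}(u') \times \ball_{r''}(u'')$ by $\cM(f, \ball)$ delivers the first asserted inequality. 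The second inequality follows from the elementary estimate $\lfloor (r+1)/(\ell+1) \rfloor \ge (r-\ell)/(\ell+1)$ applied to each exponent, combined with $q \in (0,1)$.

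There is no substantive obstacle here: the argument is a clean tensorization of the one-variable radial descent. The only point worth checking is that the intermediate maximum after the first descent (over $y' \in \ball_{r'}(u')$) preserves the second-variable subharmonicity — this is immediate because the hypothesis asserts subharmonicity in $x''$ separately for \emph{every} value of $x'$, and the disjointness of the two balls ensures the descent in $x''$ never leaves $\ball_{r''}(u'')$ nor collides with the other ball. I would finish by simply remarking that the product structure of the bound is a consequence of the statistical independence of the two directions of descent.
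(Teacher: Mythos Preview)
Your proof is correct and follows essentially the same approach as the paper: apply the single-variable radial descent (Lemma~\ref{lem:SubH.IID}) once in each argument and combine. The paper reverses the order slightly---it first bounds $f(u',y'')$ uniformly over all $y''$ by descending in the first variable, and then descends in the second variable with $x'=u'$ fixed---but this is a cosmetic difference with no bearing on the argument.
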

\proof
Fix any point $y''\in\ball_{r''+1}(u'')$ and consider the function
$$
f_{y''}: y' \mapsto f(y', y'')
$$
which is $(\ell,q)$-subharmonic in $\ball_{r'}(u')\subsetneq \Lam$, so by Lemma
\ref{lem:SubH.IID},
\be\label{eq:lem.BiSH.1}
f_{y''}(u') =
f(u',y'') \le q^{ \left\lfloor \frac{r'+1}{\ell+1} \right\rfloor} \cM(f, \Lam\times\Lam).
\ee
Now introduce the function $\tf_{u'}:y''\mapsto f(u',y'')$ which is
 $(\ell,q)$-subharmonic in $\ball_{r''}(u'')\subsetneq \Lam$ and  bounded by  the RHS of \eqref{eq:lem.BiSH.1}. Applying again Lemma \ref{lem:SubH.IID}, the claim follows.
\qedhere

The relevance of the above notions and results is illustrated by the following

\begin{lemma}[Cf. Lemma 3.3 in \cite{C11}]\label{lem:cond.SubH.IID}
Consider a ball $\ball_L(u)$ and an operator $H=H_{\ball_L(u)}$ with fixed (non-random) potential $V$. Let $\{\psi_j, \, j=1, \ldots,  |\ball_L(u)|\}$ be the normalized eigenfunctions of $H$. Pick a pair of points $x',y' \in \ball_L(u)$ with $\rd(x',y') > 2(\ell + 1)$ and an integer
$R \in[\ell+2, \, \rd(x',y') - (\ell+2)]$. Suppose that any ball $\ball_{\ell}(v)$ with $v\in\ball_R(x')$ is \EmNS, and set
\be\label{eq:lem:ConSubH.IID}
q =  q(D,\ell; E) =C(D)  e^{-\gamma(m,\ell)\ell}
\ee
with the constant $C(D)$ defined in the same way as $C(d)$ in \eqref{eq:GRI.C.d}. Then:
\begin{enumerate}[{\rm(A)}]
  \item the kernel $\Pi_{\psi_j}(x,y)$ of the spectral projection
  $\Pi_{\psi_j} = \ketbra{\psi_j}{\psi_j}$ is $(\ell+1,q)$-subharmonic in   $x \in \ball_R(x')$, with global maximum $\le 1$;
  \item if $\ball_L(u)$ is also \NR, then the Green functions $G(x,y;E)=(H-E)^{-1}(x,y)$ are $(\ell+1,q)$-subharmonic in $x \in \ball_R(x')$, with global maximum $\le e^{L^\beta}$.
\end{enumerate}
\end{lemma}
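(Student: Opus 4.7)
The plan is to exploit a single mechanism that yields both conclusions. The Geometric Resolvent Inequality represents the value of an eigenfunction or of a Green function at the center of an auxiliary ball $\ball_\ell(x)$ as a weighted boundary sum whose weights are the small-ball kernels $G_{\ball_\ell(x)}(x,\cdot;E)$; when $\ball_\ell(x)$ is \EmNS these weights are small, and after summing over the boundary of cardinality at most $C(D)\ell^D$ one obtains exactly the contraction factor $q = C(D)e^{-\gamma(m,\ell)\ell}$ required by Definition \ref{def:SubH.IID}. The conditions $R\ge \ell+2$ and $R\le \rd(x',y')-(\ell+2)$ are there to guarantee that for every $x\in \ball_R(x')$ the auxiliary ball $\ball_\ell(x)$ still sits inside $\ball_L(u)$ \emph{and}, in part~(B), stays away from the second argument $y$, so that the GRI may be applied at every such $x$.

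For part~(A), I fix $y$ and set $f_y(x) := |\Pi_{\psi_j}(x,y)| = |\psi_j(x)|\,|\psi_j(y)|$. Since $f_y$ differs from $|\psi_j|$ by the non-negative constant $|\psi_j(y)|$, it inherits any subharmonicity property of $|\psi_j|$. For every $x\in \ball_R(x')$ the \EmNS bound on $\ball_\ell(x)$ makes the small-ball factor in \eqref{eq:GRI.EF} bounded by $e^{-\gamma(m,\ell)\ell}$, which yields $|\psi_j(x)| \le q \max_{\rho(x,v)\le \ell+1}|\psi_j(v)|$, the desired $(\ell+1,q)$-subharmonicity. The global bound $|\Pi_{\psi_j}(x,y)|\le 1$ is immediate from the pointwise estimate $|\psi_j(x)|\le \|\psi_j\|_2 = 1$.

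For part~(B), I fix $y$ at distance at least $R+(\ell+2)$ from $x'$ and set $g_y(x) := |G_{\ball_L(u)}(x,y;E)|$, regarded as a function of $x\in \ball_R(x')$. For every such $x$ the ball $\ball_\ell(x)$ is contained in $\ball_L(u)$ and disjoint from $\{y\}$, so \eqref{eq:GRI.equal} (or the equivalent form \eqref{eq:GRI.equal.S.NR}) applies, and combined with the \EmNS bound gives
$$
g_y(x) \le C(D)e^{-\gamma(m,\ell)\ell} \max_{\rho(x,v)\le \ell+1} g_y(v) = q \max_{\rho(x,v)\le \ell+1} g_y(v),
$$
which is the required $(\ell+1,q)$-subharmonicity. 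The additional \NR hypothesis on $\ball_L(u)$ supplies $\|G_{\ball_L(u)}(E)\|\le e^{L^\beta}$, so every matrix element of $G_{\ball_L(u)}(E)$ is bounded by $e^{L^\beta}$, yielding the asserted global maximum.

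The only genuinely substantive point is the geometric bookkeeping used to ensure $\ball_\ell(x)\subsetneq \ball_L(u)\setminus\{y\}$ at each $x\in \ball_R(x')$: this reduces to $\rho(x,y)\ge \rho(x',y)-R \ge \ell+2$ (so $y\notin \ball_\ell(x)$) and $\ball_\ell(x)\subset \ball_{R+\ell}(x')\subset \ball_L(u)$, both encoded in the hypotheses on $R$. Once this is in place, the proof is a mechanical combination of the GRI with the quantitative interpretation of \EmNS; there is no genuine obstacle beyond carefully invoking the correct form of the GRI for each of the two cases.
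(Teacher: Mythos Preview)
Your argument is correct and is exactly the standard one; the paper does not prove this lemma in the text or appendix but simply refers to \cite{C11}, where the same GRI-plus-\EmNS mechanism is used. One small imprecision worth fixing: in part~(A) you invoke \eqref{eq:GRI.EF}, but the small-ball factor there is $C_\ell\,\|\BG_{\ball_\ell(x)}(E)\|$, and the \EmNS property bounds only the Green function from the center to the boundary, not the operator norm of the resolvent. What you actually need is the eigenfunction analogue of \eqref{eq:GRI.equal},
\[
|\psi_j(x)| \le C_\ell \max_{v\in\partial^-\ball_\ell(x)}|G_{\ball_\ell(x)}(x,v;E_j)|\,\max_{v'\in\partial^+\ball_\ell(x)}|\psi_j(v')|,
\]
to which the \EmNS bound applies directly. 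Note also that this requires the small balls to be $(E_j,m)$-NS rather than $(E,m)$-NS for the fixed $E$ of the hypothesis; the paper's statement is slightly loose on this point, but in every application of part~(A) (e.g.\ Lemma~\ref{lem:NT.implies.nloc.IID}) the non-singularity is assumed for all $E\in\DR$, so the issue is cosmetic. With these adjustments your proof goes through verbatim.
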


\begin{remark}
The general idea of "two-sided" bounds on the Green functions (as functions of \emph{two} arguments) can be found already in the Spencer's paper \cite{Spe88}. This allows to avoid a more
complex geometrical construction going back to \cite{DK89}, required when more
than one "unwanted", "singular" locus is to be \emph{allowed} in the course of the scaling step.
\end{remark}

\begin{remark}
An extension of Lemma \ref{lem:cond.SubH.IID} to finite-difference kinetic energy operators
$\BH_0$ of higher order is quite straightforward. Specifically, if $\BH_0$ has range
$\ell_0\ge 1$, i.e., $\langle \one_{\Bx}| \BH_0 \one_{\By}\rangle = 0$ for $\rho(\Bx,\By)>\ell_0$,
the denominators $\ell+1$ in Eqns \eqref{eq:lem.SubH.IID}, \eqref{eq:RDL.IID}, \eqref{eq:lem.BiSH.1} are to be replaced by $\ell+\ell_0$. Recall that we assume (implicitly)
that the matric elements of $\BH_0$ are bounded by $1$ (as they are for the graph Laplacian),
but the reduction to this case can be made by rescaling the operator $\BH_0$, i.e.,
by replacing $\BH_0 \rightsquigarrow h\BH_0$, with $h>0$ small enough (provided that $\|\BH_0\|<\infty$). Apart from a modification of some auxiliary constants, this has no impact on
the qualitative final result on multi-particle localization under the assumption of strong disorder.
\end{remark}

\subsection{Localization and tunneling in finite balls}
\label{ssec:tun.and.loc.finite.balls}

\begin{definition}\label{def:CNR.IID}
Given a sample $V(\cdot;\om)$, a ball $\bball_L(\Bu)$ is called
\begin{enumerate}[\;\;$\bullet$]
  \item $E$-non-resonant (\NR\,) if $\|\BG_\bball(E;\om)\|\leq e^{+L^{\beta} }$,
  and $E$-resonant
  otherwise;
  \item completely $E$-non-resonant ($E$-CNR) if it does not contain any \ER\, ball $\bball_\ell(\Bu)\subseteq \bball_L(\Bu)$ with $\ell\ge L^{1/\alpha}$, and $E$-partially resonant (\PR), otherwise.
\end{enumerate}
\end{definition}

\begin{definition}\label{def:S}
Given a sample $V(\cdot;\om)$, a ball $\bball_L(\Bu)$ is called $(E,m)$-non-singular (\EmNS) if
\be\label{eq:def.NS}
 \mymax{y\in\pt  \bball_L(\Bu)} \; |G(\Bu,\By;E;\om)| \leq e^{-\gamma(m,L) L + 2L^\beta},
\ee
where
\be\label{eq:def.gamma}
 \gamma(m,L)  :=  m\left(1 + L^{-\tau}\right),
 \qquad \tau = 1/8.
\ee
Otherwise, it is called $(E,m)$-singular (\EmS).
\end{definition}

In the next definition, we use a parameter $\varrho = (\alpha - 1)/2$; with $\alpha = 4/3$,
one obtains $\varrho=1/6$ and $(1+\varrho)/\alpha = 7/8$.

\begin{definition}\label{def:loc}
Given a sample $V(\cdot;\om)$,  a ball $ \bball_L(\Bu)$ is called $m$-localized (\loc, in short) if any eigenfunction $\Bpsi_j$ of the operator $\BH_{ \bball_L(\Bu)}$ operator satisfies
\be\label{eq:def.mloc}
|\Bpsi_j(\Bx) \, \Bpsi_j(\By)| \le e^{-\gamma(m,L) \rho(\Bx,\By)}
\ee
for any pair of points $\Bx,\By\in  \bball_L(\Bu)$ with
$\rho(\Bx,\By) \ge L^{\frac{1+\varrho}{\alpha}} \equiv L^{7/8}$.
\end{definition}

\begin{definition}\label{def:distant}
We will call a pair of balls $\bball_L(\Bx)$, $\bball_L(\By)$ \emph{distant} iff
$\rho(\Bx,\By)\ge C_N L$, with $C_N = 2A_N+3$, $A_N=4N$.
\end{definition}

(The role of the constants $A_N$ and $C_N$ will become clear
in subsection \ref{ssec:PI.balls}.)

\begin{definition}\label{def:tun}
A ball $\bball_{\ell^{\alpha}}(\Bu)$ is called $m$-tunneling (\mT) if it contains a pair of
distant \nloc\, balls  $\bball_{\ell}(\Bx)$, $\bball_{\ell}(\By)$, and
$m$-non-tunneling (\NT), otherwise.
\end{definition}

It is to be emphasized that, unlike the property of $E$-resonance or $(E,m)$-singularity, the tunneling property \textbf{is not related to a specific value of energy $E$}, and  tunneling even in \textit{a single} ball occurs with a small probability.

\begin{lemma}\label{lem:loc.and.NR.imply.NS.IID}
Let $E\in\DR$ and an \loc\, ball $ \bball_L(\Bu)$ be given.
If $ \bball_L(\Bu)$ is also  \NR, then it is \EmNS.
\end{lemma}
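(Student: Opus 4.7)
The plan is to prove the bound on Green functions directly from the spectral representation of the resolvent, exploiting the hypotheses \loc\ and \NR\ respectively on the numerator and the denominator of each spectral term.

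Let $\{\Bpsi_j\}$ be an orthonormal eigenbasis of $\BH_{\bball_L(\Bu)}$ with eigenvalues $\{E_j\}$. Since $E\not\in\spec(\BH_{\bball_L(\Bu)})$ (guaranteed by \NR), the spectral decomposition gives, for any $\By\in\pt\bball_L(\Bu)$,
\[
\BG_{\bball_L(\Bu)}(\Bu,\By;E)
\;=\; \sum_j \frac{\Bpsi_j(\Bu)\,\overline{\Bpsi_j(\By)}}{E_j - E}.
\]
I would then estimate each factor separately. Since $\By\in\pt\bball_L(\Bu)$ implies $\rho(\Bu,\By)\ge L$, and $L\ge L^{(1+\varrho)/\alpha} = L^{7/8}$ for all $L$ sufficiently large, the \loc\ hypothesis applies to the pair $(\Bu,\By)$ and yields
\[
|\Bpsi_j(\Bu)\,\Bpsi_j(\By)|\;\le\;e^{-\gamma(m,L)\,\rho(\Bu,\By)}\;\le\; e^{-\gamma(m,L)\,L}
\]
uniformly in $j$. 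For the denominator, the \NR\ hypothesis $\|\BG_{\bball_L(\Bu)}(E)\|\le e^{L^\beta}$ translates into the spectral gap bound $\min_j |E_j - E|\ge e^{-L^\beta}$, so that $|E_j-E|^{-1}\le e^{L^\beta}$ for every $j$.

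Combining the two estimates and summing trivially over the (at most $|\bball_L(\Bu)|\le(2L+1)^{Nd}$) eigenvalues,
\[
\bigl|\BG_{\bball_L(\Bu)}(\Bu,\By;E)\bigr|
\;\le\; |\bball_L(\Bu)|\,e^{-\gamma(m,L)\,L}\,e^{L^\beta}
\;\le\; e^{-\gamma(m,L)\,L + L^\beta + C\ln L}.
\]
For $L$ large enough the polynomial prefactor is absorbed into the spare $L^\beta$-term, giving
\[
\max_{\By\in\pt\bball_L(\Bu)}\bigl|\BG_{\bball_L(\Bu)}(\Bu,\By;E)\bigr|\;\le\;e^{-\gamma(m,L)\,L + 2L^\beta},
\]
which is exactly the definition of \EmNS.

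I do not foresee a serious obstacle here: the only delicate check is that the distance condition $\rho(\Bu,\By)\ge L^{7/8}$ required by \loc\ is automatic for boundary points, and that the polynomial cardinality factor $|\bball_L(\Bu)|$ is dominated by the $L^\beta$ buffer present in the definitions. Both become clean once $L\ge L_0$ for a fixed threshold, and the argument does not use any probabilistic information — it is purely an algebraic consequence of the spectral resolution together with the two deterministic hypotheses.
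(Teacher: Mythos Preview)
Your proposal is correct and follows essentially the same approach as the paper: both use the spectral decomposition of the resolvent, bound the numerators via the \loc\ hypothesis (using that boundary points satisfy $\rho(\Bu,\By)\ge L\ge L^{7/8}$), bound the denominators via the \NR\ gap, and absorb the cardinality factor $|\bball_L(\Bu)|$ into the spare $L^\beta$.
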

\proof
The matrix elements of the resolvent $\BG_\bball(E)$ can be assessed as follows:
\be\label{eq:G.basis}
|\BG_\bball(\Bx,\By;E)| \le \sum_{E_j\in\sigma(H_{\bball_L(\Bu)})}
 \frac{ |\Bpsi_j(\Bx)| \, |\Bpsi_j(\By)| }{|E - E_j| }\,.
\ee
If $\dist(E, \sigma(H_\bball)) \ge e^{-L^\beta}$ and
$\ln (2L+1)^d \le L^\beta$, then the \loc \, property implies
$$
\bal
\displaystyle |\BG_\bball(\Bx,\By;E)|  &
& \le e^{-\gamma(m,L)L + L^\beta + \ln |\bball|}
 \le e^{-\gamma(m,L)L + 2L^\beta}. \qed
 \eal
$$
Observe that with $m\ge 1$ and $1-\tau > \beta$, for $L_0$ large enough
\be\label{eq:NS.exp}
m(1 + L_k^{-\tau})L_k - 2L_k^\beta = mL_k + mL_k^{1-\tau} - 2L_k^\beta
\ge m(1 + {\textstyle\half}  L_k^{-\tau}) L_k.
\ee
We will see that the condition $m\ge 1$ is fulfilled for $|g|$ large enough; cf. Lemma \ref{lem:L.0.Np}.

From this point on, we will work with a sequence of "scales" - positive integers $\{L_k, k\ge 0\}$ defined recursively by $L_{k+1} = \lceil L_k^{\alpha} \rceil$, $L_0>2$.
For clarity, we keep the value $\alpha=4/3$; observe that $\alpha^2<2$. In several arguments we will require the initial scale $L_0$ to be large enough.


\begin{lemma}\label{lem:CNR.and.no.2.S.imply.NS}
There is $\tilde L^{(1)}\in\DN$ such that for  $L_0\ge \tilde L^{(1)}$

\begin{enumerate}[\rm (A)]
  \item if a ball $ \bball_{L_{k+1}}(\Bu)$ is \CNR\, and contains no pair of
  distant \EmS\, balls $\bball_{L_{k}}(\Bv)$, $\bball_{L_{k}}(\Bw)$,
  then it is also \EmNS;
   \item if a ball $ \bball_{L_{k+1}}(\Bu)$ is \NT and \CNR, then it is also \EmNS.
\end{enumerate}

\end{lemma}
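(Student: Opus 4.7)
Assume $\bball_{L_{k+1}}(\Bu)$ is $m$-NT and $E$-CNR. The CNR hypothesis forces every $L_k$-sub-ball $\bball_{L_k}(\Bv)\subset\bball_{L_{k+1}}(\Bu)$ to be $E$-NR, so by Lemma~\ref{lem:loc.and.NR.imply.NS.IID} every $\loc$ $L_k$-sub-ball is $(E,m)$-NS. Contrapositively, every $(E,m)$-S $L_k$-sub-ball is non-$\loc$. Since $m$-NT excludes pairs of distant non-$\loc$ $L_k$-sub-balls, a fortiori no pair of distant $(E,m)$-S $L_k$-sub-balls exists---precisely the hypothesis of (A). Thus (B) follows from (A).

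\textbf{Proof of (A).} Fix $\By$ on the boundary of $\bball_{L_{k+1}}(\Bu)$ and set $f(\Bx_1, \Bx_2) := |\BG_{\bball_{L_{k+1}}(\Bu)}(\Bx_1, \Bx_2; E)|$. The hypothesis of (A) implies that the centers of $(E,m)$-S $L_k$-sub-balls in $\bball_{L_{k+1}}(\Bu)$, if any, are confined to a single ``bad cluster'' $\bball_{C_N L_k}(\Bw_0)$ (pick $\Bw_0$ to be any such center). Put
$$
r' := \bigl(\rho(\Bu, \Bw_0) - C_N L_k - 1\bigr)_+, \qquad r'' := \bigl(\rho(\By, \Bw_0) - C_N L_k - 1\bigr)_+.
$$
On $\bball_{r'}(\Bu)$ and $\bball_{r''}(\By)$, every $L_k$-sub-ball is by construction $(E,m)$-NS (it avoids the bad cluster) and $E$-NR (by CNR). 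Hence by Lemma~\ref{lem:cond.SubH.IID}(B), $f$ is $(L_k+1, q)$-subharmonic in each variable separately, with $q = C(Nd)\, e^{-\gamma(m, L_k) L_k}$. By the triangle inequality $r' + r'' \ge L_{k+1} - 2 C_N L_k - 2$, and applying Lemma~\ref{lem:BiSubH.IID} together with the NR bound $\|\BG_{\bball_{L_{k+1}}(\Bu)}(E)\| \le e^{L_{k+1}^\beta}$ yields
$$
|\BG(\Bu, \By; E)| \le q^{\lfloor (r'+1)/(L_k+1)\rfloor + \lfloor (r''+1)/(L_k+1)\rfloor} \cdot e^{L_{k+1}^\beta}.
$$
If no $(E,m)$-S $L_k$-sub-ball exists in $\bball_{L_{k+1}}(\Bu)$, one-sided radial descent from $\Bu$ via Lemma~\ref{lem:SubH.IID} with $r' = L_{k+1} - L_k - 2$ already suffices.

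\textbf{Quantitative verification and main obstacle.} Comparison with the target NS inequality $|\BG(\Bu, \By; E)| \le e^{-\gamma(m, L_{k+1}) L_{k+1} + 2 L_{k+1}^\beta}$ reduces, up to lower-order terms, to
$$
m(1 + L_k^{-\tau})(L_{k+1} - 2 C_N L_k) \ge m(1 + L_{k+1}^{-\tau}) L_{k+1} + O(L_{k+1}^\beta).
$$
The surplus $m L_{k+1} L_k^{-\tau} = m L_{k+1}^{1 - \tau/\alpha}$ must dominate both the target correction $m L_{k+1}^{1-\tau}$ and the geometric deficit $2 m C_N L_k \asymp m L_{k+1}^{1/\alpha}$. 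With the chosen parameters $\tau = 1/8$, $\alpha = 4/3$, $\beta < 1/8$, one verifies $1 - \tau/\alpha = 29/32$ strictly exceeds both $1 - \tau = 28/32$ and $1/\alpha = 24/32$, so the inequality holds for all $L_0 \ge \tilde L^{(1)}$ large enough. The main obstacle is this sharp arithmetic calibration: the decay rate $\gamma(m, L_k)$ at scale $L_k$ must simultaneously outpace the weaker target $\gamma(m, L_{k+1})$ at scale $L_{k+1}$ and absorb the unavoidable $C_N L_k$-wide detour around the singular cluster. A subsidiary point is that $\bball_{r''}(\By)$ may extend slightly outside $\bball_{L_{k+1}}(\Bu)$ since $\By$ is on the boundary; this is absorbed into a small additional margin in $r''$ without affecting the leading asymptotics.
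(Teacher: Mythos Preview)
Your argument is correct and mirrors the paper's proof closely: isolate the unique bad cluster of radius $\sim C_N L_k$, apply the bilateral subharmonicity Lemma~\ref{lem:BiSubH.IID} in its complement together with the global bound $e^{L_{k+1}^\beta}$ coming from $E$-NR, and verify the arithmetic; your reduction of (B) to (A) via the contrapositive of Lemma~\ref{lem:loc.and.NR.imply.NS.IID} is exactly what the paper does.

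One slip: you write ``$\beta < 1/8$'', but the paper fixes $\beta = 1/2$ (see \eqref{eq:relat.param.induct.fixed}); the constraint your own computation actually requires is only $\beta < 1 - \tau/\alpha = 29/32$, which $\beta = 1/2$ comfortably satisfies. A minor structural difference worth noting: the paper carries out (A) for \emph{arbitrary} pairs $\Bx, \By$ with $\rho(\Bx,\By) > L_k^{1+\varrho}$ rather than only for the center $\Bu$ and a boundary point $\By$; this slightly stronger form is then reused verbatim in the proof of Lemma~\ref{lem:NT.implies.nloc.IID}.
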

\begin{proof}
\noindent
(A) By assumption, either $ \bball_{L_{k+1}}(\Bu)$ is \CNR\, and contains no \EmS\, ball
of radius $L_{k}$, or there is a point $\Bw \in\bball_{L_{k+1}}(\Bu)$ such that
any ball $\bball_{L_{k}}(\Bv)\subset\bball_{L_k}(\Bu)$ with
$\rho(\Bw,\Bv)\ge C_N L_{k}$ is \EmNS
\footnote{The value of $C_N$ is stipulated in Definition \ref{def:distant}, but it is
irrelevant for this proof, provided that $L_0$ is large enough.}.

In the former case, such an exclusion is unnecessary, but in order to treat
both situations with one argument, we can formally set $\Bw = \Bu$ (or any other point).

Fix points $\Bx,\By$ with
$R:=\rho(\Bx,\By)>L_{k}^{1+\varrho}$.
By triangle inequality,
$$
\textstyle
\rho(\Bx, \bball_{(C_N - 1) L_{k}}(\Bw))  +
\rho(\By, \bball_{ (C_N - 1) L_{k}}(\Bw))
\ge R - (2C_N - 2) L_{k}.
$$
Assume first that
$$
\bal
r'  &:= \rho(\Bx, \bball_{ (C_N-1) L_{k}}(\Bw)) \ge  L_{k} +1,
\\
r'' &:= \rho(\By, \bball_{(C_N - 1) L_{k}}(\Bw)) \ge  L_{k} +1.
\eal
$$
All balls  of radius $L_{k}$ both in $\bball_{r'}(\Bx)$ and in $\bball_{r''}(\By)$ are
automatically disjoint from $\bball_{2L_{k}}(\Bw)$, thus \EmNS.
Furthermore,
$$
r' + r'' \ge R - 2(C_N - 1)L_{k} - 2 \ge  R - 2C_NL_{k}.
$$

Consider the set
$\bball =  \bball_{r'}(\Bx)  \times \bball_{r''}(\By)$ and the function  $f: \bball \to \DC$ defined by
$f(\Bx', \Bx'') = \BG_{\bball_{L_{k+1}}}(\Bx', \Bx''; E)$.
Since $E$ is not a pole of the resolvent $\BG_{\bball_{L_{k+1}}(\Bu)}(\cdot)$, it is well-defined
(hence, bounded, on a finite set).
By Lemma \ref{lem:cond.SubH.IID}, $f$ is  $(L_{k}, q)$-subharmonic both in $\Bx'$ and in $\Bx''$, with
$q \le e^{-\gamma(m, L_{k},n)}$.
Therefore, one can write, with the convention $-\ln 0 = +\infty$, using Lemma \ref{lem:BiSubH.IID} and setting for brevity $K := M - n+1$:
\be\label{eq:proof.lem.CNR.and.no.2.S.imply.NS}
\bal
-\ln f(\Bu, \By)
&\ge
-\ln \Bigg[ \left(e^{ -m(1+ \half L_{k}^{-\tau})L_{k} }
\right)^{\frac{R - 2C_N L_{k}- 2 L_{k}  }{L_{k}+1}}
e^{L_{k+1}^\beta}
\Bigg]
\\
& = m\left(1+ {\textstyle \half } L_{k}^{-\tau} \right)
    {\textstyle \frac{L_{k}}{ L_{k} + 1} } R\left(1 - 3C_N R^{-1} L_{k} \right)
    - L_{k+1}^\beta
\\
&
\textstyle
 = mR \left[ \left(1+ {\textstyle \half } L_{k}^{-\tau} \right)
   (1 -  L^{-1}_{k})  \left(1 - 3C_N L_{k}^{-\varrho } \right)
    - \frac{L_{k+1}^\beta}{mR} \right]
\\
&\ge m\left(1+ {\textstyle \quart} L_{k}^{-\tau} \right) R
\ge \gamma(m,L_{k+1}) R.
\eal
\ee
If $r'=0$ (resp., $r''=0$), the required bound follows from the subharmonicity of the function $f(\Bx',\Bx'')$ in $\Bx''$ (resp., in $\Bx'$).

\par\smallskip\noindent
(B) Assume otherwise. Then, by assertion (A), the \CNR\, ball $\bball_{L_{k+1}}(\Bu)$ must contain a pair of distant \EmS\,  balls  $\bball_{L_{k}}(\Bx)$, $\bball_{L_{k}}(\By)$. Both of them are \NR, since  $\bball_{L_{k+1}}(\Bu)$ is  \CNR. By virtue of Lemma  \ref{lem:loc.and.NR.imply.NS.IID}, both $\bball_{L_{k}}(\Bx)$ and
$\bball_{L_{k}}(\By)$ must be \nloc, so that $\bball_{L_{k+1}}(\Bu)$ must be \mT,
which contradicts the hypothesis.
\end{proof}

\begin{lemma}\label{lem:NT.implies.nloc.IID}
There is $\tilde L^{(2)}\in\DN$ such that, for all  $L_0\ge \tilde L^{(2)}$, if for any $E\in\DR$ a ball $\bball_{L_{k+1}}(\Bu)$ contains no pair of distant \EmS\,  balls  of radius $L_{k}$,
then it is \loc.
\end{lemma}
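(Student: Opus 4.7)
The plan is to mimic Lemma~\ref{lem:CNR.and.no.2.S.imply.NS} almost verbatim, replacing the Green function $\BG_{\bball_{L_{k+1}}}(\cdot;E)$ by the spectral projection kernel $\Pi_{\Bpsi_j}(\Bx,\By):=\Bpsi_j(\Bx)\overline{\Bpsi_j(\By)}$ associated to each eigenpair $(E_j,\Bpsi_j)$ of $\BH_{\bball_{L_{k+1}}(\Bu)}$. The payoff is twofold: normalization of $\Bpsi_j$ yields $\|\Pi_{\Bpsi_j}\|_\infty\le 1$ for free, removing any need for an $E$-(complete) non-resonance hypothesis, while Lemma~\ref{lem:cond.SubH.IID}(A) still provides $(L_k+1,q)$-subharmonicity in each argument on any region of $(E_j,m)$-non-singular $L_k$-sub-balls, exactly as for Green functions.

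For a fixed eigenvalue $E=E_j$, the hypothesis combined with Definition~\ref{def:distant} confines the centers of all $(E_j,m)$-singular $L_k$-sub-balls inside $\bball_{L_{k+1}}(\Bu)$ to a single ball $\bball_{C_N L_k}(\Bw_j)$. Given $\Bx,\By\in\bball_{L_{k+1}}(\Bu)$ with $R:=\rho(\Bx,\By)\ge L_{k+1}^{7/8}=L_k^{7/6}$, since $2C_NL_k\ll L_k^{7/6}$ for $L_0$ large enough, one may assume $\rho(\By,\Bw_j)\ge R/2$. When $\rho(\Bx,\Bw_j)>(C_N+1)L_k$ as well, I would set $r'=\rho(\Bx,\Bw_j)-C_NL_k-1$ and $r''=\rho(\By,\Bw_j)-C_NL_k-1$, so that $r'+r''\ge R-2C_NL_k-2$ and every $L_k$-sub-ball centered in $\bball_{r'}(\Bx)\cup\bball_{r''}(\By)$ avoids $\bball_{C_NL_k}(\Bw_j)$ and is therefore $(E_j,m)$-non-singular; two-sided radial descent (Lemma~\ref{lem:BiSubH.IID}) then gives
\[
|\Bpsi_j(\Bx)\Bpsi_j(\By)|\le q^{(r'+r''-2L_k)/(L_k+1)},\qquad q\le C(D)e^{-\gamma(m,L_k)L_k}.
\]
In the opposite case $\rho(\Bx,\Bw_j)\le(C_N+1)L_k$, the triangle inequality forces $\rho(\By,\Bw_j)\ge R-(C_N+1)L_k$, and I would apply one-sided descent (Lemma~\ref{lem:SubH.IID}) to $\Bz\mapsto\Pi_{\Bpsi_j}(\Bx,\Bz)$ on $\bball_{R''}(\By)$ with $R''\approx R-(2C_N+1)L_k$, using $|\Bpsi_j(\Bx)|\le 1$ and $\cM(\Pi_{\Bpsi_j})\le 1$ to absorb the missing $r'$-factor.

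Either way, the same algebraic manipulation as after display~\eqref{eq:proof.lem.CNR.and.no.2.S.imply.NS} reduces the bound to $e^{-\gamma(m,L_{k+1})R}$, the crucial ingredient being the scale-to-scale margin $\gamma(m,L_k)-\gamma(m,L_{k+1})=m(L_k^{-1/8}-L_k^{-1/6})$, which is positive and of strictly higher order than the $O(L_k/R)\le O(L_k^{-1/6})$ errors arising from the boundary corrections and the polynomial prefactor $C(D)$. Since the cluster center $\Bw_j$ may depend on $E_j$, the same bound holds for every eigenfunction and every well-separated pair of points, which verifies Definition~\ref{def:loc} for $\bball_{L_{k+1}}(\Bu)$. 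The main obstacle I anticipate is the boundary case: once the singular cluster envelops $\Bx$, two-sided descent is unavailable, and the repair---substituting the lost exponential factor by the a priori bound $|\Bpsi_j|\le 1$---works only because the forfeited distance is $O(L_k)=o(R)$, which is precisely where the localization-scale threshold $\rho(\Bx,\By)\ge L_k^{7/6}$ enters in an essential way.
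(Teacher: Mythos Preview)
Your proposal is correct and follows essentially the same route as the paper: replace the Green function in the proof of Lemma~\ref{lem:CNR.and.no.2.S.imply.NS} by the projection kernel $|\Bpsi_j(\Bx)\Bpsi_j(\By)|$, use the normalization $\|\Bpsi_j\|=1$ in lieu of the \CNR\ hypothesis to get the global bound $\cM(f_j)\le 1$, and then run the same two-sided (or, in the degenerate case, one-sided) radial descent around the single possible singular cluster. The paper's own proof is terser---it simply refers back to the argument of Lemma~\ref{lem:CNR.and.no.2.S.imply.NS} and notes that the absence of the factor $e^{L_{k+1}^\beta}$ only improves the bound---but your explicit two-case split and the remark that $\Bw_j$ depends on $E_j$ faithfully unpack exactly what that reference entails.
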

\proof

One can proceed as in the proof of the previous lemma, but with the functions
$
 f_j:(\Bv', \Bv'') \mapsto  |\Bpsi_j(\Bv') \Bpsi_j(\Bv'')|,
$
where $\Bpsi_j, j\in[1, |\bball_{L_{k+1}}(\Bu)|]$, are normalized eigenfunctions of operator
$\BH_{\bball_{L_{k+1}}(\Bu)}$. Notice that the $E_j$-non-resonance condition
for $\bball_{L_{k+1}}(\Bu)$
is not required here, since $\|\Bpsi_j\|=1$, so the function $|f_j|$ is globally bounded by  $1$.
Let $\Bx,\By\in \bball_{L_{k+1}}(\Bu)$ and assume that $\rho(\Bx,\By)=R \ge L_{k}^{1+\varrho}$.

Arguing as in the previous lemma and using the subharmonicity of the function
 $f_j(\cdot\,,\cdot)$ (cf. Lemma \ref{lem:cond.SubH.IID}), we obtain, for any pair
 of points with $\rho(\By',\By'')\ge L_{k}^{1+\varrho}$:
\be\label{eq:proof.lem.NT.implies.nloc.IID}
\bal
-\ln |\Bpsi_j(\Bx) \Bpsi_j(\By)| &\ge
-\ln \Bigg[
\left(
e^{ -m(1+ \half L_{k}^{-\tau})L_{k} }  \right)^{\frac{R - 2C_N L_{k}-2L_{k} }{L_{k}+1}}
\Bigg].
\eal
\ee
A direct comparison with the RHS of the first equation in
\eqref{eq:proof.lem.CNR.and.no.2.S.imply.NS} shows that the RHS of
\eqref{eq:proof.lem.NT.implies.nloc.IID} is bigger, owing to the absence of
the factor $e^{L_{k}^\beta}$. Therefore, it admits the same (or better) lower bound
as in in \eqref{eq:proof.lem.CNR.and.no.2.S.imply.NS}.
\qedhere

Below we summarize the relations between the parameters used in the  above proof:
\be\label{eq:relat.param.induct}
\rho(\Bx, \By) \ge L_{k-1}^{1+\varrho},
\; 0< \tau < \rho, \; 1+\varrho < \alpha; \; \beta < 1 -\tau.
\ee
These conditions are fulfilled, e.g., for
\be\label{eq:relat.param.induct.fixed}
\alpha = 4/3,\; \rho = 1/6, \; \tau = 1/8, \; \beta = 1/2.
\ee

\section{Scaling analysis of eigenfunctions. Finite-range interaction}
\label{sec:MPMSA.finite.range}

\subsection{Initial scale estimates}

\begin{lemma}[Initial scale bound]\label{lem:L.0.Np}
For any $L_0>2$, $m\ge 1$ and $p>0$ there is $g_0=g_0(m,p,L_0) < \infty$ such that for all $g$ with
$|g|\ge g_0$, any ball $\bball_{L_{0}}(u)$ and any $E\in\DR$
\begin{eqnarray}
\pr{ \bball_{L_{0}}(\Bu) \text{ is } (E,m)\text{{\rm-S}} } &\le L_{0}^{-p}
\label{eq:L0.GF.Np} \\
\pr{ \bball_{L_{0}}(\Bu) \text{ is } \mathnloc } &\le L_{0}^{-p}.
\label{eq:L0.loc.Np}
\end{eqnarray}

\end{lemma}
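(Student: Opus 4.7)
The plan is to exploit the large-coupling regime $|g|\gg 1$ to force the diagonal energies $\hat E(\Bx;\om):=g\BV(\Bx;\om)+\BU(\Bx)$ to dominate the kinetic term $\BH_0$, so that with high probability each eigenfunction of $\BH_{\bball_{L_0}(\Bu)}$ is concentrated near a single configuration and every Green function $\BG_{\bball_{L_0}(\Bu)}(\cdot,\cdot;E)$ decays exponentially. I would introduce two ``good events'', energy-dependent and energy-independent respectively:
$$
\Omega^{(1)}_E=\bigl\{\forall\,\Bx\in\bball_{L_0}(\Bu):\;|\hat E(\Bx;\om)-E|\ge|g|^{1/2}\bigr\},
$$
$$
\Omega^{(2)}=\bigl\{\forall\,\Bx\neq\By\in\bball_{L_0}(\Bu):\;|\hat E(\Bx;\om)-\hat E(\By;\om)|\ge|g|^{1/2}\bigr\}.
$$
Both have probability $\ge1-L_0^{-p}$ once $|g|$ is large enough in terms of $m,p,L_0$. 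For each $\Bx\in\cZp$ the $N$ coordinates are distinct, and for distinct $\Bx,\By\in\cZp$ the unordered supports differ; thus in either case one may condition on all variables $V(\cdot;\om)$ except a single ``isolated'' one which still enters the relevant quantity with the full coefficient $\pm g$. Then \Wone\ gives a conditional probability of order $(|g|^{-1/2})^\kappa$, and the union bound over $\le(2L_0+1)^{Nd}$ configurations (resp.\ $\le(2L_0+1)^{2Nd}$ pairs) absorbs the polynomial $L_0$-factors into the large-$g$ threshold.

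On $\Omega^{(1)}_E$, write $\BH_{\bball_{L_0}(\Bu)}-E=(\hat E-E)+\BH_0$: the diagonal part is bounded below in absolute value by $|g|^{1/2}$, and $\BH_0$ has matrix elements bounded by $1$ with range $1$, so the Neumann series
$$
\BG_{\bball_{L_0}(\Bu)}(E)=(\hat E-E)^{-1}\sum_{n\ge 0}\bigl(-\BH_0(\hat E-E)^{-1}\bigr)^n
$$
converges for $|g|^{1/2}\ge 4Nd$, and a direct path-count (at most $(2Nd)^n$ paths of length $n$ in $\cZp\subset\DZ^{Nd}$) yields
$$
|\BG_{\bball_{L_0}(\Bu)}(\Bu,\By;E)|\le\Const\cdot(2Nd\,|g|^{-1/2})^{\rho(\Bu,\By)}.
$$
For $\By\in\pt^-\bball_{L_0}(\Bu)$ one has $\rho(\Bu,\By)=L_0$, so the right-hand side is below $e^{-\gamma(m,L_0)L_0+2L_0^\beta}$ once $|g|$ exceeds a threshold depending only on $m,L_0,N,d$; this proves \eqref{eq:L0.GF.Np}.

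On $\Omega^{(2)}$, fix any eigenpair $(E_j,\Bpsi_j)$ of $\BH_{\bball_{L_0}(\Bu)}$. The $|g|^{1/2}$ gap between distinct diagonal energies forces at most one ``concentration point'' $\hBx_j\in\bball_{L_0}(\Bu)$ with $|\hat E(\hBx_j;\om)-E_j|<|g|^{1/2}/2$, and for every other $\Bx$ the Dirichlet eigenfunction equation (with $\Bpsi_j$ set to $0$ outside $\bball_{L_0}(\Bu)$) yields
$$
|\Bpsi_j(\Bx)|\le q\max_{\By:\,\rho(\Bx,\By)=1}|\Bpsi_j(\By)|,\qquad q=\Const(N,d)\,|g|^{-1/2},
$$
i.e.\ $|\Bpsi_j|$ is $(1,q)$-subharmonic on the complement of $\{\hBx_j\}$, with global maximum $\le\|\Bpsi_j\|_\infty\le 1$. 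Given $\Bx,\By\in\bball_{L_0}(\Bu)$ with $\rho(\Bx,\By)\ge L_0^{7/8}$, at least one of $\rho(\Bx,\hBx_j),\rho(\By,\hBx_j)$ is $\ge\rho(\Bx,\By)/2$; on that side Lemma \ref{lem:SubH.IID} yields $|\Bpsi_j|\le q^{c\rho(\Bx,\By)}$ with $c=1/4$, and combining with the trivial bound $\le 1$ on the other factor gives $|\Bpsi_j(\Bx)\Bpsi_j(\By)|\le e^{-\gamma(m,L_0)\rho(\Bx,\By)}$ for $|g|$ sufficiently large, which is \eqref{eq:L0.loc.Np}.

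The principal technical point is the multi-particle conditioning structure: $\BV(\Bx)=\sum_j V(x_j;\om)$ is a sum over $N$ variables whose correlation is only controlled through \Wone--\Wtwo, so each event estimate requires isolating a single $V(z;\om)$ that retains the full coupling factor $g$. Working in the fermionic sector $\cZp$, where coordinates within any $\Bx$ are automatically distinct and $\Bx\ne\By$ forces a nonempty symmetric difference of supports, makes this isolation automatic. A secondary subtlety is that the concentration point $\hBx_j$ in the localization argument is random and $j$-dependent, but the $|g|^{1/2}$ gap on $\Omega^{(2)}$ guarantees there is at most \emph{one} such point per eigenfunction, which is precisely what allows the one-sided radial descent of Lemma \ref{lem:SubH.IID} to close the argument without recourse to the bi-subharmonic Lemma \ref{lem:BiSubH.IID} or more delicate tools.
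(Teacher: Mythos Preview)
Your argument is correct and shares the paper's core probabilistic mechanism: isolate one site variable $V(w;\om)$ entering $\BV(\Bx)$ (or $\BV(\Bx)-\BV(\By)$) with a nonzero integer coefficient, condition on $\fF_{\ne w}$, and apply \Wone. The paper proceeds identically here, phrasing things via the occupation numbers $\Bn_w(\Bx)-\Bn_w(\By)$ rather than the symmetric difference of supports.

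The analytic parts differ in execution. For \eqref{eq:L0.GF.Np} the paper invokes the Combes--Thomas estimate after showing $\dist(E,\Sigma(\BH_{\bball_{L_0}(\Bu)}))\ge s$ with high probability; your Neumann-series path-count is a direct, hands-on version of the same bound. For \eqref{eq:L0.loc.Np} the difference is more visible: the paper observes that on the high-probability event of uniformly large spacings for the diagonal operator $g\BV+\BU$, the eigenvectors of the one-parameter family $A(t)=\BW - g^{-1}t\BDelta$, $t\in[0,1]$, are continuous at $t=0$ (where they are delta-functions), so for $|g|$ large the eigenfunctions of $\BH_{\bball_{L_0}(\Bu)}$ are uniformly close to deltas --- which suffices since $L_0$ is fixed and only the finite threshold $e^{-\gamma(m,L_0)\cdot 2L_0}$ must be beaten. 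Your route via the eigenfunction equation and the $(1,q)$-subharmonicity of $|\Bpsi_j|$ away from its unique concentration point is more explicit and self-contained, using Lemma~\ref{lem:SubH.IID} directly rather than abstract perturbation theory; both approaches are valid, and yours has the minor advantage of yielding a quantitative decay rate in $|g|$ rather than a soft continuity statement.
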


See the proof in Appendix.

\subsection{Eigenvalue concentration bounds for distant pairs of balls }
\label{ssec:EVC.PCT}

\begin{theorem}[Cf. \cite{C10}]
\label{thm:PCT}
Let $V: \cZ\times \Omega \to \DR$ be a random field satisfying \Wthree. For all $L>0$ large enough,
any pair of balls $\bball^{(N)}_{L}(\Bu')$, $\bball^{(N)}_{L}(\Bu'')$
with $\rdS(\Bu', \Bu'') > 4NL$
%
the following bound holds for any $s\in(0,1]$:
\begin{equation}\label{eq:main.W2.bound}
\pr{ \dist(\sigma(\BH_{\bball_{L}(\Bu')}), \sigma(\BH_{\bball_{L}(\Bu'')})) \le s }
     \le  (2L+1)^{2Nd}\, h_L (2s)
\end{equation}
where
\be\label{eq:h.L}
h_L(s) := C''L^{A''} (2s)^{b''}
+  |\bball_{L}(\Bu')| \, |\bball_{L}(\Bu'')|\, C'L^{A'} (2s)^{b'}.
\ee
Consequently, this implies the following bound
for $\beta'\in(0,\beta)$, all $L$ large enough and any pair
$\Bu'$, $\Bu''$, with $\rd(\Bu', \Bu'') > 4NL$:
\be\label{eq:cor.PCT}
\pr{ \exists\, E\in \DR:\, \text{ $\bball_{L}(\Bu')$ and
$\bball_{L}(\Bu'')$ are \PR} } \le e^{-L^{\beta'} }
\ee
\end{theorem}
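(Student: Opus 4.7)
\textit{Proof plan.} The strategy is to reduce the spectral-spacing estimate to a one-dimensional concentration inequality for the conditional distribution of a sample mean, as supplied by \Wthree. Associate to each of the balls $\bball_L(\Bu')$, $\bball_L(\Bu'')$ the lattice parallelepiped $Q':=\bigcup_{j=1}^N \ball_L(u'_j)$, resp.\ $Q'':=\bigcup_{j=1}^N\ball_L(u''_j)$, namely the union of all single-particle projections of the multi-particle ball. The hypothesis $\rdS(\Bu',\Bu'')>4NL$ is tailored precisely to force $Q'\cap Q''=\emptyset$ after symmetrization of the tuples: since the coordinates of $\Bu'$ and $\Bu''$ may be interleaved in $\cZp$, a factor of order $N$ in the separation is needed to rule out intersection by a pigeonhole argument. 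This is the only genuinely geometric step.

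Next, I would decompose $V(x;\om)=\xi_{Q'}(\om)+\eta_x(\om)$ for $x\in Q'$. For $\Bx=(x_1,\dots,x_N)\in\bball_L(\Bu')$ all coordinates lie in $Q'$, so
$$
g\sum_{j=1}^N V(x_j;\om)\;=\;gN\xi_{Q'}(\om)\;+\;g\sum_{j=1}^N\eta_{x_j}(\om),
$$
and hence $\BH_{\bball_L(\Bu')}(\om)=\BA'(\om)+gN\xi_{Q'}(\om)\cdot I$, with $\BA'$ measurable with respect to $\fF_{V,Q'}$. Since $Q''\subset\DZ^d\setminus Q'$, the entire operator $\BH_{\bball_L(\Bu'')}(\om)$, and therefore its spectrum $\sigma''$, is $\fF_{V,Q'}$-measurable as well. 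Conditional on $\fF_{V,Q'}$, the spectrum $\sigma'$ of $\BH_{\bball_L(\Bu')}$ is a rigid translate of a frozen finite set by the scalar $gN\xi_{Q'}$.

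Consequently, the event $\dist(\sigma',\sigma'')\le s$ forces $\xi_{Q'}$ into a union of at most $|\bball_L(\Bu')|\cdot|\bball_L(\Bu'')|$ intervals of length $2s/(N|g|)$ whose endpoints are $\fF_{V,Q'}$-measurable. By the definition \eqref{eq:def.nu.R} of $\nu_R$ (with $R$ of order $L$), a union bound yields
$$
\pr{\dist(\sigma',\sigma'')\le s\,\big|\,\fF_{V,Q'}}\;\le\;|\bball_L(\Bu')|\cdot|\bball_L(\Bu'')|\cdot\nu_R\!\left(\tfrac{2s}{N|g|};\om\right).
$$
Taking expectation, I would split on whether $\nu_R(\cdot;\om)$ exceeds the threshold $C'R^{A'}(2s/(N|g|))^{b'}$ of \Wthree: the exceptional event contributes $C''R^{A''}(2s/(N|g|))^{b''}$ directly, producing the first term of $h_L(2s)$, while the typical event contributes the pointwise threshold times the eigenvalue-pair count, producing the second term. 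Absorbing the $|g|^{-b'}$ and $|g|^{-b''}$ factors into the constants, and bounding the pair count loosely by $(2L+1)^{2Nd}$ for the prefactor, recovers \eqref{eq:main.W2.bound}.

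For the corollary \eqref{eq:cor.PCT}, if both balls are \PR\, at the same $E$, each contains an \ER\, sub-ball of some radius $\ell\ge L^{1/\alpha}$, whose spectrum meets the $e^{-\ell^\beta}\le e^{-L^{\beta/\alpha}}$-neighborhood of $E$. Hence the two sub-spectra lie within $s=2e^{-L^{\beta/\alpha}}$ of each other, and applying the main bound at the appropriate scale with this $s$, after a polynomial union bound over the choices of sub-ball centers and radii, yields a stretched-exponential probability which for $L$ large is dominated by $e^{-L^{\beta'}}$ for any $\beta'<\beta$. The main obstacle throughout is the geometric setup in the first paragraph; once the conditional decomposition is in place, the rest is a Wegner-style application of \Wthree that exploits the scalar nature of the shift by $gN\xi_{Q'}$.
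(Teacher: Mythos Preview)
Your geometric setup contains a genuine error: the hypothesis $\rdS(\Bu',\Bu'')>4NL$ does \emph{not} force the single-particle projections $Q'=\bigcup_j\ball_L(u'_j)$ and $Q''=\bigcup_j\ball_L(u''_j)$ to be disjoint. A counterexample with $N=2$, $d=1$, $L=10$: take $\Bu'=(0,100)$, $\Bu''=(0,200)$. Then $\rdS(\Bu',\Bu'')=100>80=4NL$, yet $Q'$ and $Q''$ share the entire ball $\ball_{10}(0)$. The condition $\rdS>4NL$ guarantees only that for each permutation $\pi$ \emph{some} coordinate pair is far apart, not that \emph{all} are. Consequently $\BH_{\bball_L(\Bu'')}$ is \emph{not} $\fF_{V,Q'}$-measurable, and your conditional decomposition collapses: the second spectrum $\sigma''$ still moves with $\xi_{Q'}$, so the spacing event is not reduced to a one-dimensional concentration for $\xi_{Q'}$ alone.

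This overlap phenomenon is precisely the obstacle that distinguishes the multi-particle EVC problem from the single-particle Wegner estimate (cf.\ the discussion in the introduction and in \cite{AW09a,C10}). The paper's proof handles it via the notion of \emph{weak separability}: one shows (Lemma~\ref{lem:dist.are.WS}) that $\rdS>4NL$ implies the existence of a parallelepiped $Q$ and index sets $\cJ_1,\cJ_2\subset[1,N]$ with $n_1:=|\cJ_1|>|\cJ_2|=:n_2$ such that $\Pi_{\cJ_1}\bball_L(\Bu')\cup\Pi_{\cJ_2}\bball_L(\Bu'')\subset Q$ while $\Pi_{\cJ_2^c}\bball_L(\Bu'')\cap Q=\varnothing$. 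Then \emph{both} Hamiltonians depend on $\xi_Q$, but as $n_1\xi_Q\cdot\one$ and $n_2\xi_Q\cdot\one$ respectively; the eigenvalue differences are shifted by $(n_1-n_2)\xi_Q$ with $n_1-n_2\ge 1$, and \Wthree\ applies. Your argument corresponds to the special case $n_1=N$, $n_2=0$, which is not available in general. Once you replace your disjointness claim by the weak-separability lemma, the rest of your outline (the union bound over eigenvalue pairs, the splitting on the \Wthree\ threshold, and the derivation of~\eqref{eq:cor.PCT}) is essentially correct and matches the paper.
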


See the proof in Appendix.

\subsection{"Partially interactive" $N$-particle balls }
\label{ssec:PI.balls}

Unlike the situation in single-particle models where the configuration space (e.g., the lattice $\DZ^d$) is "homogeneous", the presence of a non-trivial interaction singles out the loci
$(x_1, \ldots, x_N)\in\left(\DZ^{d}\right)^N$ where the interaction cannot be neglected. In the physically relevant case of interactions rapidly decaying at infinity, the "diagonal" subsets like
$$
\DD^{(j,k)} := \{ \Bx=(x_1, \ldots, x_N): \, x_j=x_k  \}, \quad 1\le j\ne k \le N,
$$
and particularly the "principal diagonal"
$$
\DD := \{ \Bx=(x, x, \ldots, x): \, x\in\DZ^d  \},
$$
along with their neighborhoods, are the zones where the inter-particle interaction is concentrated. Respectively, far away from all "diagonals" the interaction is either zero (if the range of interaction is finite) or weak, thus negligible in the framework of some perturbative analysis. It turns out that a tedious geometric combinatorial analysis of the "diagonal" zones is not necessary for establishing  qualitative results of the multi-particle Anderson localization theory. As was illustrated in \cite{CS09b}, a much simpler classification of balls $\bball_L(\Bu)$ suffices here:
\begin{itemize}
  \item balls close to the "principal diagonal" $\DD$\,;
  \item balls sufficiently distant from $\DD$.
\end{itemize}

In the space $\cZN$ of indistinguishable configurations $\Bx$ of $N$ particles, occupying pairwise
distinct positions $x_1, \ldots, x_N\in\cZ$, we can no longer refer directly  to the "diagonal"
$\DD$. However, it is not the proximity to this set which is important per se for our analysis,
but an easily verifiable fact that, for a point $\vBx=(x_1, \ldots, x_N)\in\cZ^N$
$$
\dist(\vBx, \DD) \le \diam \{x_1, \ldots, x_N\} \le 2 \dist(\vBx, \DD).
$$
The quantity $\diam \Bx=\diam \{x_1, \ldots, x_N\}$ is non-ambiguously defined for a configuration
$\Bx = \{x_1, \ldots, x_N\}$ of indistinguishable particles.

\begin{definition}
An $N$-particle ball $\bball_L(\Bu)$ is said partially interactive (PI, in short) if
$\diam \Bu > A_N L$, $A_N = 4N$,
and fully interactive (FI), otherwise.
\end{definition}

The local Hamiltonian  $\BH^{(N)}_{\bball_L(\Bu)}$ with a finite-range interaction in a PI ball can is algebraically decomposable in the following way:
\be\label{eq:Ham.decomposable}
\BH^{(N)}_{\bball_L(\Bu)} = \BH^{(n')}_{\bball_L(\Bu')}\otimes \one^{(n'')}
+ \one^{(n')} \otimes \, \BH^{(n'')}_{\bball_L(\Bu'')}
\ee
with $n'+n''=N$, $\Bu=(\Bu',\Bu'')$, $\Bu'\in\DZ^{dn'}$, $\Bu''\in\DZ^{dn''}$, while for a fully interactive ball one cannot guarantee such a property.
(In the case of a rapidly decaying interaction, the RHS is close in norm to the LHS.)

\begin{lemma}\label{lem:FI.diam}
\begin{enumerate}[\rm (A)]
  \item If $\bball_L(\Bx)$ is {\rm PI}, then there exist non-empty complementary subsets
  $\cJ,\cJ^c\subset[1,N]$ such that
  $\rho(\Pi_{\cJ} \bball_{L}(\Bx), \Pi_{\cJ^c} \bball_{L}(\Bx) ) > 2L$.

  \item If $\bball_L(\Bx)$, $\bball_L(\By)$ are two {\rm FI} balls with
  $\rho(\Bx,\By)> C_N L$, $C_N = 11 N (\ge 2A_N + 3)$,  then
$\rho(\Pi \bball_{L}(\Bx), \Pi \bball_{L}(\By)) \ge 2L$.
\end{enumerate}

\end{lemma}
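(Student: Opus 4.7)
The lemma is purely combinatorial: no spectral or probabilistic input enters, and both parts reduce to measuring gaps and distances among particle positions on the one-dimensional lattice $\DZ$. I would handle (A) by a pigeonhole on consecutive gaps and (B) by a triangle inequality on the max-distance, in that order.

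For part (A), I would sort the coordinates of $\Bu$ as $u_{(1)}\le\cdots\le u_{(N)}$. The PI hypothesis reads $u_{(N)}-u_{(1)}=\diam\Bu > A_N L = 4NL$, and since this diameter equals the sum of the $N-1$ consecutive gaps, some gap $u_{(i_0+1)}-u_{(i_0)}$ must exceed $4NL/(N-1)>4L$. Splitting the index set along this gap into $\cJ=\{j:u_j\le u_{(i_0)}\}$ and $\cJ^c=\{j:u_j\ge u_{(i_0+1)}\}$, both nonempty, the projections $\Pi_\cJ \bball_L(\Bu)=\bigcup_{j\in\cJ}\ball_L(u_j)$ and $\Pi_{\cJ^c}\bball_L(\Bu)=\bigcup_{j\in\cJ^c}\ball_L(u_j)$ sit respectively inside the intervals $[u_{(1)}-L, u_{(i_0)}+L]$ and $[u_{(i_0+1)}-L, u_{(N)}+L]$, which are separated by more than $4L-2L=2L$.

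For part (B), I would pick an index $j_0\in[1,N]$ realizing $|x_{j_0}-y_{j_0}|=\rho(\Bx,\By)>11NL$, and combine this with the FI constraints $\diam\Bx,\diam\By\le 4NL$. Two applications of the triangle inequality give, for all $k,l\in[1,N]$,
\[
|x_k-y_l| \;\ge\; |x_{j_0}-y_{j_0}| - |x_k-x_{j_0}| - |y_l-y_{j_0}| \;>\; 11NL - 4NL - 4NL \;=\; 3NL.
\]
For $N\ge 2$ this is $\ge 6L$, so every pair of balls $\ball_L(x_k)$, $\ball_L(y_l)$ is at distance $\ge 4L$, and passing to the unions that form $\Pi\bball_L(\Bx)$ and $\Pi\bball_L(\By)$ preserves this bound. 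The case $N=1$ is trivial since there is then a single ball on each side and the conclusion follows directly from $|x_1-y_1|>11L$.

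The only genuine subtlety I anticipate is interpretational rather than technical: fixing unambiguously what $\rho$ means on configurations of indistinguishable particles. If the paper uses a symmetrized version $\rho_{\mathrm{sym}}(\Bx,\By)=\min_{\pi\in\fS_N}\max_j|x_{\pi(j)}-y_{\pi(j)}|$ in place of the vectorial max-distance, I would first relabel $\By$ by a permutation realizing this minimum (which preserves $\diam\By$ and the underlying set $\{y_l\}$), after which the argument in (B) proceeds verbatim. A minor bookkeeping point is to confirm the compatibility $C_N=11N \ge 2A_N+3 = 8N+3$, which holds for all $N\ge 1$, so the value in the lemma is consistent with the earlier Definition~\ref{def:distant}.
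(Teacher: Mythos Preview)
Your argument is correct, and for part~(B) it is essentially identical to the paper's: pick an index $j_0$ realizing the max-distance and apply the triangle inequality together with the FI diameter bound. (The paper's write-up is slightly sloppier numerically, weakening $3NL$ to $3L$ and concluding only $>L$, whereas the lemma claims $\ge 2L$; your bookkeeping is cleaner.)

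For part~(A) you take a genuinely different route. You exploit the one-dimensional structure directly: sort the coordinates, observe that the total span $>4NL$ forces some consecutive gap $>4L$ by pigeonhole, and split there. The paper instead argues via connectivity: it considers the union $\bigcup_j \ball_{2L}(x_j)$ and notes that if this set were connected its diameter would be at most $N\cdot 4L = 4NL$, contradicting $\diam\Pi\Bx > 4NL$; hence the union decomposes into at least two clusters, yielding the partition $\cJ,\cJ^c$. Your approach is more elementary and gives the result in one line, but it is specific to $d=1$. The paper's clustering argument is dimension-free and carries over verbatim to the general-graph setting of Section~\ref{sec:general.graph}, which is presumably why the author chose it.
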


\begin{proof}
(B) Consider the set
$\Pi \ball_{2L}(\Bx) \equiv \cup_{j=1}^n \ball_{2L}(x_j)$.
If $\diam  \Pi\Bx > A_N L$,
this union cannot be connected, for otherwise it would have diameter
$\le N\cdot  4 L$, but
$\diam \Pi \ball_{2L}(\Bx) \ge \diam \Pi\Bx > A_N L = 4 N L$.
Therefore, $\Pi \ball_{L^{1+\delta}}(\Bx)$ can be decomposed into a disjoint union
of two non-empty clusters,
$$
\Big( \cup_{j\in\cJ} \, \ball_{2L}(x_j) \Big)
\coprod
\Big( \cup_{i\in\cJ^c} \, \ball_{ 2L}(x_i) \Big),
$$
so for all $j\in\cJ$, $i\in\cJ^c$ one has
$$
\rho( \ball_L(x_i), \ball_L(x_j)) > 2L,
$$
yielding the assertion (B).

\par\smallskip
\noindent
(C) If $\rho(\Bx,\By) > C_N L$, then for some $\jknot$,
$\rho(x_\jknot, y_\jknot) > C_N L$. Since both balls are FI,
for all $j\in[1,n]$
$$
\rho(x_j, x_\jknot) \le \diam\, \Pi\Bx \le A_N L, \quad
\rho(y_j, y_\jknot) \le \diam\, \Pi\By < A_N L,
$$
so by triangle inequality, for all $i,j\in[1,n]$
$$
\bal
\rho(x_i, y_j) &\ge
\rho(x_\jknot, y_\jknot) - \rho(x_i, x_\jknot) - \rho(y_j, y_\jknot)
%
\ge C_N L - 2 A_N L \ge 3 L,
\eal
$$
yielding
$$
\rho(\Pi \boxx_{L}(\Bx), \Pi \boxx_{L}(\By)) >  3L - 2L = L.
$$
\end{proof}

Note that in the case where a decomposition of the form \eqref{eq:Ham.decomposable} is possible, it may be not unique. For example, with $N=3$ and an interaction of range $r_0=1$,
a particle configuration $\Bx=\{0, 10, 20\}$ can be decomposed into two non-interacting subsystems in three different ways. For our purposes, it will suffice to assume that one decomposition is canonically associated with every partially interactive ball.

Now we state a result on localization in PI balls for Hamiltonians with a short-range interaction. The reader may want to compare the elementary proof of Lemma \ref{lem:instead.PITRONS} below with that of Lemma 3 in \cite{CS09b} which is more involved.

\begin{lemma}\label{lem:instead.PITRONS}
Assume that the interaction $\BU$ has finite range $r_0\in[0,+\infty)$.
Consider an PI $N$-particle ball with canonical decomposition
$\bball^{(N)}_L(\Bu) = \bball^{(n')}_L(\Bu_\cJ) \times \bball^{(n'')}_L(\Bu_{\cJ^c})$. If the balls
$\bball^{(n')}_{L_{k-1}}(\Bu')$ and $\bball^{(n'')}_{L_{k-1}}(\Bu'')$ are \loc,
then $\bball^{(N)}_{L_k}(\Bu)$ is also \loc.
\end{lemma}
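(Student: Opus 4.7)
The plan is to exploit the tensor-product decomposition of $\BH^{(N)}_{\bball_L(\Bu)}$ on a PI ball and translate the $m$-localization of the two factor Hamiltonians into the required bound on product eigenfunctions.

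\textbf{Step 1 (decoupling the Hamiltonian).} Because $\bball^{(N)}_L(\Bu)$ is PI, Lemma \ref{lem:FI.diam}(A) supplies complementary nonempty subsets $\cJ, \cJ^c \subset [1,N]$ with
$$
\rho\bigl(\Pi_{\cJ}\bball_L(\Bu),\, \Pi_{\cJ^c}\bball_L(\Bu)\bigr) > 2L.
$$
For $L$ large enough that $2L > r_0$, the finite-range two-body interaction $\BU$ contains no surviving cross-term between the $\cJ$-particles and the $\cJ^c$-particles on the ball, while the external one-body potential $\BV$ already decouples additively. Identifying $\Bu=(\Bu',\Bu'')$ and $\bcH_{\bball_L(\Bu)} \cong \bcH_{\bball_L(\Bu')}\otimes \bcH_{\bball_L(\Bu'')}$, this yields
$$
\BH^{(N)}_{\bball_L(\Bu)} \;=\; \BH^{(n')}_{\bball_L(\Bu')}\otimes \one^{(n'')} \;+\; \one^{(n')}\otimes \BH^{(n'')}_{\bball_L(\Bu'')}.
$$

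\textbf{Step 2 (product basis of eigenfunctions).} From the tensor form, an orthonormal eigenbasis of $\BH^{(N)}_{\bball_L(\Bu)}$ may be chosen as pure tensors $\Bpsi_j(\Bx',\Bx'') = \psi'_a(\Bx')\,\psi''_b(\Bx'')$, where $\psi'_a$ and $\psi''_b$ are normalized eigenfunctions of $\BH^{(n')}_{\bball_L(\Bu')}$ and $\BH^{(n'')}_{\bball_L(\Bu'')}$ respectively, with eigenvalues summing to $E_j$. Degeneracies are harmless since any degenerate eigenspace is again spanned by product vectors.

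\textbf{Step 3 (combining the two factor bounds).} Pick $\Bx=(\Bx',\Bx'')$ and $\By=(\By',\By'')$ in $\bball_L^{(N)}(\Bu)$ with $\rho(\Bx,\By)\ge L_k^{(1+\varrho)/\alpha}$. Since
$$
\rho(\Bx,\By)\;=\;\max\bigl\{\rho(\Bx',\By'),\,\rho(\Bx'',\By'')\bigr\},
$$
one of the two component distances coincides with $\rho(\Bx,\By)$; say it is $\rho(\Bx',\By')$. This component then automatically exceeds the threshold of the $m$-localization hypothesis for the factor ball, so
$$
|\psi'_a(\Bx')\,\psi'_a(\By')| \;\le\; e^{-\gamma(m,L_k)\,\rho(\Bx',\By')} \;=\; e^{-\gamma(m,L_k)\,\rho(\Bx,\By)},
$$
while the complementary factor is bounded by $\|\psi''_b\|_\infty^2\le 1$ by normalization. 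Multiplying gives $|\Bpsi_j(\Bx)\Bpsi_j(\By)|\le e^{-\gamma(m,L_k)\rho(\Bx,\By)}$, as required.

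\textbf{Main subtlety.} The real content is the decoupling in Step 1, which forces every eigenfunction to be a pure tensor; after that, the argument is essentially a one-line observation because the $\rho$-norm equals the max of the two factor-norms, so exponential decay in the larger factor alone suffices -- no propagation of Green functions across the interface is needed. This is precisely what makes the argument shorter than the corresponding Lemma~3 in \cite{CS09b}, which had to carry a resolvent estimate across both subsystems. The only bookkeeping care is aligning the scale thresholds: one uses that $L_k^{(1+\varrho)/\alpha}$ dominates the smaller-scale threshold attached to the factor balls, and that $\gamma(m,\cdot)$ is monotone in the scale variable, so the exponent obtained from the inductive hypothesis is not worse than $\gamma(m,L_k)$.
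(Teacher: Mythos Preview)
Your proof is correct and follows essentially the same approach as the paper: decouple the Hamiltonian on the PI ball into a tensor sum, choose eigenfunctions as pure tensors, and read off the decay bound from the factor carrying the max-distance. Your write-up is in fact more detailed than the paper's (which compresses your Steps~2--3 into a single sentence), and your remark on aligning the scale thresholds and the monotonicity of $\gamma(m,\cdot)$ correctly handles the apparent $L_{k-1}$/$L_k$ discrepancy in the statement versus the proof.
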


\proof
Since the interaction between subsystems labeled by $\cJ$ and by its complement
$\cJ^c$ vanishes, the operator $\BH$ admits the decomposition
\be\label{eq:lem.instead.PITRONS}
\BH_{\bball_{L_k}(\Bu)} = \BH_{\bball^{(n')}_{L_k}(\Bu_\cJ)} \otimes \one^{(n'')}
+ \one^{(n')} \otimes \BH_{\bball^{(n'')}_{L_k}(\Bu_{\cJ^c})},
\ee
so that the eigenfunctions $\BPsi_j$ of $\BH_{\bball_{L_k}(\Bu)}$ can be chosen in the form
$
\BPsi_{j} = \Bphi_{j'} \otimes \Bpsi_{j''},
$
where $\Bphi_{j'}$ are eigenfunctions  of $ \BH_{\bball^{(n')}_{L_k}(\Bu_\cJ)}$ and, respectively,
$\Bpsi_{j''}$ are eigenfunctions of $ \BH_{\bball^{(n'')}_{L_k}(\Bu_{\cJ^c})}$. Now the required decay bounds for the spectral projections
\be\label{eq:PI.tensor}
| \, \BPsi_{j}(\Bx) \BPsi_{j}(\By) \, |
= | \, \Bphi_{j'}(\Bx_{\cJ}) \Bphi_{j''}(\Bx_{\cJ^c}) \, \Bpsi_{j'}(\By_{\cJ})\Bpsi_{j''}(\By_{\cJ^c}) \, |
\ee
follow directly from the respective bounds on $\Bphi$ and $\Bpsi$, which are guaranteed by the \loc\, hypotheses on the projection balls  $\bball^{(n')}_{L_k}(\Bu_\cJ)$ and $\bball^{(n'')}_{L_k}(\Bu_{\cJ^c})$.
\qedhere

\smallskip

Given numbers $p,b>0$, introduce the following sequence:
\be\label{eq:def.p.n}
P(n,k) = P(n,k,p) = 2^{N-n} p(1+b)^k, \; n=1, \ldots, N; k\ge 0.
\ee

\begin{lemma}\label{lem:PI}
Suppose that for all $n\in[1, N-1]$ and for a given $k\ge 0$ the following bound holds for any
$n$-particle ball $\bball^{(n)}_{L_k}(\Bu)$:
$$
\pr{ \bball^{(n)}_{L_k}(\Bu) \text{ is } \mathnloc } \le  L_k^{-P(n,k)}.
$$
Then for any $N$-particle  PI ball $\bball^{(N)}_{L_k}(\Bu)$ one has
$$
\pr{ \bball^{(N)}_{L_k}(\Bu) \text{ is } \mathnloc } \le  2 L_k^{-2P(N,k) }
\le \frac{1}{4}  L_k^{-P(N,k)}.
$$
\end{lemma}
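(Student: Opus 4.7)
\smallskip

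\noindent\textbf{Proof plan for Lemma \ref{lem:PI}.}
The strategy is to reduce the $N$-particle event to a disjunction of lower-particle events via the algebraic decomposition available in a PI ball, and then invoke the inductive hypothesis together with a union bound. The arithmetic of the sequence $P(n,k)=2^{N-n}p(1+b)^k$ is arranged so that dropping from $N$ to any $n\le N-1$ particles doubles the effective exponent, which gives room for both the union bound (loss of factor $2$) and the subsequent deterministic step (loss of factor $4$).

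First, I would use the PI hypothesis $\diam\Bu>A_N L_k$ together with Lemma \ref{lem:FI.diam}(A) to produce the canonical decomposition $\bball^{(N)}_{L_k}(\Bu)=\bball^{(n')}_{L_k}(\Bu_{\cJ})\times\bball^{(n'')}_{L_k}(\Bu_{\cJ^c})$ with $n'+n''=N$, $n',n''\ge 1$, and $\rho(\Pi_{\cJ}\bball_{L_k}(\Bu),\Pi_{\cJ^c}\bball_{L_k}(\Bu))>2L_k$. For $L_0$ (hence $L_k$) large enough that $2L_k>r_0$, the finite-range hypothesis \Uzero{} ensures that the interaction between the two subsystems vanishes; therefore, by Lemma \ref{lem:instead.PITRONS}, if both factor balls are \loc, so is $\bball^{(N)}_{L_k}(\Bu)$. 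Contrapositively,
$$
\{\bball^{(N)}_{L_k}(\Bu)\ \text{is}\ \mathnloc\}\ \subset\
\{\bball^{(n')}_{L_k}(\Bu_{\cJ})\ \text{is}\ \mathnloc\}\ \cup\
\{\bball^{(n'')}_{L_k}(\Bu_{\cJ^c})\ \text{is}\ \mathnloc\}.
$$

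Next, I apply the inductive hypothesis to each factor ball (legitimate since $n',n''\in[1,N-1]$) and use the union bound:
$$
\pr{\bball^{(N)}_{L_k}(\Bu)\ \text{is}\ \mathnloc}\ \le\ L_k^{-P(n',k)}+L_k^{-P(n'',k)}.
$$
Since $n',n''\le N-1$, one has $2^{N-n'},\,2^{N-n''}\ge 2$, hence $P(n',k),P(n'',k)\ge 2P(N,k)$, and consequently
$$
\pr{\bball^{(N)}_{L_k}(\Bu)\ \text{is}\ \mathnloc}\ \le\ 2L_k^{-2P(N,k)},
$$
which is the first claimed bound. For the second bound I observe that $2L_k^{-2P(N,k)}\le\tfrac14 L_k^{-P(N,k)}$ is equivalent to $L_k^{P(N,k)}\ge 8$; since $P(N,k)\ge P(N,0)=p>0$ and $L_k\ge L_0$, this follows by fixing $L_0$ large enough (independently of $k$).

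There is essentially no serious obstacle here; the content of the argument is that the deterministic Lemma \ref{lem:instead.PITRONS} converts the $N$-particle localization statement into a conjunction of statements in strictly fewer particles, while the definition of $P(n,k)$ is tailored so that the resulting union bound still beats $P(N,k)$ with room to spare. The only points that require a little care are (i) verifying that $L_0$ is large enough so that $2L_k>r_0$ for the decoupling argument to apply, and so that the constant $8$ is absorbed in the final step, and (ii) confirming that the two sub-balls are PI-compatible, i.e.\ the decomposition guaranteed by Lemma \ref{lem:FI.diam}(A) yields $n',n''\ge 1$, so that the inductive hypothesis for $n\le N-1$ is actually applicable to both factors.
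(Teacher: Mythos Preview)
Your proposal is correct and follows essentially the same approach as the paper: use Lemma~\ref{lem:instead.PITRONS} to reduce the $N$-particle $\mathnloc$ event to the union of the two lower-particle $\mathnloc$ events for the factor balls, apply the inductive hypothesis with a union bound, and exploit $P(n',k),P(n'',k)\ge 2P(N,k)$ together with $L_0$ large enough. Your write-up is in fact slightly more careful than the paper's (you spell out the role of Lemma~\ref{lem:FI.diam}(A), the condition $2L_k>r_0$, and the arithmetic $L_k^{P(N,k)}\ge 8$), but the argument is the same.
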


\proof
It follows from Lemma \ref{lem:instead.PITRONS} that a ball $ \bball^{(N)}_{L_k}(\Bu)$
admitting a decomposition of the form \eqref{eq:lem.instead.PITRONS} is \nloc\, only if at least one of the projection balls  $\bball^{(n')}_{L_k}(\Bu_\cJ)$, $\bball^{(n'')}_{L_k}(\Bu_{\cJ^c})$ is \nloc. Therefore, with $L_0$ large enough, we obtain
$$
\ba
\pr{  \bball^{(N)}_{L_k}(\Bu) \text{ is } \mathnloc } \\
 \quad
  \le \pr{\bball^{(n'')}_{L_k}(\Bu_{\cJ^c}) \text{ is } \mathnloc}
+ \pr{\bball^{(n'')}_{L_k}(\Bu_{\cJ^c})  \text{ is } \mathnloc}
\\
\quad  \le  L_k^{-P(n',k)} +  L_k^{-P(n'',k)}
\le 2 L_k^{-2P(N,k) }
 \quad    < \quart L_k^{-P(N,k)}.
\hfill\qed
\ea
$$

\subsection{Fully interactive $N$-particle balls }

Owing to Lemma \ref{lem:PI}, it suffices now to establish localization bounds for $N$-particle FI balls  $\bball_{L_{k+1}}(\Bx)$, assuming, if necessary, similar bounds
\begin{itemize}
  \item for $n$-particle balls  of any radius $L_{k'}$, $k'\ge 0$, with $n<N$,
  \item for $N$-particle balls  of any radius $L_{k'}$ with $k'\le k$.
\end{itemize}

\begin{lemma}[Main inductive lemma]\label{lem:DS.mixed}
Suppose that for a given $k\ge 0$ and all $i\in[0,k]$ the following bound holds for any
$N$-particle ball $\bball^{(N)}_{L_k}(\Bu)$:
\be\label{eq:lem.DS.mixed.hyp}
\pr{ \bball^{(N)}_{L_i}(\Bu) \text{ is } \mathnloc }
\le  L_i^{-P(N,i,p)}.
\ee
Then for any $N$-particle  FI ball $\bball^{(n)}_{L_{k+1}}(\Bx)$ one  has
\be\label{eq:lem.DS.mixed.claim}
\pr{ \bball^{(N)}_{L_{k+1}}(\Bu) \text{ is } \mathnloc } \le  L_{k+1}^{-P(N,k+1,p)},
\ee
provided that
\be\label{eq:cond.p.Np.mix}
\ba
p > \frac{2\alpha^2}{2-\alpha^2}Nd, \quad
0 < 3b \le
\min\left\{\frac{2 - \alpha^2}{\alpha^2} - \frac{2Nd}{p},
\sqrt{2}-1 \right\}.
\ea
\ee
Furthermore, for any pair of distant balls $\bball^{(N)}_{L_{k}}(\Bu)$,
$\bball^{(N)}_{L_{k}}(\Bv)$, one has
\be\label{eq:lem.DS.mixed.claim.2S}
\pr{\exists\,E\in \DR:\;
\text{ $\bball^{(N)}_{L_{k}}(\Bu)$,
$\bball^{(N)}_{L_{k}}(\Bv)$ are \EmS } }
\le  L_{k}^{-P(N,k,p)}.
\ee
\end{lemma}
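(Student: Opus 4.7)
The plan is to establish the two assertions jointly by induction in $k$, deriving the two-singular-pair bound \eqref{eq:lem.DS.mixed.claim.2S} first and then using it together with the deterministic Lemma \ref{lem:NT.implies.nloc.IID} to obtain the nloc bound \eqref{eq:lem.DS.mixed.claim}. The four key ingredients are: the radial-descent/subharmonicity machinery of Section \ref{sec:deterministic.bounds}, the inductive hypothesis \eqref{eq:lem.DS.mixed.hyp} applied at all scales $L_i$ with $i\le k$, the eigenvalue-concentration bound (Theorem \ref{thm:PCT}) for pairs of distant FI balls, and the partially-interactive reduction Lemma \ref{lem:PI} for PI balls.

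For \eqref{eq:lem.DS.mixed.claim.2S}, I would fix a distant pair $\bball_1=\bball^{(N)}_{L_k}(\Bu)$, $\bball_2=\bball^{(N)}_{L_k}(\Bv)$ and apply the contrapositive of Lemma \ref{lem:loc.and.NR.imply.NS.IID}: being \EmS forces \nloc or \ER. Hence the event ``$\exists E$ making both balls \EmS'' sits inside $\{\bball_1\text{ is \nloc}\}\cup\{\bball_2\text{ is \nloc}\}\cup\{\text{both \loc and }\dist(\sigma(\BH_{\bball_1}),\sigma(\BH_{\bball_2}))\le 2e^{-L_k^\beta}\}$. The two \nloc events are bounded by \eqref{eq:lem.DS.mixed.hyp} in the FI case and by Lemma \ref{lem:PI} in the PI case (the latter giving $\tfrac14 L_k^{-P(N,k)}$). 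The simultaneous-resonance event is split by the FI/PI status of the pair: if both balls are FI, Lemma \ref{lem:FI.diam}(B) yields disjoint single-particle projections and hence independence, and Theorem \ref{thm:PCT} (via \eqref{eq:cor.PCT}) produces a stretched-exponential bound $\le e^{-L_k^{\beta'}}$; if at least one ball is PI, Lemma \ref{lem:PI} already bounds its nloc probability, while the residual ``both \loc and resonant'' sub-case is closed via the tensor decomposition $\BH_{\bball_{\rm PI}}=\BH'\otimes\one+\one\otimes\BH''$ with $n',n''<N$, conditioning on one sub-factor's spectrum and invoking a Wegner-type estimate from \Wone. Collecting these pieces gives \eqref{eq:lem.DS.mixed.claim.2S}.

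For \eqref{eq:lem.DS.mixed.claim}, the contrapositive of Lemma \ref{lem:NT.implies.nloc.IID} says: if $\bball^{(N)}_{L_{k+1}}(\Bu)$ is \nloc, then for some $E$ it contains a distant pair of \EmS $L_k$-subballs. A union bound over the $O(L_{k+1}^{2Nd})$ such pairs gives
\begin{equation*}
\pr{\bball^{(N)}_{L_{k+1}}(\Bu)\text{ is }\mathnloc}\;\le\;C\,L_{k+1}^{2Nd}\,\sup_{\text{pair}}\pr{\text{pair is 2S for some }E}.
\end{equation*}
The efficient bound for each pair is the ``squared'' version implicit in the previous paragraph: FI--FI independence yields $\lesssim L_k^{-2P(N,k)}+e^{-L_k^{\beta'}}$, and Lemma \ref{lem:PI} already carries a factor $L_k^{-2P(N,k)}$ for PI balls. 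Substituting, the inequality to verify is $C L_{k+1}^{2Nd}L_k^{-2P(N,k)}\le L_{k+1}^{-P(N,k+1)}$, which for $k=0$ reduces to $p\bigl(2-\alpha^2(1+b)^2\bigr)\ge 2Nd\alpha^2$; this is precisely what \eqref{eq:cond.p.Np.mix} ensures for $b$ small (the appearance of $\alpha^2$ in the parameter condition reflects the combination of one inductive scale step $L_k\to L_{k+1}=L_k^\alpha$ with the squaring of probabilities).

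The main obstacle is the ``both \loc and simultaneously resonant'' sub-case when at least one ball is PI. In the FI--FI case Theorem \ref{thm:PCT} applies cleanly thanks to the disjointness of single-particle projections (Lemma \ref{lem:FI.diam}(B)), but this disjointness can fail for PI pairs. The resolution exploits the tensor-product decomposition of a PI $N$-particle Hamiltonian into $n'$- and $n''$-particle factors with $n',n''<N$: conditioning on one sub-factor's randomness reduces the residual event to a Wegner-type estimate for the other sub-factor using \Wone, while the mixing hypothesis \Wtwo controls any correlation with the distant partner ball. This step also exposes the nested induction over $N$: Lemma \ref{lem:PI} itself rests on localization bounds for strictly smaller particle numbers, which must be supplied by an outer induction on $N$.
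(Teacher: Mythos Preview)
Your proposal has a genuine gap at the step where you claim a ``squared'' bound $L_k^{-2P(N,k)}$ for a fixed distant pair of \EmS\, $L_k$-balls via FI--FI independence. The decomposition you write, namely that the two-singular event is contained in $\{\bball_1\text{ is \nloc}\}\cup\{\bball_2\text{ is \nloc}\}\cup\{\text{both \ER\, for the same }E\}$, is correct, but it produces only \emph{single} \nloc\, events, each of probability $\le L_k^{-P(N,k)}$. Independence (or mixing) of the two balls does not help here, because you are bounding a \emph{union}, not an intersection: the stronger inclusion into $\{\text{both \nloc}\}\cup\{\text{both \ER}\}$ that you would need is simply false (one ball can be \nloc\, while the other is \loc\, but $E$-resonant). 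Consequently your bound for a fixed pair is only of order $2L_k^{-P(N,k)}+e^{-L_k^{\beta'}}$, and after the union bound over $O(L_{k+1}^{2Nd})$ pairs you obtain something of order $L_{k+1}^{2Nd-P(N,k)/\alpha}$, which can \emph{never} be pushed below $L_{k+1}^{-P(N,k+1)}$ since $1/\alpha<1<1+b$.

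The paper's remedy is to descend one further scale. Writing $\cR^{(2)}_k$ for the event that some distant pair of $L_k$-balls in $\bball_{L_{k+1}}(\uw)$ is simultaneously \PR, and $\cS^{(2)}_k$ for the analogous two-singular event, one works inside $\cS^{(2)}_k\setminus\cR^{(2)}_k$: there at least one of the two witnessing $L_k$-balls is \CNR; being also \EmS, the contrapositive of Lemma~\ref{lem:CNR.and.no.2.S.imply.NS}(B) forces it to be \mT, i.e.\ to contain a distant pair of \nloc\, balls \emph{of radius $L_{k-1}$}. Now one genuinely has an intersection event $\{\text{both \nloc\, at scale }L_{k-1}\}$, and mixing \Wtwo\, (in the FI--FI case, via Lemma~\ref{lem:FI.diam}(B)) or Lemma~\ref{lem:PI} (if one of the $L_{k-1}$-balls is PI) yields the squared bound $L_{k-1}^{-2P(N,k-1)}$. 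This is precisely why the hypothesis \eqref{eq:lem.DS.mixed.hyp} is required for \emph{all} $i\in[0,k]$---you only ever invoke $i=k$---and why $\alpha^2$ rather than $\alpha$ appears in \eqref{eq:cond.p.Np.mix}: it reflects $L_{k-1}=L_{k+1}^{1/\alpha^2}$, i.e.\ \emph{two} scale steps down, not ``one step plus squaring'' as you suggest (your own scheme, were the squared bound available at scale $L_k$, would produce $\alpha$, not $\alpha^2$). As a minor aside, your worry about PI pairs in the simultaneous-resonance event is unnecessary: Theorem~\ref{thm:PCT} requires only $\rdS(\Bu,\Bv)>4NL$, which follows from the balls being distant, irrespective of their FI/PI status; no separate conditional Wegner argument is needed there.
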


\proof
$\,$
Set
$$
\bal
\cN_{k+1} &= \{ \bball^{(N)}_{L_{k+1}}(\Bu) \text{ is } \mathnloc \}, \\
\cS^{(2)}_{k} &= \{\exists\, E\, \exists\, \text{ distant \EmS\, balls } \bball^{(N)}_{L_{k}}(\Bu), \bball^{(N)}_{L_{k}}(\Bv) \subset \bball^{(N)}_{L_{k+1}}(\uw) \},
\\
\cR^{(2)}_{k} &= \{\exists\, E\, \exists\, \text{ distant \PR\, balls } \bball^{(N)}_{L_{k}}(\Bu), \bball^{(N)}_{L_{k}}(\Bv) \subset \bball^{(N)}_{L_{k+1}}(\uw) \}.
\\
\eal
$$
By virtue of Lemma \ref{lem:NT.implies.nloc.IID}, we have
$
\cN_{k+1} \subset
\cR^{(2)}_k \cup \Big( \cS^{(2)}_k \setminus \cR^{(2)}_k\Big),
$
and by Theorem \ref{thm:PCT}, $\pr{\cR^{(2)}_k} \le e^{-L_k^{\beta'}}$, so that it remains to assess the probability $\pr{ \cS^{(2)}_k \setminus \cR^{(2)}_k}$. Fix points $\Bu, \Bv\in\bball^{(N)}_{L_{k+1}}(\uw)$ and introduce the event
(figuring in assertion \eqref{eq:lem.DS.mixed.claim.2S})
$$
\cS^{(2)}_k(\Bu,\Bv) = \{ \exists\, E: \text{ $\bball^{(N)}_{L_{k}}(\Bu)$ and
 $\bball^{(N)}_{L_{k}}(\Bv)$ are \EmS\, } \}.
$$
Within the event $\cS^{(2)}_k(\Bu,\Bv) \setminus \cR^{(2)}_k$, either
$\bball^{(N)}_{L_{k}}(\Bu)$ or $\bball^{(N)}_{L_{k}}(\Bv)$ must be \CNR\,; w.l.o.g., assume that $\bball^{(N)}_{L_{k}}(\Bu)$ is  \CNR. Since it is
\EmS,  Lemma \ref{lem:CNR.and.no.2.S.imply.NS} implies that it must contain a pair of $\mathnloc$ balls $\bball^{(N)}_{L_{k-1}}(\By')$, $\bball^{(N)}_{L_{k-1}}(\By'')$ with $\rd(\By', \By'')> \half L_{k-1}^{1+\delta}$.
Consider two possible situations.
\par\smallskip\noindent
\textbf{(1)} {\sl Both $\bball^{(N)}_{L_{k-1}}(\By')$ and $\bball^{(N)}_{L_{k-1}}(\By'')$ are
\emph{FI}}.

The supports $\Pi \bball_{L_{k-1}}(\By')$, $\Pi \bball_{L_{k-1}}(\By'')$ of these
of fully interactive balls  have diameter smaller than $\quart L_k^{1+\delta}$ (cf. Lemma \ref{lem:FI.diam}, assertion (A)), so for $L_0$ large enough,
$$
\ba
\dist(\Pi \bball_{L_{k}}(\Bu), \Pi \bball_{L_{k}}(\Bv) ) > L_{k}^{1+\delta} - 2 \cdot \quart L_k^{1+\delta} \ge \half L_{k}^{1+\delta}.
\ea
$$
Using the inductive assumption \eqref{eq:lem.DS.mixed.hyp} and the mixing condition \Wtwo, we can write, with $L_{k-1} = L_{k+1}^{1/\alpha^2}$ and $L_0$ large enough:
$$
\pr{ \text{ $\bball^{(N)}_{L_{k-1}}(\By')$ and $\bball^{(N)}_{L_{k-1}}(\By'')$ are $\mathnloc$  }  }
\le 2 L_{k+1}^{-\frac{2p}{\alpha^2} (1+b)^{k-1}}
$$
The number of pairs $\By', \By''$ is bounded by
$|\bball^{(N)}_{L_{k}}(\uw)|^2/2 \le \frac{3^{2Nd} }{2}  L_{k}^{2Nd}$, so in this case,
\be\label{eq:prob.S2.not.R2}
\pr{ \cS^{(2)}_k \setminus \cR^{(2)}_k} \le \frac{3^{2Nd} }{2} \, L_{k+1}^{-\frac{2p}{\alpha^2}(1+b)^{k-1} +2Nd}.
\ee
A straightforward calculation shows that for $L_0$ large enough, the RHS is bounded by
$\quart L_{k+1}^{ -P(N,k+1,p)}$, under the conditions \eqref{eq:cond.p.Np.mix}. Indeed, we need the inequality
\be\label{eq:b.p.1}
\ba
\frac{2p}{\alpha^2} 2^{N-N}p(1+b)^{k-1}  - 2Nd > 2^{N-N} p (1+b)^{k+1}
\ea
\ee
or, equivalently,
$$
\ba
(1+b)^2 < \frac{2p}{\alpha^2}  - \frac{2Nd}{ p(1+b)^{k-1}}.
\ea
$$
Observe that for $b\in(0,1)$, $(1+b)^2 < 1 + 3b$, and
$\frac{2Nd}{p} > \frac{2Nd}{ p(1+b)^{k-1}}$. Therefore, if
$
p > \frac{2\alpha^2}{2-\alpha^2}Nd
$
and
$
3b \le \frac{2 - \alpha^2}{\alpha^2} - \frac{2Nd}{p},
$
then \eqref{eq:b.p.1} is also satisfied, so that
\be\label{eq:bound.S2}
\bal
\pr{ \cS^{(2)}_k } &\le
\pr{ \cR^{(2)}_k} + \pr{ \cS^{(2)}_k \setminus \cR^{(2)}_k}
\\
&\le e^{-L_k^{\beta'}} + {\textstyle \quart} L_{k+1}^{ -P(N,k+1,p)}
< {\textstyle \half} L_{k+1}^{ -P(N,k+1,p)}
\eal
\ee
yielding the assertion \eqref{eq:lem.DS.mixed.claim} in the case (1).

\smallskip
\par\noindent
\textbf{(2)}  {\sl Either $\bball^{(N)}_{L_{k-1}}(\By')$ or $\bball^{(N)}_{L_{k-1}}(\By'')$ is PI.} Then by Lemma \ref{lem:PI},
$$
\ba
\diy
\pr{ \text{ $\bball^{(N)}_{L_{k-1}}(\By')$ and $\bball^{(N)}_{L_{k-1}}(\By'')$ is $\mathnloc$  }  } \\
\diy
\quad \le \min \left(
      \pr{ \text{ $\bball^{(N)}_{L_{k-1}}(\By')$  is $\mathnloc$  }  },
      \pr{ \text{$\bball^{(N)}_{L_{k-1}}(\By'')$ is $\mathnloc$  }  }
\right)
\\
\diy \quad
\le  \,  L_{k-1}^{- 2p (1+b)^{k-1}}
<  L_{k+1}^{- p (1+b)^{k+1}},
\ea
$$
provided that $(1+b)^2 < 2$, i.e., $b < \sqrt{2} -1$. This completes the proof.
\qedhere
\smallskip

Applying Lemma \ref{lem:L.0.Np} and Lemma \ref{lem:PI}, we come by induction to the following result.

\begin{theorem}[Localization at any scale]\label{thm:loc.ind}
Assume that $|g|$ is large enough, so that \eqref{eq:L0.GF.Np}--\eqref{eq:L0.loc.Np} hold true
with $p > \frac{2\alpha^2}{2-\alpha^2}Nd$.
Then for any $k\ge 0$ and any $\Bu\in\DZ^{Nd}$
\be\label{eq:lem.DS.mixed.hyp.2}
\pr{ \bball^{(N)}_{L_k}(\Bu) \text{ is } \mathnloc }
\le {\textstyle \quart} L_k^{-P(N,k,p)} = {\textstyle \quart} L_k^{-p(1+b)^k}
\ee
with $b>0$ satisfying  \eqref{eq:cond.p.Np.mix}.
In other words, with probability $\ge 1 - \quart L_k^{-p(1+b)^k}$, for every  eigenfunction
$\Psi_j$ of operator $\BH_{\bball^{(N)}_{L_k}(\Bu)}$ and for all
$\Bx, \By \in \bball^{(N)}_{L_k}(\Bu)$ such that
$\|\Bx - \By \|\ge L_k^{\frac{1+\delta+\rho}{\alpha}}$ one has
$$
\left| \Psi_j(\Bx) \Psi_j(\By) \right| \le e^{-\gamma(m,L_k)L_k} < e^{-m L_k}.
$$
\end{theorem}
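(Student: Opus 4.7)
The plan is to establish the bound by a nested double induction: outer on the number of particles $n \in [1,N]$, inner on the scale index $k \ge 0$. For the outer base case $n = 1$ I would invoke the single-particle scaling analysis of \cite{C11}, which supplies the required bound at every scale for $1$-particle balls. Inductively assuming the bound for every $n' < n$ at every scale, the goal then becomes to propagate the bound through all scales $L_k$ for $n$-particle balls.

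For the inner induction on $k$, the base step $k = 0$ is immediate from Lemma \ref{lem:L.0.Np}: taking $|g| \ge g_0$ so that \eqref{eq:L0.loc.Np} holds with exponent $p$ satisfying \eqref{eq:cond.p.Np.mix}, and $L_0$ large enough to absorb the $\quart$ factor. For the inner step $k \to k+1$, I would fix an arbitrary $n$-particle ball $\bball^{(n)}_{L_{k+1}}(\Bu)$ and dichotomize on its interaction geometry. If $\bball^{(n)}_{L_{k+1}}(\Bu)$ is partially interactive, then Lemma \ref{lem:PI} applied at scale $L_{k+1}$ delivers $\pr{\bball^{(n)}_{L_{k+1}}(\Bu) \text{ is } \mathnloc} \le \quart L_{k+1}^{-P(n,k+1,p)}$, with its hypothesis on smaller particle numbers $n' < n$ furnished by the outer inductive assumption. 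If instead $\bball^{(n)}_{L_{k+1}}(\Bu)$ is fully interactive, Lemma \ref{lem:DS.mixed} delivers the same bound via \eqref{eq:lem.DS.mixed.claim}, its hypothesis \eqref{eq:lem.DS.mixed.hyp} at scales $L_0, \ldots, L_k$ being exactly the inner inductive hypothesis for $n$-particle balls. Together the two cases exhaust every $n$-particle ball at scale $L_{k+1}$, closing the inner induction and hence the outer one.

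The main obstacle, already anticipated and absorbed into the auxiliary lemmas, is the arithmetic compatibility of the exponent $p$, the growth rate $b$, the scaling parameter $\alpha$, and the configuration dimension $Nd$. The hypothesis $p > \frac{2\alpha^2}{2-\alpha^2}Nd$ of the theorem is precisely what guarantees that some $b > 0$ satisfies \eqref{eq:cond.p.Np.mix} uniformly along the induction, ensuring that the target exponent $P(n,k,p) = 2^{N-n}p(1+b)^k$ grows at the required super-polynomial rate as $k\to\infty$. The concluding pointwise bound on $|\Psi_j(\Bx)\Psi_j(\By)|$ then follows directly from the definition of $\mathloc$ and from $\gamma(m,L_k) \ge m$, the latter requiring only $m \ge 1$, which is itself guaranteed by Lemma \ref{lem:L.0.Np} for $|g|$ sufficiently large.
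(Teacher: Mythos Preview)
Your proposal is correct and follows essentially the same approach as the paper, whose own proof is the single sentence preceding the theorem. You have simply made the implicit double induction (outer on particle number, inner on scale) explicit, and correctly added Lemma~\ref{lem:DS.mixed} for the fully interactive case, which the paper's terse citation of only Lemma~\ref{lem:L.0.Np} and Lemma~\ref{lem:PI} omits but clearly intends.
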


This marks the end of the direct scaling analysis of localization of eigenfunctions
in arbitrarily large finite balls.
In Section \ref{sec:SimpleDL.finite.L}, we will derive from the results of the scaling analysis uniform bounds on EF correlators in arbitrarily large finite balls.

\section{Scaling analysis: Adaptation to infinite-range interactions}
\label{sec:MPMSA.infinite.range}

In this section, we adapt a certain number of definitions and statements to rapidly decaying, infinite-range interactions satisfying the assumption \Uone.

\begin{enumerate}[\rm(S1)]
  \item Given a number $\delta\in(0,1/14)$, figuring in \Uone, set
$$
\varrho = 2\delta, \; \alpha = 1 + 4\delta, \; \tau = \delta/2,
$$
and observe that $\alpha^2<2$, $\varrho - \delta > \tau$.

  \item replace the decay exponent $\gamma(m,L)=m(1+L^{-\tau})$,
figuring in the definition \ref{def:S} (cf. Eqn \eqref{eq:def.gamma}),
by
$$
\gamma(m,L,n) := m(1+L^{-\tau})^{N-n+1}.
$$
  \item Modify the definition of distant balls as follows: balls
  $\bball_L(\Bx)$, $\bball_L(\By)$ are called distant iff
  $\rho(\Bx, \By)>C_N L^{1+\delta}$, $C_N = 4N$.
  \item Modify the definition of PI and FI balls: a ball $\bball_L(\Bx)$
  is called PI if $\diam(\Bx) = \diam(x_1, \ldots, x_N) > L^{1+\delta}$,
  and FI, otherwise.
  \item Check that if a ball $\bball^{(N)}_L(\Bx)$ is PI, then it admits
  a decomposition
$$
\bball^{(N)}_L(\Bx) = \bball^{(n')}_L(\Bx') \times \bball^{(n'')}_L(\Bx'')
$$
with $\dist( \Bx', \Bx'') > 2L^{1+\delta}$.

  \item For every positive integer $R$, introduce the following finite-range approximations $\BU_R$, $R\ge r_0$, of a given interaction $\BU$
generated by $2$-body interaction potentials $U^{(2)}(r)$:
$$
U^{(2)}_R(r)
= \one_{r\le R} \,U^{(2)}(r).
$$

  \item In order to measure the interaction between subconfigurations
  $\Bx'$, $\Bx''$ of an $N$-particle configuration $\Bx\in\cZp$, introduce the quantity
$$
\epsilon(R) := \sup_{\rd(\Bx',\Bx'')>R}
| \BU(\Bx) - \BU(\Bx') - \BU(\Bx'') |, \, \Bx=(\Bx',\Bx''),
$$
and check that for $R_k = C_N L_k^{1+\delta}$, one has
$$
\eps(R_k) < e^{-2mL_k} < \half e^{-L_k^{\beta}}.
$$
\end{enumerate}

The most important technical modification is required for the Lemma \ref{lem:instead.PITRONS},
the proof of which was very elementary, due to the use of eigenfunctions.

Next, introduce the following finite-range approximations $\BU_R$, $R\ge r_0$, of a given interaction $\BU$
generated by $2$-body interaction potentials $U^{(2)}(r)$:
$$
U^{(2)}_R(r)
= \one_{r\le R} \,U^{(2)}(r).
$$
Then it is readily seen that for any configuration $\Bx = (\Bx', \Bx'')$ with
$\dist(\Bx', \Bx'')> R$, one has $\BU_R(\Bx) = \BU_R(\Bx') + \BU_R(\Bx'')$. Furthermore, for any lattice subset of the form $\bball=\bball_{L'}(\Bx')\times\bball_{L''}(\Bx'')$
with $\dist(\bball_{L'}(\Bx'), \bball_{L''}(\Bx''))> R$, the Hamiltonian
$\BH^{(R)}_{\bball}$ admits the algebraic decomposition
$$
\BH^{(R)}_{\bball} = \BH^{(R)}_{\bball_{L'}(\Bx')} \otimes \one^{(n'')}
+ \one^{(n')} \otimes \BH^{(R)}_{\bball_{L''}(\Bx'')},
\; \Bx'\in\DZ^{dn'}, \Bx''\in\DZ^{dn''}.
$$

Next, we fix an arbitrary $m\ge 1$ and
modify the definition of a PI ball: $\bball^{(N)}_{L}(\Bx)$ is called PI iff
$\diam(\Bx) > C_N = \frac{4mN}{\tm}$,
and, therefore, it admits a decomposition
$\bball^{(N)}_{L}(\Bx)=\bball_{L'}(\Bx')\times\bball_{L''}(\Bx'')$
with $\dist( \Bx', \Bx'') > \frac{C_N}{2N}L = \frac{2mL}{\tm}$,
so that, for $R_k = \frac{C_N}{2N}L_k$, one has
$$
\eps(R_k) < e^{-2mL_k}  < \half e^{-L_k^{\beta}}.
$$

\begin{lemma}\label{lem:instead.PITRONS.decaying.33}
Assume that the interaction $\BU$ satisfies the condition \Uone. Fix an energy $E$.
Let an $N$-particle ball
$\bball^{(N)}_{L_k}(\Bu)
= \bball^{(n')}_{L_k}(\Bu_\cJ) \times \bball^{(n'')}_{L_k}(\Bu_{\cJ^c})$,
and a sample of the random potential $V(\cdot;\omega)$ be such that
\begin{enumerate}[{\rm(a)}]
  \item $\dist( \Bx', \Bx'') > R_k := \frac{C_N}{2N}L_k$;

%
  \item $\bball^{(n')}_{L_{k}}(\Bu')$ and $\bball^{(n'')}_{L_{k}}(\Bu'')$ are
  \loc;
  \item $\bball^{(N)}_{L_k}(\Bu)$ is \CNR.
\end{enumerate}
Then the ball $\bball^{(N)}_{L_k}(\Bu)$ is \EmNS.
\end{lemma}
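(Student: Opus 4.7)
The plan is to reduce to the finite-range setting of Lemma~\ref{lem:instead.PITRONS} by replacing $\BH_{\bball_{L_k}(\Bu)}$ with the truncated Hamiltonian $\BH^{(R_k)}_{\bball_{L_k}(\Bu)}$ (which does decompose as a tensor sum), and then to recover the bound for the true $\BG_\bball$ using that the dropped interaction tail satisfies $\eps(R_k)<\tfrac12 e^{-L_k^\beta}$. Hypothesis~(a) together with $R_k\ge 2L_k$ ensures that $\BU^{(R_k)}$ vanishes on all pairs straddling $\cJ$ and $\cJ^c$, while agreeing with $\BU$ inside each projection ball, so that
\begin{equation*}
\BH^{(R_k)}_{\bball_{L_k}(\Bu)} = \BH_{\bball^{(n')}_{L_k}(\Bu_\cJ)}\otimes \one + \one\otimes \BH_{\bball^{(n'')}_{L_k}(\Bu_{\cJ^c})}.
\end{equation*}
Its eigenfunctions factor as $\BPsi_{j'j''}=\Bphi_{j'}\otimes \Bpsi_{j''}$, and hypothesis~(b) yields the product decay bound of Lemma~\ref{lem:instead.PITRONS} for the $\BPsi_{j'j''}$, with decay exponents $\gamma(m,L_k,n')$, $\gamma(m,L_k,n'')$ both strictly larger than the target rate $\gamma(m,L_k,N)$ (by the $n$-graded hierarchy of~(S2)).

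Hypothesis~(c) gives $\|\BG_{\bball_{L_k}(\Bu)}(E)\|\le e^{L_k^\beta}$; combined with $\|\BH-\BH^{(R_k)}\|\le\eps(R_k)\le\tfrac12 e^{-L_k^\beta}$ this yields $\|\BG^{(R_k)}_{\bball_{L_k}(\Bu)}(E)\|\le 2e^{L_k^\beta}$. Plugging the factorized eigenfunctions into the spectral representation
\begin{equation*}
\BG^{(R_k)}_{\bball_{L_k}(\Bu)}(\Bu,\By;E) = \sum_{j',j''}\frac{\Bphi_{j'}(\Bu_\cJ)\,\overline{\Bphi_{j'}(\By_\cJ)}\,\Bpsi_{j''}(\Bu_{\cJ^c})\,\overline{\Bpsi_{j''}(\By_{\cJ^c})}}{E-E_{j'}-E_{j''}},
\end{equation*}
and using that $\rho(\Bu,\By)=\max\{\rho(\Bu_\cJ,\By_\cJ),\rho(\Bu_{\cJ^c},\By_{\cJ^c})\}=L_k$ for $\By\in\pt\bball_{L_k}(\Bu)$, one obtains for the truncated operator an $(E,m)$-NS-type bound with exponential slack: for some $n_\star<N$ and all $\By\in\pt\bball_{L_k}(\Bu)$,
\begin{equation*}
\bigl|\BG^{(R_k)}_{\bball_{L_k}(\Bu)}(\Bu,\By;E)\bigr| \le e^{-\gamma(m,L_k,N)L_k - c m L_k^{1-\tau}+ 2L_k^\beta}.
\end{equation*}

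Finally, the second resolvent identity
\begin{equation*}
\BG_\bball = \BG^{(R_k)}_\bball - \BG^{(R_k)}_\bball\,(\BU-\BU^{(R_k)})\,\BG_\bball,
\end{equation*}
combined with the pointwise decay of $\BG^{(R_k)}_\bball(\Bu,\cdot\,;E)$ — extended from the boundary to interior sites by the radial-descent machinery of Lemma~\ref{lem:cond.SubH.IID} applied to $\BH^{(R_k)}$ — the CNR resolvent-norm bound on $\BG_\bball$, and the tail bound $\eps(R_k)\le\tfrac12 e^{-L_k^\beta}$, delivers the required $(E,m)$-NS bound for $\BG_{\bball_{L_k}(\Bu)}$. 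The main obstacle is precisely this transfer step: the naive operator-norm estimate on the correction term is $O(e^{L_k^\beta})$ and much too crude to give an exponentially small bound, so one must exploit the pointwise decay of $\BG^{(R_k)}_\bball$ together with the $c m L_k^{1-\tau}$-gain of the previous step to absorb both the $e^{O(L_k^\beta)}$ factors and the $L_k^{Nd}$ combinatorial losses. The parameter choice~(S1), with $\beta<1-\tau$, together with the $(1+L^{-\tau})$-per-cluster-descent hierarchy $\gamma(m,L,n)=m(1+L^{-\tau})^{N-n+1}$ of~(S2), is engineered precisely to make this balance hold for all $L_0$ large enough.
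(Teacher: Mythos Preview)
Your overall strategy---truncate $\BU$ at range $R_k$, exploit the tensor decomposition of $\BH^{(R_k)}$, bound $\BG^{(R_k)}(\Bu,\By;E)$ via the factorized eigenfunctions and the (perturbed) non-resonance, then transfer back to $\BG$ via the second resolvent identity---is exactly the paper's, and your treatment of the main term $\BG^{(R_k)}$ is essentially correct.

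The gap is in the transfer step. You only invoke the weak consequence $\eps(R_k)\le\tfrac12 e^{-L_k^\beta}$, correctly observe that the naive operator-norm bound $\|\BG-\BG^{(R_k)}\|\le\eps(R_k)\|\BG^{(R_k)}\|\|\BG\|$ then gives merely $O(e^{L_k^\beta})$, and propose to rescue this using pointwise decay of $\BG^{(R_k)}(\Bu,\cdot)$ at interior sites. But that rescue does not close: in the correction sum $\sum_{\Bz}\BG^{(R_k)}(\Bu,\Bz)(\BU-\BU^{(R_k)})(\Bz)\BG(\Bz,\By)$, the sites $\Bz$ with $\rho(\Bu,\Bz)$ below the \loc\ threshold $L_k^{(1+\varrho)/\alpha}$ contribute no exponential smallness from $\BG^{(R_k)}(\Bu,\Bz)$, only the norm bound $\le 2e^{L_k^\beta}$. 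Combined with $|\BG(\Bz,\By)|\le e^{L_k^\beta}$ and $\eps(R_k)\le\tfrac12 e^{-L_k^\beta}$, those terms are still of order $e^{L_k^\beta}$ after summation; the $cmL_k^{1-\tau}$ slack lives in the \emph{main} term, not in this correction, so it cannot absorb the loss.

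The paper avoids this entirely by using the full strength of the interaction tail bound: under \Uone and the choice of $R_k$ recorded in (S7), one has in fact $\eps(R_k)<e^{-2mL_k}$, not merely $<\tfrac12 e^{-L_k^\beta}$. With this, the crude operator-norm estimate already yields
\[
\|\BG-\BG^{(R_k)}\|\le e^{-2mL_k}\cdot e^{2L_k^\beta}\le \tfrac12\, e^{-\gamma(m,L_k,N)L_k},
\]
since $2mL_k-2L_k^\beta>m(1+L_k^{-\tau})L_k$ for $L_0$ large. No radial-descent argument on interior sites is needed: once you replace your weak tail bound by the actual one, the ``obstacle'' you identify disappears and the proof becomes a one-line perturbation.
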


\proof
Set $R_k = \frac{C_N}{2N}L_k$ and define $\BU^{(R_k)}$ and $\BH^{(R_k)}$ as above.
Denote
$$
\BH^{(R_k,N)}_{\B{u},k}=\BH^{(R_k)}_{\bball_{L_{k}}^{(N)}(\B{u})},\;
\BG(E) = (\BH_{\bball_{L_{k}}^{(N)}(\B{u})} - E)^{-1},\;
\BG^{(R_k)}(E) = (\BH^{(R_k,N)}_{\B{u},k} - E)^{-1}.
$$
Operator $\BH^{(R_k,N)}_{\B{u},k}$ admits the decomposition
$$
\BH^{(R_k,N)}_{\B{u},k} =
\BH_{\bball^{(n')}_{L_{k}}(\Bu')} \otimes \one^{(n'')}
+ \one^{(n')} \otimes \,\BH_{\bball^{(n')}_{L_{k}}(\Bx'')},
\; \Bu=(\Bu',\Bu'')\in\DZ^{dn'}\times\DZ^{dn''}
$$
thus its eigenvalues are the sums $E_{a,b}=\lam_a+\mu_b$, where
$\{\lam_a\}=\Sigma(\BH_{\bball_{L_{k}}(\Bu')})$ is the spectrum of
$\BH_{\bball_{L_{k}}(\Bu')}$ and, respectively,  $\{\mu_b\}=\Sigma(\BH_{\bball_{L_{k}}(\Bu'')})$.
Eigenvectors of $\BH^{(R_k,N)}_{\B{u},k}$ can be chosen in the form
$
\BPsi_{a,b} = \Bphi_a \otimes \Bpsi_b
$
where $\{\Bphi_a\}$ are eigenvectors of $\BH_{\bball^{(n')}_{L_{k}}(\Bu')}$ and
$\{\Bpsi_b\}$ are eigenvectors of $\BH_{\bball^{(n'')}_{L_{k}}(\Bu'')}$.
Note that for each eigenvalue $E_{a,b}=\lam_a+\mu_b$, i.e., for each pair $(\lam_a,\mu_b)$, the non-resonance assumption $|E - (\lam_a + \mu_b)|\ge e^{L_k^\beta}$
reads as $|(E -\lam_a) - \mu_b)|\ge e^{L_k^\beta}$ and also as
$|(E -\mu_b) - \lam_a)|\ge e^{L_k^\beta}$. Therefore, we can write
\be
\bal
\BG^{(R_k)}(\Bu,\By; E) &= \sum_{\lam_a} \sum_{\mu_b}
\frac{ \Bphi_a(\Bu') \Bphi_a(\By')\, \Bpsi_b(\Bu'') \Bpsi_b(\By'')\, }
{ (\lam_a + \mu_b) - E}
\\
& = \sum_{\lam_a} \BP'_a(\Bu',\By') \, \BG^{(R_k)}_{\bball_{L_k}(\Bu'')}(\Bu'',\By''; E-\lam_a)
\\
& = \sum_{\mu_b}  \BP''_b(\Bu'',\By'') \, \BG^{(R_k)}_{\bball_{L_k}(\Bu')}(\Bu',\By'; E-\mu_b),
\eal
\ee
where the resolvents $\BG^{(R_k)}_{\bball_{L_k}(\Bu')}(E-\mu_b)$ and
$\BG^{(R_k)}_{\bball_{L_k}(\Bu'')}(E-\lam_a)$ are non-resonant:
$$
\| \BG^{(R_k)}_{\bball_{L_k}(\Bu')}(E-\mu_b) \| \le e^{L_k^{\beta}},
\;\; \| \BG^{(R_k)}_{\bball_{L_k}(\Bu'')}(E-\lam_a) \| \le e^{L_k^{\beta}}.
$$
For any $\By\in\pt^- \bball_{L_k}(\Bu)$, either
$\rho(\Bu',\By') = L_k$, in which case we infer from  (b)
\be
\big| \BG^{(R_k)}(\Bu,\By; E) \big| \le |\bball_{L_k}(\Bu'')|\, e^{-\gamma(m,L_k,N-1)L_k + L_k^{\beta}}
\ee
or $\rho(\Bu'',\By'') = L_k$, and then we have, respectively,
$$
\big| \BG^{(R_k)}(\Bu,\By; E) \big| \le |\bball_{L_k}(\Bu')|\, e^{-\gamma(m,L_k,N-1)L_k + L_k^{\beta}}.
$$
In either case, the LHS is bounded by
$$
\ba
\exp\left( -m(1+L_k^{-\tau})^{N-(N-1)+1} L_k + L_k^{\beta} + \Const \ln L_k \right)
< \half e^{ -\gamma(m,L_k,N)}.
\ea
$$
Next, by the second resolvent identity,
$
\BG = \BG^{(R_k)} - \BG^{(R_k)} (\BU - \BU^{(R_k)}) \BG,
$
and using the assumed \NR\, property of the resolvents $\BG$, $\BG^{(R_k)}$, we conclude that
$$
\ba
\|\BG - \BG^{(R_k)} \|  \le \|\BU - \BU^{(R_k)}\| \, \|\BG^{(R_k)}\| \, \| \BG \|
 \le e^{-2mL_k}  \, e^{2L_k^{\beta}} \le \half e^{-\gamma(m,L_k,N)} .
\ea
$$
Now the claim follows from the inequality
$$
\ba
\forall\, \Bx,\By\in \bball_{L_{k+1}}^{(N)}(\B{u}) \quad
\big| \BG (\Bx, \By;E) - \BG^{(R_k)} (\Bx, \By;E) \big| \le \half e^{ -\gamma(m,L_k,N)}.
 \quad{}_{\qed}
\ea
$$

\begin{lemma}\label{lem:two.PI.decaying}
Let $\bball_{L_{k+1}}(\Bx), \bball_{L_{k+1}}(\By)$ be two
distant {\rm PI} balls. Then, for $L_0$ large enough,
\be
\ba
\pr{ \text{ $\bball_{L_{k+1}}(\Bx)$ and $\bball_{L_{k+1}}(\By)$ are \EmS\,  } }
< \frac{1}{4} L_{k+1}^{-P(N,k+1)}.
\ea
\ee
\end{lemma}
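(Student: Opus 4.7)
The plan is to reduce the event ``both PI balls are simultaneously \EmS\ at a common energy'' to a union of two kinds of bad events, one controlled by the inductive hypothesis on smaller particle numbers (via Lemma \ref{lem:PI}), the other by the eigenvalue concentration bound of Theorem \ref{thm:PCT}. Write $\bball_1=\bball^{(N)}_{L_{k+1}}(\Bx)$, $\bball_2=\bball^{(N)}_{L_{k+1}}(\By)$, and denote by $\cE$ the event in the statement. Since each $\bball_i$ is PI, by (S4)--(S5) it admits a canonical decomposition $\bball_i=\bball^{(n'_i)}_{L_{k+1}}(\Bx'_i)\times\bball^{(n''_i)}_{L_{k+1}}(\Bx''_i)$ with $\dist(\Bx'_i,\Bx''_i)>2L_{k+1}^{1+\delta}$, so hypothesis (a) of Lemma \ref{lem:instead.PITRONS.decaying.33} is automatic at radius $R_{k+1}$.

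Let $A_i$ be the ($E$-independent) event that at least one of the two projection balls of $\bball_i$ is \nloc, and let $D_i(E)$ be the event that $\bball_i$ is \PR\ at energy $E$. The contrapositive of Lemma \ref{lem:instead.PITRONS.decaying.33} then yields the key inclusion
$$
\cE \;\subset\; A_1 \,\cup\, A_2 \,\cup\, \bigcup_{E\in\DR}\bigl(D_1(E)\cap D_2(E)\bigr).
$$
Indeed, on $A_1^c\cap A_2^c$, if both balls were \EmS\ at a common $E$, the lemma would force each to be \PR\ at that same $E$.

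First I would bound $\pr{A_i}$. By the approximate tensor factorization of eigenfunctions on a PI ball (established via the second resolvent identity with the $\eps(R_{k+1})\le\tfrac12 e^{-L_{k+1}^\beta}$ control of (S7), exactly as in the proof of Lemma \ref{lem:instead.PITRONS.decaying.33}), the event $A_i$ coincides, up to null events, with $\bball_i$ being \nloc. Applying Lemma \ref{lem:PI} at scale $L_{k+1}$ under the inductive hypotheses for $n'_i,n''_i<N$ gives $\pr{A_i}\le 2L_{k+1}^{-2P(N,k+1)}$ for $i=1,2$. Next I would bound $\pr{\bigcup_E D_1(E)\cap D_2(E)}$: since the balls are distant in the sense $\rho(\Bx,\By)>C_N L_{k+1}^{1+\delta}\gg 4N L_{k+1}$, the corollary \eqref{eq:cor.PCT} of Theorem \ref{thm:PCT} yields $\pr{\bigcup_E D_1(E)\cap D_2(E)} \le e^{-L_{k+1}^{\beta'}}$. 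Combining via the union bound,
$$
\pr{\cE}\;\le\; 4L_{k+1}^{-2P(N,k+1)} + e^{-L_{k+1}^{\beta'}} \;\le\; \tfrac14 L_{k+1}^{-P(N,k+1)}
$$
for $L_0$ large enough, since $P(N,k+1)>0$ and the stretched-exponential term decays faster than any polynomial.

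The main obstacle is Step~1, namely transferring the ``tensor decomposition'' viewpoint of Lemma \ref{lem:instead.PITRONS} (used in the short-range proof of Lemma \ref{lem:PI}) to the decaying-interaction setting. This requires tracking how the $\eps(R_{k+1})$-perturbation of the factorized Hamiltonian affects the localization bound \eqref{eq:def.mloc}: the sharper decay exponent $\gamma(m,L,n)=m(1+L^{-\tau})^{N-n+1}$ introduced in (S2) provides precisely the slack needed to absorb the perturbation at each inductive step, so the bound of Lemma \ref{lem:PI} transfers verbatim once this slack is accounted for.
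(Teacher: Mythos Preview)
Your overall decomposition is correct and matches the paper's proof: one introduces the event that one of the four projection balls is \nloc\ (your $A_1\cup A_2$, the paper's $\cL^{(4)}$), the event that both balls are simultaneously $E$-PR at some $E$ (your $\bigcup_E D_1(E)\cap D_2(E)$, the paper's $\cR^{(2)}$), and shows via Lemma~\ref{lem:instead.PITRONS.decaying.33} that the bad event $\cE$ is contained in their union. The second piece is bounded by Theorem~\ref{thm:PCT} exactly as you say.

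However, your treatment of $\pr{A_i}$ contains a genuine confusion. You claim that ``$A_i$ coincides, up to null events, with $\bball_i$ being \nloc'' and then invoke Lemma~\ref{lem:PI}. This is both wrong and unnecessary. Only one implication holds (if $\bball_i$ is \nloc\ then one projection is \nloc, in the finite-range setting), and Lemma~\ref{lem:PI} itself relies on Lemma~\ref{lem:instead.PITRONS}, which is stated only for finite-range interactions and has no eigenfunction analogue in Section~\ref{sec:MPMSA.infinite.range}. Your ``main obstacle'' --- extending the tensor factorization of eigenfunctions to the decaying case --- is therefore a phantom: the paper never attempts this. Instead, it bounds $\pr{A_1\cup A_2}$ \emph{directly} from the inductive hypothesis on $n$-particle systems with $n<N$: each projection ball has $n'_i$ or $n''_i<N$ particles, so $\pr{\text{projection is \nloc}}\le L_{k+1}^{-P(N-1,k+1)}=L_{k+1}^{-2P(N,k+1)}$, whence $\pr{A_1\cup A_2}\le 4L_{k+1}^{-2P(N,k+1)}$. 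Drop the detour through Lemma~\ref{lem:PI} and the perturbative eigenfunction argument; the bound on $A_i$ needs nothing about the $N$-particle ball at all.
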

\proof
Let
$$
\bal
\cR^{(2)}
  &= \{ \exists\, E:\,  \text{ $\bball_{L_{k+1}}(\Bx)$ and $\bball_{L_{k+1}}(\By)$ are \ER\,  } \}
\\
\cS^{(2)}
  &= \{   \text{ $\bball_{L_{k+1}}(\Bx)$ and $\bball_{L_{k+1}}(\By)$ are \EmS\, } \}
\\
\eal
$$
Then $\pr{ \cS^{(2)} } \le \pr{ \cR^{(2)} } + \pr{ \cS^{(2)}\setminus\cR^{(2)} }$, and within
the event $\cS^{(2)}\setminus\cR^{(2)}$ one of the balls  $\bball_{L_{k+1}}(\Bx)$,
$\bball_{L_{k+1}}(\By)$ must be \NR, no matter how $E\in\DR$ is chosen.

Next, consider the canonical decomposition of the PI ball
$$
\bball_{L_{k+1}}(\Bx)=\bball_{L_{k+1}}(\Bx')\times\bball_{L_{k+1}}(\Bx'')
$$
By Lemma
\ref{lem:instead.PITRONS.decaying}, if both $\bball_{L_{k+1}}(\Bx')$ and
$\bball_{L_{k+1}}(\Bx'')$ are \loc\, and $\bball_{L_{k+1}}(\Bx)$ is \EmS, then
$\bball_{L_{k+1}}(\Bx)$ is \ER, thus $\bball_{L_{k+1}}(\By)$ is \NR. On the other hand,
consider  the canonical decomposition of the PI ball
$$
\bball_{L_{k+1}}(\By)=\bball_{L_{k+1}}(\By')\times\bball_{L_{k+1}}(\By'').
$$
If both $\bball_{L_{k+1}}(\By')$ and
$\bball_{L_{k+1}}(\By'')$ are \loc\, and $\bball_{L_{k+1}}(\By)$ is \EmS, then
$\bball_{L_{k+1}}(\By)$ must be \ER. Let
$$
\cL^{(4)} =
\{
\text{ one of the balls  } \bball_{L_{k+1}}(\Bx'), \bball_{L_{k+1}}(\Bx''),
\bball_{L_{k+1}}(\By'), \bball_{L_{k+1}}(\By'') \text{ is \nloc }
\}.
$$
Since these four balls  correspond to systems with  $\le N-1$ particles, we can use induction in $N$ and write
$$
\pr{ \cL^{(4)} } \le 4 L_{k+1}^{-P(N-1,k+1)} = 4 L_{k+1}^{-2P(N,k+1)}.
$$
Then, as we have noticed, $\cS^{(2)}\setminus\cL^{(4)} \subset\cR^{(2)}$, hence,
for $L_0$ large enough,
$$
\ba
\pr{ \cS^{(2)} } \le \pr{ \cR^{(2)} } + \pr{ \cL^{(4)} }
\le  e^{-L_k^{\beta}} + 4 L_{k+1}^{-2P(N,k+1)}
< \frac{1}{4} L_{k+1}^{-P(N,k+1)}.
\ea
$$
\qedhere

The statement of Lemma \ref{lem:CNR.and.no.2.S.imply.NS} remains unchanged, but its proof
requires a minor modification.
\begin{proof}(\emph{Lemma \ref{lem:CNR.and.no.2.S.imply.NS} under the hypothesis \Uone.})
\noindent
(A) By assumption, either $ \bball_{L_{k+1}}(\Bu)$ is \CNR\, and contains no \EmS\, ball
of radius $L_{k}$, or there is a point $\Bw \in\bball_{L_{k+1}}(\Bu)$ such that
any ball $\bball_{L_{k}}(\Bv)\subset\bball_{L_{k+1}}(\Bu)$ with
$\rho(\Bw,\Bv)\ge C_N L^{1+\delta}_{k}$
is \EmNS. In the former case, such an exclusion is unnecessary, but in order to treat
both situations with one argument, we can formally set $\Bw = \Bu$ (or any other point).

Fix points $\Bx,\By$ with
$R:=\rho(\Bx,\By)>L_{k}^{1+\varrho}$.
By triangle inequality,
$$
\textstyle
\rho(\Bx, \bball_{(C_N - 1) L_{k}}(\Bw))  +
\rho(\By, \bball_{ (C_N - 1) L_{k}}(\Bw))
\ge R - (2C_N - 2) L^{1+\delta}_{k}.
$$
Assume first that
$$
\bal
r'  &:= \rho(\Bx, \bball_{ (C_N-1) L^{1+\delta}_{k}}(\Bw)) \ge  L_{k} +1,
\\
r'' &:= \rho(\By, \bball_{(C_N - 1) L^{1+\delta}_{k}}(\Bw)) \ge  L_{k} +1.
\eal
$$
All balls  of radius $L_{k}$ both in $\bball_{r'}(\Bx)$ and in $\bball_{r''}(\By)$ are
automatically \EmNS.
Furthermore,
$r' + r'' \ge R - 2(C_N-1)L^{1+\delta}_{k} - 2
\ge R - 2C_N L^{1+\delta}_{k}$.

Consider the set
$\bball =  \bball_{r'}(\Bx)  \times \bball_{r''}(\By)$ and the function  $f: \bball \to \DC$ defined by
$f(\Bx', \Bx'') = \BG_{\bball_{L_{k+1}}}(\Bx', \Bx''; E)$.
Since $E$ is not a pole of the resolvent $\BG_{\bball_{L_{k+1}}(\Bu)}(\cdot)$, it is well-defined
(hence, bounded, on a finite set).
By Lemma \ref{lem:cond.SubH.IID}, $f$ is  $(L_{k}, q)$-subharmonic both in $\Bx'$ and in $\Bx''$, with
$q \le e^{-\gamma(m, L_{k},n)}$.
Therefore, one can write, with the convention $-\ln 0 = +\infty$, using Lemma \ref{lem:BiSubH.IID} and setting for brevity $J := M - n+1$:
\be\label{eq:proof.lem.CNR.and.no.2.S.imply.NS}
\bal
-\ln f(\Bu, \By)
&\ge
-\ln \Bigg[ \left(e^{ -m(1+ \half L_{k}^{-\tau})^J L_{k} }
\right)^{\frac{R - 2C_N L^{1+\delta}_{k}- 2 L_{k}  }{L_{k}+1}}
e^{L_{k+1}^\beta}
\Bigg]
\\
& = m\big(1+ {\textstyle \half } L_{k}^{-\tau} \big)^J
    {\textstyle \frac{L_{k}}{ L_{k} + 1} } R\left(1 - 3C_N R^{-1} L_{k} \right)
    - L_{k+1}^\beta
\\
&
\textstyle
 = mR \left[ \left(1+ {\textstyle \half } L_{k}^{-\tau} \right)^J
   (1 -  L^{-1}_{k})  \left(1 - 3C_N L_{k}^{-\varrho } \right)
    - \frac{L_{k+1}^\beta}{mR} \right]
\\
&\ge m\left(1+ {\textstyle \quart} L_{k}^{-\tau} \right)^J R
\ge \gamma(m,L_{k+1}) R.
\eal
\ee
If $r'=0$ (resp., $r''=0$), the required bound follows from the subharmonicity of the function $f(\Bx',\Bx'')$ in $\Bx''$ (resp., in $\Bx'$).

%
\par\smallskip\noindent
(B) This assertion is proved in the same way as in Section \ref{ssec:tun.and.loc.finite.balls}.
\end{proof}

\vskip1mm
\emph{ The rest of the scaling procedure presented in Section \ref{sec:MPMSA.finite.range} does not have to be modified and applies to infinite-range interactions.}

\section{Strong dynamical localization in finite volumes}
\label{sec:SimpleDL.finite.L}

\subsection{Uniform bounds in finite volumes}

We use here a finite-volume variant of a `soft' argument proposed by Germinet and Klein
in \cite{GK01}; working with finite cubes allows to avoid a functional-analytic complement
concerning the weighted Hilbert-Schmidt norms of spectral projections of operators
$\BH(\om)$ in the entire graph $\cZp$ and replace it with a simple application of Bessel's inequality.

Denote by $\csB_1(I)$ the set of all Borel functions $\phi:\DR\to\DC$ with
$\supp\,\phi\subset I$ and $\|\phi\|_\infty \le 1$.

\begin{theorem}\label{thm:GK}
 Fix an integer $L\in\DN^*$ and assume that  the following bound holds for any pair of disjoint balls $\bball_L(x), \bball_L(y)$:
$$
\pr{ \exists\, E\in I:\, \text{ $\bball_L(\Bx)$ and $\bball_L(\By)$ are $(E,m)$-S} } \le f(L).
$$
Then for any $\Bx,\By\in\cZp$ with $\rd(\Bx,\By)> 2L+1$, any connected subset
$\BLam\supset\bball_L(x) \cup \bball_L(y)$ and any Borel function $\phi\in\csB_1(I)$
\be\label{eq:thm.MSA.to.DL}
\esm{ \langle \big|\one_{\Bx} | \phi(\BH_\BLam(\om)) | \one_{\By} \rangle \big| }
\le CL^d \eul^{-mL} + f(L).
\ee
\end{theorem}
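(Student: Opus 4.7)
The plan is to split $\Om$ into the MSA-produced ``good'' event and its complement, bound the expectation trivially on the complement, and reduce the good part to a deterministic estimate via an eigenfunction expansion plus a Bessel-type summation. Set
$$
\Om^{*} = \big\{\om:\; \text{for every } E\in I\text{ at least one of }\bball_L(\Bx),\bball_L(\By)\text{ is }\EmNS\big\},
$$
so by hypothesis $\pr{\Om\setminus\Om^{*}}\le f(L)$. On $\Om\setminus\Om^{*}$ I would just use the uniform bound $|\langle\one_\Bx|\phi(\BH_\BLam(\om))|\one_\By\rangle|\le\|\phi\|_\infty\le 1$, which contributes at most $f(L)$ to the expectation and matches the last term in \eqref{eq:thm.MSA.to.DL}.

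On $\Om^{*}$ the argument becomes deterministic. Expand $\phi(\BH_\BLam(\om))$ in an orthonormal eigenbasis $\{\BPsi_j\}$ of $\BH_\BLam(\om)$ with eigenvalues $\{E_j\}$; only indices $j$ with $E_j\in I$ contribute because $\supp \phi\subset I$. For each such $j$, the definition of $\Om^{*}$ designates at least one of the two balls as \EmNS, yielding a partition $J_x\cup J_y$ of the contributing indices. For $j\in J_x$ the \EmNS property forces $E_j\notin\sigma(\BH_{\bball_L(\Bx)})$, so restricting the equation $\BH_\BLam\BPsi_j=E_j\BPsi_j$ to $\bball_L(\Bx)$ produces the Poisson-type representation
$$
\BPsi_j(\Bx)=-\sum_{(\Bu,\Bv)\in\pt\bball_L(\Bx)}\BG_{\bball_L(\Bx)}(\Bx,\Bu;E_j)\,\BPsi_j(\Bv),
$$
and combining with the \EmNS Green-function bound and $|\pt\bball_L(\Bx)|\le CL^{d-1}$ gives
$$
|\BPsi_j(\Bx)|\le CL^{d-1}\,e^{-\gamma(m,L)L+2L^\beta}\max_{\Bv\in\pt^+\bball_L(\Bx)}|\BPsi_j(\Bv)|.
$$

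Plugging this into $\sum_{j\in J_x}|\BPsi_j(\Bx)\BPsi_j(\By)|$, dominating the maximum over boundary points by a sum, and invoking Cauchy--Schwarz together with the Parseval identity $\sum_j|\BPsi_j(\Bv)|^2=1$ (valid because $\{\BPsi_j\}$ is an orthonormal basis of $\ell^2(\BLam)$), I get $\sum_j|\BPsi_j(\Bv)\BPsi_j(\By)|\le 1$ for each boundary point, hence
$$
\sum_{j\in J_x}|\BPsi_j(\Bx)\BPsi_j(\By)|\le CL^{2d-1}\,e^{-\gamma(m,L)L+2L^\beta},
$$
with the symmetric bound for $J_y$. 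Writing $\gamma(m,L)L=mL+mL^{1-\tau}$ with $\beta<1-\tau$, the polynomial prefactor and the $e^{2L^\beta}$ correction are absorbed into the main exponential to give the deterministic bound $CL^d e^{-mL}$ on $\Om^{*}$. Combining both events yields the claimed $\esm{|\langle\one_\Bx|\phi(\BH_\BLam(\om))|\one_\By\rangle|}\le CL^d e^{-mL}+f(L)$.

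There is no substantial obstacle; the only mild subtlety is that the partition $J_x,J_y$ is sample-dependent, but this dependence is immaterial because the final Bessel step is uniform in $\om$. This is precisely the finite-volume version of the Germinet--Klein argument announced earlier: the direct Parseval/Bessel inequality on $\BLam$ replaces the weighted Hilbert--Schmidt trace-class control otherwise needed on the infinite graph $\cZp$.
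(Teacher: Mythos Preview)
Your proposal is correct and follows essentially the same route as the paper: split $\Om$ into the good event $\Om^*$ and its complement, bound trivially on the complement, and on $\Om^*$ expand in the eigenbasis, push one factor of $|\BPsi_j|$ to the boundary via the GRI/Poisson representation with the \EmNS gain, then sum using Bessel/Parseval. The only cosmetic difference is that the paper replaces your Cauchy--Schwarz step by the elementary bound $|ab|\le\tfrac12(|a|^2+|b|^2)$ before applying Bessel's inequality, which yields the same polynomial-times-$e^{-mL}$ estimate.
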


\proof Fix points $\Bx,\By\in\cZp$ with $\rd(\Bx,\By)> 2L+1$ and a finite connected graph
$\BLam\supset\bball_L(\Bx) \cup \bball_L(\By)$. The operator $H_\BLam(\om)$
has a finite orthonormal eigenbasis $\{\psi_i\}$ with respective eigenvalues
$\{\lam_i\}$. Set $\BS = \pt \bball_L(\Bx) \cup \pt \bball_L(\By)$ (recall: this is a set of \emph{pairs}
$(u,u')$). Suppose that for some $\om$, for each $i$ there is $\Bz\in \{\Bx,\By\}$ such that
$\bball_L(\Bz_i)$ is $\lam_i,m)$-NS; let $\{\Bv_i \}= \{\Bx,\By\}\setminus \{\Bz_i\}$.
Denote
$\mu_{\Bx,\By}(\phi) = \langle \big|\one_{\Bx} | \phi(H_\BLam(\om)) | \one_{\By} \rangle \big|$,
with $\mu_{\Bx,\By}(\phi)\le 1$.
Then by the GRI for eigenfunctions,
and by Bessel's inequality used at the last stage of derivation,
$$
\bal
 \mu_{\Bx,\By}(\phi)
& \le \|\phi\|_\infty \, \sum_{\lam_i \in I} |\BPsi_i(\Bx) \BPsi_i(\By)|
\le \sum_{\lam_i \in I} |\BPsi_i(\Bz_i) \BPsi_i(\Bv_i)|
\\
& \le \sum_{\lam_i \in I} |\BPsi_i(\Bv_i)| \eul^{-mL}
      \sum_{(\Bu,\Bu')\in\pt \bball_L(\Bz_i)} |\BPsi_i(\Bu)|
\qquad\qquad\qquad\qquad\qquad\qquad
\\
& \le \eul^{-mL} \sum_{\lam_i \in I} \;\sum_{(\Bu,\Bu')\in \BS}
      |\BPsi_i(\Bu)| \left(|\BPsi_i(\Bx)| + |\BPsi_i(\By)|  \right)
%
\eal
$$
$$
\bal
&\le \eul^{-mL} |\BS| \, \max_{\Bu\in\cG} \sum_{\lam_i \in I}
    \half \left( |\BPsi_i(\Bu)|^2 + |\BPsi_i(\Bx)|^2 + |\BPsi_i(\By)|^2 \right)
\\
& \le  \eul^{-mL} \frac{|S|}{2} \, \max_{\Bu\in\BLam}
       \left( 2\|\one_\Bu\|^2 + \|\one_\Bx\|^2 + \|\one_\By\|^2 \right)
%
= \eul^{-mL} |\BS| \cdot 2
\eal
$$
where $|\BS|\le C L^d$.
Denote
$\cN_L = \myset{\exists\, E\in I:\, \text{ $\bball_L(\Bx)$ and $\bball_L(\By)$ are $(E,m)$-S}}$,
with $\pr{\cN_L}\le f(L)$, by assumption. Further,
$$
\esm{ \mu_{\Bx,\By}(\phi) } = \esm{ \one_{\cN_L}\mu_{\Bx,\By}(\phi) }
   + \esm{ \one_{\cN_L}\mu_{\Bx,\By}(\phi) }  \le f(L) + 2CL^d \eul^{-mL}.
\qedhere
$$

\qedhere
\vskip2mm

\subsection{Strong dynamical localization in the entire graph $\cZp$}
\label{ssec:SimpleDL.lattice}

Here we follow the same path as in earlier works by Aizenman et al. (cf., e.g.,
\cite{A94}, \cite{ASFH01}).

\begin{theorem}\label{thm:DL.lattice.IID}
Consider the Hamiltonian $\BH(\om)$ of the form \eqref{eq:H} with random potential satisfying the assumptions \Wone--\Wthree\, and the interaction potential satisfying one of the assumptions \Uzero, \Uone. Fix an interval $I\subset\DR$.  There is $g_0<+\infty$ such that
if $|g|\ge g_0$, then for all $x,y\in\DZ^d$, $x\ne y$,  and with the same $c, a>0$ as in \eqref{eq:c.a},
\be\label{eq:DL.bound.any.s.lattice.IID}
\esm{ \sup_{\|\phi\|_{\infty}\le 1} \,
\Big| \langle \one_{\Bx} \,|\, \phi(\BH_{}(\omega)) \, P_I(\BH_{}(\omega)) \,|\, \one_{\By}\rangle \Big| }
 \le \Const\, \eul^{-a \ln^{1+c} \rd(\Bx,\By) }.
\ee
\end{theorem}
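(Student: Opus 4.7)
The proposal is to chain three ingredients already developed in the paper: the double-singularity probability from the scaling analysis (Lemma \ref{lem:DS.mixed} and Theorem \ref{thm:loc.ind}), the finite-volume ``soft'' bound of Germinet--Klein type stated as Theorem \ref{thm:GK}, and a strong-resolvent limit to remove the finite-volume cutoff. The starting observation is that $\phi(\BH)P_I(\BH)=(\phi\cdot\one_I)(\BH)$ with $\phi\cdot\one_I$ Borel, bounded by $1$, and supported in $I$; so the supremum over $\phi$ is exactly a supremum over $\csB_1(I)$, which in finite volume coincides with $\sum_{\lam_i\in I}|\BPsi_i(\Bx)\BPsi_i(\By)|$. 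This is why Theorem \ref{thm:GK} is the right ``last-scale'' tool.

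The first step is to convert the output of the scaling analysis into a usable probability $f(L)$. By Theorem \ref{thm:loc.ind} combined with the distant-pair bound \eqref{eq:lem.DS.mixed.claim.2S} in Lemma \ref{lem:DS.mixed}, for every $k\ge 0$ and every pair of distant balls $\bball^{(N)}_{L_k}(\Bx),\bball^{(N)}_{L_k}(\By)$,
\begin{equation*}
\pr{\exists\,E\in\DR:\ \text{both balls are \EmS}}\le L_k^{-p(1+b)^k}.
\end{equation*}
Since $L_k=L_0^{\alpha^k}$, one has $\ln L_k=\alpha^k\ln L_0$ and hence, with $c=\log(1+b)/\log\alpha>0$ and a suitable $a>0$,
\begin{equation*}
L_k^{-p(1+b)^k}=\exp\!\big(-p(1+b)^k\ln L_k\big)\le \exp\!\big(-a(\ln L_k)^{1+c}\big).
\end{equation*}
Restricting to $E\in I$ only makes the event smaller, so the same bound holds.

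In the second step I fix $\Bx,\By$ with $R:=\rd(\Bx,\By)$ large, choose the unique $k$ with $L_k\le R/(2C_N+3)<L_{k+1}$ (so that $\bball_{L_k}(\Bx)$ and $\bball_{L_k}(\By)$ are distant in the sense of Definition \ref{def:distant}) and apply Theorem \ref{thm:GK} with $L=L_k$ and $f(L_k)=L_k^{-p(1+b)^k}$. For every finite connected $\BLam\supset\bball_{L_k}(\Bx)\cup\bball_{L_k}(\By)$ this gives
\begin{equation*}
\esm{\sup_{\phi\in\csB_1(I)}\bigl|\langle\one_{\Bx}|\phi(\BH_\BLam(\om))|\one_{\By}\rangle\bigr|}
\le CL_k^{Nd}\eul^{-mL_k}+L_k^{-p(1+b)^k}\le \Const\,\eul^{-a\ln^{1+c}R},
\end{equation*}
after absorbing the first term (which decays exponentially in $L_k$, hence much faster than the second) and adjusting $a,c$. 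The bound is uniform in $\BLam$.

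The third step is the removal of $\BLam$. Take any exhaustion $\BLam_n\nearrow\cZp$. Because $\BH(\om)$ is a bounded, finite-range operator on $\ell^2(\cZp)$, the operators $\BH_{\BLam_n}(\om)$ converge to $\BH(\om)$ in strong resolvent sense, and consequently $\phi(\BH_{\BLam_n})\to\phi(\BH)$ strongly for every bounded continuous $\phi$; matrix elements therefore converge pointwise in $\om$. For a general $\phi\in\csB_1(I)$, approximate $\phi\cdot\one_I$ in the $L^1$ sense of the spectral measures at the vectors $\one_\Bx,\one_\By$ by continuous compactly supported functions, using the spectral theorem and dominated convergence; the supremum over $\phi$ is lower-semicontinuous along such an approximation, so Fatou's lemma transfers the uniform finite-volume bound from the previous step to
\begin{equation*}
\esm{\sup_{\|\phi\|_\infty\le 1}\bigl|\langle\one_{\Bx}|\phi(\BH(\om))P_I(\BH(\om))|\one_{\By}\rangle\bigr|}\le \Const\,\eul^{-a\ln^{1+c}\rd(\Bx,\By)}.
\end{equation*}

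The step I expect to be the most delicate is the measurability and passage-to-the-limit of the \emph{supremum} over the non-separable family $\csB_1(I)$: one must justify that the supremum over Borel $\phi$ is measurable and that Fatou applies. The standard way out is to reduce to a countable dense subfamily using the eigenfunction representation in finite volume, where $\sup_{\phi\in\csB_1(I)}|(\phi(\BH_\BLam))_{\Bx,\By}|=\sum_{\lam_i\in I}|\BPsi_i(\Bx)\BPsi_i(\By)|$; this is precisely the object Theorem \ref{thm:GK} already bounds, so the issue never arises at finite volume, and in the infinite-volume limit it is settled by monotone approximation of $P_I$ by continuous functions.
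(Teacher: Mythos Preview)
Your proposal is correct and follows essentially the same route as the paper: obtain the two-ball singularity probability $f(L_k)=L_k^{-p(1+b)^k}\le e^{-a\ln^{1+c}L_k}$ from the scaling analysis, feed it into the finite-volume Germinet--Klein-type bound (Theorem~\ref{thm:GK}), and then pass to the full graph via strong resolvent convergence of $\BH_{\BLam_n}\to\BH$ and Fatou's lemma for the vaguely convergent spectral measures $\mu^{\Bx,\By}_{\bball_{L_k},\om}\to\mu^{\Bx,\By}_\om$. You have in fact spelled out more detail than the paper does (the conversion $p(1+b)^k\ln L_k\mapsto a\ln^{1+c}L_k$, the choice of scale $k$ adapted to $\rd(\Bx,\By)$, and the measurability of the supremum), but the architecture is identical.
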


\proof
For any  ball $\bball$ and any points $\Bx,\By\in\bball$ introduce a spectral measure $\bmubxy$ uniquely defined by
$$
\int\, \phi(\lambda)\, d\bmubxy(\lambda) = \bra{\one_\Bx} \phi(H_\bball(\omega))
\Pi_I(H_{\bball}(\omega)) \ket{\one_\By},
$$
where  $\phi$ is an arbitrary bounded Borel (or continuous) function,
and similar spectral measures $\bmuxy$ for the operator $\BH(\om)$ on the entire graph $\cZp$.
If $\{\bball_{L_k}\}$ is a growing sequence of balls, then $\bmukxy$ converge vaguely to $\muxy$ as $k\to\infty$. Note that this fact remains true in a much more general context of unbounded operators: by a well-known result (cf., e.g., \cite{Kato}),
for the strong resolvent convergence of operators $\BH_n \to\BH$, with a common core $\cD$,
it suffices that $\BH_n \phi \to \BH\phi$ for any element $\phi\in\cD$; in turn, this implies
the vague convergence of the spectral measures
$\langle {\boldsymbol{\varphi}}, \phi(\BH_n) {\boldsymbol{\psi}} \rangle$.
Usually, an appropriately chosen subspace of compactly supported functions can serve as a core,
and on such functions operators $\BH_n$ converge by stabilization.

So, by virtue of Fatou lemma on convergent measures, for any measurable set $\cE\subset\DR$
and any growing sequence of balls $\bball_{L_k}$,
$$
\esm{ |\muxy| (\cE) } \le \liminf_{k\to\infty} \; \esm{ |\mu^{\Bx,\By}_{\bball_{L_k}}| (\cE) }.
$$
Therefore, the uniform bounds  in finite volumes $\bball_{L_k}$, established in Theorem \ref{thm:MPMSA.implies.DL}, imply the dynamical localization on the entire lattice.
\qedhere
\smallskip

Taking functions $\phi_t: \lambda\mapsto e^{it\lambda}$, $t\in\DR$, one infers from
Theorem \ref{thm:DL.lattice.IID} a more traditional form of dynamical localization:

\begin{theorem}\label{thm:DL.lattice.eit}
Under the assumptions of Theorem \ref{thm:DL.lattice.IID}, there exist $a,c>0$ such that
for any finite subset $K\subset\DZ^d$ and any finite interval $I\subset\DR$
\begin{equation}\label{eq:thm.DL.lattice.IID.propagator}
\expect\left[ \sup_{t\in\D{R}} \;\left\| e^{a \ln^{1+c}\B{X}} \, \eul^{-\ii tH(\omega)}
          P_{I }(H(\omega))
\one_{K}\right\|\right] < \infty.
\end{equation}
\end{theorem}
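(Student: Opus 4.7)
The plan is to deduce Theorem~\ref{thm:DL.lattice.eit} from Theorem~\ref{thm:DL.lattice.IID} by two standard reductions, which the author already signals with the parenthetical remark preceding the statement. First, for every fixed $t\in\DR$ the function $\phi_t(\lambda)=\eul^{-\ii t\lambda}$ is Borel with $\|\phi_t\|_\infty=1$, and $\eul^{-\ii t\BH(\om)}P_I(\BH(\om))=\phi_t(\BH(\om))\,P_I(\BH(\om))$. Hence, pointwise in $\om$,
\[
\sup_{t\in\DR}\bigl|\langle\one_\Bx,\eul^{-\ii t\BH(\om)}P_I(\BH(\om))\one_\By\rangle\bigr|
\le\sup_{\|\phi\|_\infty\le1}\bigl|\langle\one_\Bx,\phi(\BH(\om))P_I(\BH(\om))\one_\By\rangle\bigr|,
\]
and Theorem~\ref{thm:DL.lattice.IID} gives $\expect\bigl[\sup_t|\langle\one_\Bx,\eul^{-\ii t\BH}P_I\one_\By\rangle|\bigr]\le\Const\,\eul^{-a\ln^{1+c}\rho(\Bx,\By)}$.

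Second, I would pass from matrix elements to operator norms by Hilbert--Schmidt domination. Interpreting $\one_{\BK}$ as the orthogonal projection onto $\ell^2(\BK)$, the operator $A_t(\om):=\eul^{a'\ln^{1+c}\B{X}_{\BK}}\,\eul^{-\ii t\BH(\om)}P_I(\BH(\om))\,\one_{\BK}$ has rank at most $|\BK|$, so
\[
\|A_t(\om)\|^2\le\sum_{\By\in\BK}\sum_{\Bz\in\cZp}\eul^{2a'\ln^{1+c}(\rho(\Bz,\BK)+1)}\bigl|\langle\one_\Bz,\eul^{-\ii t\BH(\om)}P_I(\BH(\om))\one_\By\rangle\bigr|^2.
\]
Using $|\langle\one_\Bz,\eul^{-\ii t\BH}P_I\one_\By\rangle|\le 1$ to replace the square by the first power, exchanging $\sup_t$ with the termwise non-negative sums, and taking expectations via Tonelli, the first step yields
\[
\expect\bigl[\sup_t\|A_t\|^2\bigr]\le\Const\sum_{\By\in\BK}\sum_{\Bz\in\cZp}\eul^{\,2a'\ln^{1+c}(\rho(\Bz,\BK)+1)\,-\,a\ln^{1+c}\rho(\Bz,\By)}.
\]
For $\By\in\BK$ one has $\rho(\Bz,\By)\ge\rho(\Bz,\BK)$, so picking any $a'<a/2$ makes the $\Bz$-sum converge: the number of $\Bz$ at max-distance $\sim n$ from $\BK$ grows only polynomially in $n$, while $\eul^{-(a-2a')\ln^{1+c}n}$ decays super-polynomially for any $c>0$. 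Finally, Jensen's inequality $\expect[\sqrt{X}]\le\sqrt{\expect[X]}$ applied to $X=\sup_t\|A_t\|^2$ delivers $\expect[\sup_t\|A_t\|]<\infty$, which is \eqref{eq:thm.DL.lattice.IID.propagator} (with $a$ replaced by $a'$).

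The only mildly delicate point, and essentially the unique one, is the constant bookkeeping: the exponent $a$ appearing in \eqref{eq:thm.DL.lattice.IID.propagator} must be strictly smaller than half of the one produced by Theorem~\ref{thm:DL.lattice.IID}. Since the latter theorem only asserts the \emph{existence} of positive $a,c$, this relabelling is harmless. No new probabilistic ingredient is required; all the analytic content is condensed in the rank-$|\BK|$ structure of $A_t\,\one_{\BK}$ and in the $|m|^2\le |m|$ trick which is tailored to the unitarity of $\eul^{-\ii t\BH}P_I(\BH)$.
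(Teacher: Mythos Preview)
Your argument is correct and follows precisely the route the paper indicates: the paper gives no detailed proof of Theorem~\ref{thm:DL.lattice.eit}, only the one-line remark ``Taking functions $\phi_t:\lambda\mapsto e^{it\lambda}$, $t\in\DR$, one infers from Theorem~\ref{thm:DL.lattice.IID} \ldots'', and your two reductions (specializing $\phi$ to $\phi_t$, then passing from matrix elements to the operator norm via a Hilbert--Schmidt bound, the $|m|^2\le|m|$ trick, Tonelli, and Jensen) are exactly the standard way to fill in that gap. The bookkeeping on $a'<a/2$ is handled correctly.
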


Taking into account RAGE theorem(s)\footnote{See, e.g., the original papers \cite{AG73,E78} and their discussion in \cite{CFKS87}}, Theorem \ref{thm:DL.lattice.eit} implies the spectral localization: with probability one, the spectrum of $\BH(\om)$ is pure point, so that spectrally a.e. eigenfunction $\BPsi$ of $\BH(\om)$ is square-summable. However,
\eqref{eq:thm.DL.lattice.IID.propagator} does not imply directly an exponential decay of $\BPsi$.

\subsection{Exponential decay of eigenfunctions on the entire graph $\cZp$}
\label{ssec:SimpleSL.lattice}

\begin{theorem}\label{thm:Main.SL}
For $\DP$-a.e. $\om\in\Om$ every normalized eigenfunction $\BPsi$
of operator $\BH(\om)$ satisfies the following bound: for some $R(\om)$,
$\hBx(\om)$ and all
$\By$ with $\| \By \|\ge R(\om)$
\be\label{eq:thm.Main.SL}
  |\BPsi(\By)| \le e^{-m\| \By \|}.
\ee
\end{theorem}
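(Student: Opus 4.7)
The plan is to derive Theorem \ref{thm:Main.SL} from the probabilistic scaling estimate \eqref{eq:lem.DS.mixed.claim.2S} by a Borel--Cantelli argument, combined with the radial descent machinery of Section \ref{sec:deterministic.bounds}.

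For each scale $L_k$, let $\cB_k$ denote the event that there exist an energy $E\in\DR$ and a distant pair of \EmS\ $L_k$-balls contained in the growing box $\bball_{L_{k+1}^{\alpha}}(\Bzero)$. A union bound over the $O(L_{k+1}^{2Nd\alpha})$ such pairs together with \eqref{eq:lem.DS.mixed.claim.2S} gives $\pr{\cB_k} \le L_{k+1}^{2Nd\alpha} L_k^{-p(1+b)^k}$, which is summable in $k$ since $p(1+b)^k$ grows super-polynomially while the prefactor is polynomial in $L_k$. Borel--Cantelli then yields a full-measure set $\Om_0$ and a random threshold $k_0(\om) < \infty$ such that, for $\om \in \Om_0$ and $k \ge k_0(\om)$, no such bad pair exists at scale $L_k$. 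Using the single-ball estimate \eqref{eq:lem.DS.mixed.claim} in the same way, I would further arrange that on $\Om_0$, every $L_k$-ball inside the box is $m$-localized for $k \ge k_0$.

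Fix $\om \in \Om_0$ and a normalized $\ell^2$-eigenfunction $\BPsi$ of $\BH(\om)$ with eigenvalue $E$. Let $\hBx = \hBx(\om,\BPsi)$ be a point at which $|\BPsi|$ attains its (positive) maximum $c_\BPsi > 0$. For any $\By$ with $\rho(\hBx, \By) \ge C_N L_k$ and $k \ge k_0(\om)$, the pair $\bball_{L_k}(\hBx), \bball_{L_k}(\By)$ is distant, so by the Borel--Cantelli step at most one of these two balls is \EmS. If $\bball_{L_k}(\hBx)$ were \EmS, then combining with the $m$-localization of this ball (guaranteed on $\Om_0$) and the GRI \eqref{eq:GRI.EF} would force $c_\BPsi \le e^{-\gamma(m,L_k)L_k + O(L_k^\beta)}\to 0$, a contradiction for $k$ large. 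Hence $\bball_{L_k}(\By)$ is \EmNS. Applying the same pair bound throughout an annular neighborhood of $\By$ leaves at most one "bad" $L_k$-ball center in that annulus; as in the proofs of Lemmas \ref{lem:CNR.and.no.2.S.imply.NS}--\ref{lem:NT.implies.nloc.IID}, one excludes a small neighborhood of that single bad center and obtains that $\Bv \mapsto |\BPsi(\Bv)|$ is $(L_k, q)$-subharmonic in the remaining region, with $q \le C\,e^{-\gamma(m,L_k)L_k}$. The radial descent Lemma \ref{lem:SubH.IID} then yields $|\BPsi(\By)| \le e^{-m \rho(\hBx, \By)(1 - o(1))}$. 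Since $\hBx$ is fixed once $\BPsi$ is fixed, $\rho(\hBx, \By) \ge \|\By\| - \|\hBx\| \ge (1-o(1))\|\By\|$ as $\|\By\|\to\infty$, giving \eqref{eq:thm.Main.SL} after harmlessly replacing $m$ by $m/2$ (recall that $m\ge 1$ from the MPMSA output).

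The main obstacle is the interplay between the $\om$-random eigenvalue $E$ of the infinite-volume operator, which cannot be fixed in advance, and the structural requirement that every $L_k$-ball used in the descent be \EmNS\ at that very same $E$. The first point is precisely why the pair estimate \eqref{eq:lem.DS.mixed.claim.2S}, with its built-in "$\exists E$" quantifier, is indispensable here in place of a fixed-$E$ Wegner-type bound. The second point is more delicate: the pair bound allows at most one \EmS\ $L_k$-ball in each annulus for each $E$, but not zero, so the single potentially bad center must be either absorbed into the exclusion around $\hBx$ or into the random threshold $R(\om)$. A secondary technical issue --- that $E$ must not lie in the spectrum of any local Hamiltonian invoked in a resolvent identity --- is handled by the observation that these local spectra form a countable set, that the \EmNS\ property is open in $E$ on the complement, and that a standard limiting argument transfers the estimate to the actual eigenvalue $E$ of $\BPsi$.
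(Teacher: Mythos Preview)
Your overall strategy---Borel--Cantelli on the two-ball estimate \eqref{eq:lem.DS.mixed.claim.2S} over growing boxes, identifying the ball at the maximizer $\hBx$ as the singular one, and then running radial descent (Lemma~\ref{lem:SubH.IID}) around distant points $\By$---is exactly the route the paper takes. However, the central step is stated with the implication reversed, and as written the argument does not go through.

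You write: ``If $\bball_{L_k}(\hBx)$ were \EmS, then combining with the $m$-localization of this ball and the GRI \eqref{eq:GRI.EF} would force $c_\BPsi \to 0$, a contradiction. Hence $\bball_{L_k}(\By)$ is \EmNS.'' But $(E,m)$-\emph{singularity} of a ball gives no decay of its Green function, so the GRI yields nothing; and the $m$-localization property concerns eigenfunctions of the \emph{local} Hamiltonian $\BH_{\bball_{L_k}(\hBx)}$, not $\BPsi$. Worse, even if you did conclude that $\bball_{L_k}(\hBx)$ is \emph{non}-singular, the pair bound (``at most one of the two is \EmS'') would be vacuously satisfied and would tell you nothing about $\bball_{L_k}(\By)$; your ``Hence'' is then a non sequitur.

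The correct direction, which is what the paper does, is: suppose $\bball_{L_k}(\hBx)$ were \EmNS; then the GRI applied to the eigenfunction at its maximum gives
\[
\|\BPsi\|_\infty = |\BPsi(\hBx)| \le e^{-mL_k}\,\|\BPsi\|_\infty < \|\BPsi\|_\infty,
\]
a contradiction. Hence $\bball_{L_k}(\hBx)$ \emph{is} \EmS, and now the pair bound forces every distant $\bball_{L_k}(\By)$ to be \EmNS. Note that $m$-localization plays no role here, so your auxiliary Borel--Cantelli on \eqref{eq:lem.DS.mixed.claim} is unnecessary. With this correction, the remainder of your sketch (subharmonicity of $|\BPsi|$ in an annulus with at most one excluded bad center, Lemma~\ref{lem:SubH.IID}, and the passage from $\rho(\hBx,\By)$ to $\|\By\|$) matches the paper's proof; in fact the paper obtains the rate $m$ directly, without halving.
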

\proof
Pick an arbitrary vertex $\Bz\in\cZp$.
By  Borel--Cantelli lemma, there is a subset $\Om'\subset\Om$ with $\pr{\Om'}=1$ such that for any $\om\in\Om'$ and some $k_0(\om)$,  all
$k\ge k_0$ and any $E\in\DR$ there is no pair of  $L_k^{1+\delta}$-distant \EmS\, balls  $\bball_{L_k}(\Bx), \bball_{L_k}(\By)\subset \bball_{L_{k+2}}(\Bz)$. Fix $\om\in\Om'$.

Let $\BPsi$ be a normalized eigenfunction of $\BH(\om)$ with eigenvalue $\lam$.
Since $\|\BPsi\|_2\le 1$ implies $\|\BPsi\|_\infty\le 1$, there is a point $\hBx$ such that
$\|\BPsi\|_2 = |\BPsi(\hBx)|$.
If $\hx_n\in\bball_{L_{k-1}}(0)$, then $\bball_{L_k}(0)$ must be $(\lam,m)-$S,
otherwise the $(\lam,m)$-NS property would lead to a contradiction:
$$
\|\BPsi\|_\infty = |\BPsi(\hBx)| \le e^{-mL_k} \|\BPsi\|_\infty < \|\BPsi\|_\infty.
$$
Thus any ball $\bball_{L_k}(\By)\subset \bball_{L_{k+2}}(\Bz)$ with
$\rho(\By,\Bz) \in [L_{k+1}, L_{k+2})$ (hence, with
$\rho(\By,\Bz) \ge L_k^\alpha > 2L_k^{1+\delta}$) is $(\lam,m)-$NS.
Note that the function $\Bx\mapsto |\BPsi_n(\Bx)|$ is $(L_k,q)$-subharmonic in $\bball_{R}(\By)$, with $R = \rho(\By,\Bz) - 2L_k^{1+\delta}-1 > \half L_k^{1+2\delta}$
and $q = e^{-\gamma(m,L_k)L_k}$. Recall that we set $\tau=\delta/4$; cf. \eqref{eq:relat.param.induct.fixed}.  Now Lemma \ref{lem:SubH.IID}
implies, for $L_k$ large enough,
$$
\ba
 -\frac{\ln |\BPsi(\By)|}{ \rho(\By,\Bz) }
\ge m \left(1+L_k^{-\tau} \right) \left(1 - \frac{2L_k+1}{ \rho(\By,\Bz) }\right)
\ge m \left(1 + \half L_k^{-\tau} \right) > m
\ea
$$
yielding the assertion \eqref{eq:thm.Main.SL}.
\qedhere

\section{Fermionic Hamiltonians on more general graphs}
\label{sec:general.graph}

The reduction to a standard lattice Laplacian on a subset
$\{(x_1, \ldots, x_N):\; x_1 < \cdots < x_N\}$ with Dirichlet boundary conditions
is no longer possible for particle systems on lattices $\DZ^d$ with $d>1$. Instead,
one has to work with a symmetric power of the lattice, considered as a graph.
So it seams reasonable to consider a fairly general, countable connected graph
$\cZ$ satisfying the condition of polynomial growth of balls:
$$
\forall\, x\in\cZ\;\; \forall\, L\ge 1\;\;
|\ball_L(x)| \le C_d L^d,
$$
where $\ball_L(x) = \{y\in\cZ:\, \rd_\cZ(x,y)\le L\}$. In particular, this gives a uniform
bound on the coordination numbers, $n_\cZ(x) \le C_d$ (of course, this bound
may be non-optimal).

An orthonormal basis in the Hilbert space of square-summable antisymmetric functions
$\BPsi:\cZ^N\to\DC$ is formed by the functions
$$
\BPhi_\Ba = \frac{1}{\sqrt{N!}} \sum_{\pi\in\fS_N} \myotimes_{j=1}^{N}
\one_{a_{\pi^{-1}(j)}},
\quad \Ba = \{a_1, \ldots, a_N\}, \;
\#\{a_1, \dots, a_N\}=N.
$$

Further, define the graph $(\cZp, \bcEp)$ as follows: the vertex set is
$$
\cZp
= \{
\Ba = \{a_1, \ldots, a_N\}: \; a_j\in\cZ, \#\{a_1, \dots, a_N\}=N
\}.
$$
Two vertices $\Ba, \Bb$ form an edge iff
\begin{itemize}
  \item the symmetric difference $\Ba \ominus \Bb$ has cardinality $1$,
  i.e. $\Ba = \{a_1, c_2, \ldots, c_N\}$, $\Bb = \{b_1, c_2, \ldots, c_N\}$,
  with $\#\{a_1, b_1, c_2, \ldots, c_N\}=N+1$, and
  \item $\rd_\cZ(a_1, b_1) = 1$.
\end{itemize}

Next, define on $\cZp$ the max-distance,
$$
\rho(\Bx, \By) =  \min_{\pi\in\fS_N} \max_{1\le j \le N} \rd_\cZ(x_{\pi^{-1}(j)}, y_j)
$$
and introduce the balls $\bball_L(\Bx)$ relative to the distance $\rho(\cdot\,,\cdot)$.

Now one can define the fermionic (negative) Laplacian $(-\BDelta)$ on $\cZp$
and random Hamiltonians
$\BH(\om) = -\BDelta + g\BV(\Bx;\om) + \BU(\Bx)$, where
$$
\BU(\Bx) = \sum_{i\ne j} U(\rd(x_i, x_j)),
\quad U:\DN\to\DR,
$$
and the external random potential energy
$$
\BV(\Bx;\om) = V(x_1;\om) + \cdots + V(x_N;\om),
$$
is generated by a random field $V:\cZ\times\Om\to\DR$.

The method presented in Sections \ref{sec:deterministic.bounds}--\ref{sec:MPMSA.infinite.range}
applies to strongly disordered random
Hamiltonians $\BH(\om) = -\BDelta + g\BV(\Bx;\om) + \BU(\Bx)$
describing fermionic systems on connected graphs $\cZ$ with polynomial
growth of balls; indeed, one can see that we did not use particular properties
of the one-dimensional lattice $\cZ=\DZ$.

Strongly disordered
bosonic systems can be treated in a similar way; the only modification required here
concerns the explicit form of the matrix elements of the Laplacian, which remains
a second-order finite-difference operator.


\section*{Appendix. Proofs of auxiliary statements}

\subsection{Proof of Lemma \ref{lem:L.0.Np} }

We start with the second assertion.

The random potential energy $\BV(\om)$ reads as follows:
\be\label{eq:BV.Nx}
\BV(x_1, \ldots, x_N;\om) = \sum_{y\in \Bx} \Bn_y V(y;\om).
\ee
Therefore, if $\rho(\Bx,\By)\ne 0$, then there exists a point
$w\in \Pi \bball_L(\Bu)$
such that $\Bn_w(\Bx)  \ne \Bn_w(\By)$. As a result,
\be\label{eq:decomp.V}
\BV(\Bx;\om) - \BV(\By;\om) = (\Bn_w\Bx) - \Bn_w(\By)) V(w;\om)
+ \sum_{v \ne w} c_v V(v;\om),
\ee
where the explicit form of the integer coefficients
$
c_v = \Bn_v(\Bx) - \Bn_v(\By)
$
is irrelevant for our argument:  it suffices to know that the sum in the RHS of Eqn.~\eqref{eq:decomp.V} is measurable with respect to the sigma-algebra $\fF_{\ne w}$ generated by the random variables $\{V(v;\cdot), v\ne w\}$, while  $\Bn_w(\Bx) - \Bn_w(\By) =: c_w \ne 0$,
$|c_w|\ge 1$. Therefore,
$$
\bal
\pr{ |g\BV(\Bx;\om) - g\BV(\By;\om)| \le s }
& = \esm{ \pr{ |\BV(\Bx;\om) - \BV(\By;\om)| \le |g|^{-1} s\,|\, \fF_{\ne w} } } \\
&= \esm{ \pr{ |c_w V(w;\om) + \zeta(\om)| \le |g|^{-1}\,|\, \fF_{\ne w} } } \\
\eal
$$
with some $\fF_{\ne w}$-measurable random variable $\zeta(\om)$
$$
\ba
= \esm{ \pr{ c_w V(w;\om) \in
[\zeta(\om)-|g|^{-1}s, \zeta(\om) + |g|^{-1}s] \,|\, \fF_{\ne w} } }. \\
\ea
$$
Since $|c_w|\ge 1$ and the interval $[\zeta(\om)-s, \zeta(\om) + s]$
 has length $2s$, we conclude that
$$
\ba
\displaystyle \pr{ |\BV(\Bx;\om) - \BV(\By;\om)| \le |g|^{-1}s } \le
\sup_{a\in \DR} \; ( F_{V,w}(a+2s) - F_{V,w}(a)).
\ea
$$
Since $F_{V,w}$ is continuous, by assumption \Wone, the latter quantity vanishes as
$|g|^{-1}s\downarrow 0$. Therefore,
$$
\pr{ \exists\, \Bx,\By\in \bball_L(\Bu):\, \Bx\ne \By, \; |g\BV(\Bx) - g\BV(\By)| \le s }
\tto{|g|\to \infty} 0,
$$
so, with arbitrarily high probability,
the spectrum of the diagonal operator $\BV(\om)$ in $\bball_{L_0}(\Bu)$ admits a
positive uniform lower bound $s>0$ on spectral spacings (differences between the eigenvalues).
By taking $|g|$ large enough, all spacings for operator $g\BV(\om)$ can be made arbitrarily
large. Eigenvectors of a  continuous finite-dimensional operator family $A(t)$ with
simple spectrum at $t=t_0$ are continuous in a neighborhood of $t_0$. To prove the second
assertion, it suffices to apply this fact to the family
$A(t) = \BW - g^{-1}t\BDelta$, $t\in[0,1]$.

The proof of the first assertion is even simpler. Using again the representation
\eqref{eq:BV.Nx}, we see that for each $\Bx$ the value of the potential energy
$\BV(\Bx;\om) + \BU(\Bx)$ is a linear combination (with integer coefficients) of
random variables with continuous probability distribution obeying \Wone. Arguing as above,
one can see that, for any $E\in\DR$, $s>0$ and $|g|$ large enough, with probability arbitrarily
close to $1$, $\dist(E, \Sigma(\BH_{\bball_{L_0}(\Bu)}))\ge s$. By Combes--Thomas estimate,
this implies exponential decay of the Green functions,
$$
|\BG_{\bball_{L_0}(\Bu)}(\Bx, \By;E)| \le e^{- m(s)\rd(\Bx,\By)}
$$
with $m(s)\to\infty$ as $s\to\infty$. Since the graph distance $\rd(\cdot\,,\cdot)$
on $\cZp$ dominates the max-distance $\rho(\cdot\,,\cdot)$, the claim follows.
\qed

\subsection{Proof of Theorem \ref{thm:PCT}}

We prove a result slightly stronger than required for the spectral analysis of
fermionic operators $\BH(\om)$: an EVC bound is established for operators
in the Hilbert space of quantum states  of distinguishable particles, not just in the
subspace of anti-symmetric functions. This simplifies geometrical arguments and notations.
So, in this subsection, we denote, for each given $n\ge 1$,  $\Bx = (x_1, \ldots, x_n)$,
$\bball_L(\Bx) = \{\By\in (\DZ^d)^n: | \By - \Bx|\le L\}$; $\BDelta$ stands for the
nearest-neighbor lattice Laplacian in $(\DZ^d)^n$, and
$\BH_{\bball_L(\Bx)} = -\BDelta_{\bball_L(\Bx)} + g\BV + \BU$. We denote by
$\rdS$ stands for the symmetrized distance:
$
\rdS(\Bx, \By) = \min_{\pi\in\fS_n} |\pi(\Bx) - \By|
$
where the elements $\pi$ of the symmetric group act by permutations of the coordinates:
$\pi(x_1, \ldots, x_n) = (x_{\pi^{-1}(1)}, \ldots, x_{\pi^{-1}(n)})$.

Given an $N$-particle configuration $\Bx$ and a proper index subset $\cJ\subset[1, N]$,
denote by $\Pi_\cJ \Bx$ the set $\{x_j, j\in\cJ\}\subset\DZ^d$.

\begin{definition}
A  cube $\bball_L(\Bx)$ is weakly separable from $\bball_L(\By)$ if
there exists a parallelepiped $Q\subset \DZ^d$ in the 1-particle configuration space,
of diameter $R \le 2NL$,  and  subsets       $\cJ_1, \cJ_2\subset [1,N]$  such that
$|\cJ_1| > |\cJ_2|$ (possibly, with $\cJ_2=\varnothing$) and
\begin{equation}\label{eq:cond.WS}
\begin{array}{l}
 \Pi_{\cJ_1} \bball_L(\Bx) \cup \Pi_{\cJ_2} \bball_L(\By) \;  \subseteq Q,\\
\Pi_{\cJ^c_2} \bball_L(\By) \cap Q = \varnothing.
\end{array}
\end{equation}
A  pair of balls $(\bball_L(\Bx), \bball_L(\By))$ is weakly separable if at least one of the balls is weakly separable from the other.
\end{definition}

\begin{lemma}[Cf. Lemma 2.3 from \cite{C10}]\label{lem:dist.are.WS}
Cubes $\bball_L(\Bx), \bball_L(\By)$ with  $\rd(\Bx, \By)> 4NL$ are weakly separable.
\end{lemma}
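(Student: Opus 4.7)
My plan is to prove weak separability by an axis-aligned partitioning argument on the multiset of single-particle positions $\Pi\Bx\cup\Pi\By\subset\DZ^d$ (of total cardinality $2N$ counted with multiplicity), combined with a pigeonhole step that exploits the hypothesis $\rdS(\Bx,\By)>4NL$.

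First, I recursively split the multiset into blocks: whenever the sorted $i$-th coordinates of some current block exhibit a consecutive gap strictly greater than $2L$ for some coordinate $i$, I cut at that gap along an axis-aligned hyperplane, and I iterate until no block admits such a gap in any coordinate. The resulting partition $\{B_1,\ldots,B_s\}$ has two key properties: (i) inside each block $B$ of cardinality $k=|B|$, the coordinatewise range is at most $(k-1)(2L)$ in every coordinate, so the bounding parallelepiped $Q(B):=\prod_i[\min_{z\in B} z^{(i)}-L,\max_{z\in B} z^{(i)}+L]$ of $\bigcup_{z\in B}\ball_L(z)$ has $L^\infty$-diameter at most $2kL$; (ii) any two distinct final blocks are separated by a gap $>2L$ along at least one coordinate, so the boxes $Q(B)$ are pairwise disjoint.

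Next, I apply the pigeonhole. Writing $n_\Bx(B)$ for the number of indices $j$ with $x_j\in B$ (and similarly $n_\By(B)$), if every block $B$ were \emph{balanced} (that is, $n_\Bx(B)=n_\By(B)$), then arbitrary within-block bijections would patch together into a permutation $\pi\in\fS_N$ with $x_j$ and $y_{\pi(j)}$ always in the same block, giving $|x_j-y_{\pi(j)}|_\infty\le(k-1)(2L)\le(2N-1)(2L)<4NL$ and hence $\rdS(\Bx,\By)<4NL$, contradicting the hypothesis. Therefore some block is unbalanced; by the conservation identity $\sum_B (n_\Bx(B)-n_\By(B))=0$, at least two blocks are unbalanced in opposite directions, and since together they contain at most $2N$ positions, the smaller of them has cardinality $k\le N$.

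Finally, I fix such a smaller unbalanced block $B$ and take $Q:=Q(B)$. If $n_\Bx(B)>n_\By(B)$, I set $\cJ_1:=\{j:x_j\in B\}$ and $\cJ_2:=\{j:y_j\in B\}$, showing that $\bball_L(\Bx)$ is weakly separable from $\bball_L(\By)$; otherwise I swap the roles of $\Bx$ and $\By$ and show that $\bball_L(\By)$ is weakly separable from $\bball_L(\Bx)$. In either case $|\cJ_1|>|\cJ_2|$ with $|\cJ_1|+|\cJ_2|=k\le N$, so $Q$ has $L^\infty$-diameter $\le 2kL\le 2NL$, contains the balls indexed by $\cJ_1$ and $\cJ_2$ by construction, and is disjoint from the balls $\ball_L(y_j)$ for $j\in\cJ_2^c$ thanks to the inter-block disjointness established in (ii). The point I would watch most carefully is the pigeonhole step, specifically that the coordinatewise gap bound inside a block correctly controls its $L^\infty$-diameter and hence drives the contradiction in the balanced case; the rest is a direct assembly.
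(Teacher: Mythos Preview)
Your argument is correct. The paper itself does not give a proof of this lemma but defers to \cite{C10}; your clustering-by-gaps construction followed by a pigeonhole on occupation numbers is precisely the natural approach and matches the argument in that reference. The one point you flag yourself---that no coordinatewise gap exceeding $2L$ inside a final block of cardinality $k$ forces its range to be at most $(k-1)\cdot 2L$, hence $\rdS(\Bx,\By)<4NL$ in the balanced case---is sound, and the tree structure of the recursive splits guarantees that any two final blocks inherit a gap $>2L$ along the coordinate of their separating cut, so the boxes $Q(B)$ are indeed pairwise disjoint.
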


See the proof in  \cite{C10}.

\begin{lemma}\label{lem:lemma.PCT}

Let $V: \DZ^d\times \Omega \to \DR$ be a random field  satisfying the condition
\textbf{\rm ($\Wthree$)}. Let $\Bx,\By\in \DZ^{Nd}$ be two configurations such that the balls
$\bball_{L}(\Bx)$, $\bball_{L}(\By)$ are weakly separable.
Then for any $s>0$ the following bound holds for the  spectra
of operators $\BH_{\bball_{L'}(\Bx)}$, $\BH_{\bball_{L''}(\By)}$ with arbitrary $L', L''\le L$:
$$
\pr{ \dist(\sigma(\BH_{\bball_{L'}(\Bx)}), \sigma(\BH_{\bball_{L''}(\By)})) \le s }
     \le h_L (2s).
$$
with $h_L$ defined in \eqref{eq:h.L}.
\end{lemma}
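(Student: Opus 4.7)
The plan is to use weak separability to arrange that the spectrum of one of the two Hamiltonians shifts \emph{strictly faster} than that of the other under variations of the sample mean $\xi_Q := \langle V\rangle_Q$, and then integrate out $\xi_Q$ using \Wthree. Fix the parallelepiped $Q\subset\DZ^d$ of diameter $R\le 2NL$ and the index sets $\cJ_1,\cJ_2\subset[1,N]$ with $|\cJ_1|>|\cJ_2|$ guaranteed by weak separability. Decompose $V(x;\om)=\xi_Q(\om)+\eta_x(\om)$ for $x\in Q$, where the $\eta_x$ are $\fF_{V,Q}$-measurable. This yields
\be\label{eq:plan.PCT.decomp}
\BH_{\bball_{L'}(\Bx)}(\om) = \BA(\om) + g\,\xi_Q(\om)\,\BN_Q^{\Bx}, \qquad
\BH_{\bball_{L''}(\By)}(\om) = \BB(\om) + g\,\xi_Q(\om)\,\BN_Q^{\By},
\ee
where $\BA,\BB$ are $\fF_{V,Q}$-measurable and $\BN_Q^{\Bz}$ is the diagonal ``number of particles in $Q$'' operator, $(\BN_Q^{\Bz}\,\one_{\Bu}) = \#\{j:u_j\in Q\}\cdot \one_{\Bu}$.

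The weak separability hypotheses are precisely what make $\BN_Q^{\Bz}$ trivial or nearly trivial. Since $\Pi_{\cJ_2^c}\bball_L(\By)\cap Q=\varnothing$ and $\Pi_{\cJ_2}\bball_L(\By)\subseteq Q$, every $\Bv\in\bball_{L''}(\By)\subseteq\bball_L(\By)$ has exactly $|\cJ_2|$ coordinates in $Q$, so $\BN_Q^{\By}\equiv|\cJ_2|\cdot\one$ is a \emph{scalar} operator. Since $\Pi_{\cJ_1}\bball_L(\Bx)\subseteq Q$, every $\Bu\in\bball_{L'}(\Bx)$ has at least $|\cJ_1|$ coordinates in $Q$, so $\BN_Q^{\Bx}\ge|\cJ_1|\cdot\one$ as a diagonal operator. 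By the min-max principle applied to the one-parameter self-adjoint family $\xi\mapsto\BH_{\bball_{L'}(\Bx)}$, the eigenvalues $E_a^{(\Bx)}(\xi)$ (labelled in increasing order) are Lipschitz and non-decreasing with $\partial_\xi E_a^{(\Bx)}\ge g|\cJ_1|$ (taking $g>0$; the opposite sign is symmetric), while $E_b^{(\By)}(\xi)=e_b+g|\cJ_2|\xi$ with $\fF_{V,Q}$-measurable $e_b$. Hence for each fixed pair $(a,b)$, the map $\xi\mapsto E_a^{(\Bx)}(\xi)-E_b^{(\By)}(\xi)$ is monotone with slope bounded below by $g(|\cJ_1|-|\cJ_2|)\ge g$, so
$$
\Xi_{a,b} := \{\xi\in\DR : |E_a^{(\Bx)}(\xi)-E_b^{(\By)}(\xi)|\le s\}
$$
is an interval of length at most $2s/g$, a quantity we absorb into constants.

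Now condition on $\fF_{V,Q}$: the intervals $\Xi_{a,b}$ become $\fF_{V,Q}$-measurable. By \Wthree\ applied with the argument $2s$, there is an event $\cE\in\fF_{V,Q}$ with $\pr{\cE^c}\le C''R^{A''}(2s)^{b''}$ on which the conditional distribution function $F_\xi(\cdot\,|\,\fF_{V,Q})$ satisfies $\nu_R(2s;\om)\le C'R^{A'}(2s)^{b'}$. On $\cE$ the conditional $\DP$-mass of any fixed interval of length $2s$ is therefore at most $C'R^{A'}(2s)^{b'}$. A union bound over the at most $|\bball_{L'}(\Bx)|\cdot|\bball_{L''}(\By)|$ eigenvalue pairs, integration over $\fF_{V,Q}$, and the replacement $R\le 2NL$ (absorbed into the constants $C',C''$) deliver the desired bound $h_L(2s)$ of \eqref{eq:h.L}.

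The principal subtlety is that the conditional law of $\xi_Q$ given the fluctuations $\{\eta_x\}_{x\in Q}$ may be highly singular even when the marginals of $V$ are smooth (e.g., IID uniform), so \Wone\ alone does not yield any useful continuity for $F_\xi(\cdot\,|\,\fF_{V,Q})$. The two-step structure of \Wthree\ --- a large-probability $\fF_{V,Q}$-event carrying a polynomial H\"older modulus, together with a polynomial tail for the complement --- is precisely what circumvents this issue, and is the reason $h_L$ has two additive terms.
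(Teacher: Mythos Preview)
Your argument is correct and follows the same overall strategy as the paper: decompose the potential via the sample mean $\xi_Q$ and the $\fF_{V,Q}$-measurable fluctuations, condition on $\fF_{V,Q}$, apply \Wthree\ to control the conditional law of $\xi_Q$, and take a union bound over eigenvalue pairs.

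There is one genuine technical difference worth noting. The paper asserts the \emph{scalar} decomposition $\BH_{\bball_{L'}(\Bx)} = n_1\,\xi\,\one + \BA$ with $\fF_{V,Q}$-measurable $\BA$, so that each eigenvalue of the $\Bx$-Hamiltonian shifts by exactly $n_1\xi$. You instead observe (correctly) that the stated definition of weak separability only forces $\Pi_{\cJ_1}\bball_L(\Bx)\subseteq Q$ and says nothing about $\Pi_{\cJ_1^c}\bball_L(\Bx)$, so the particle-count operator $\BN_Q^{\Bx}$ need only satisfy $\BN_Q^{\Bx}\ge |\cJ_1|\,\one$, not equality. Your min--max monotonicity argument (ordered eigenvalues have $\xi$-slope $\ge g|\cJ_1|$) handles this cleanly and yields the same interval bound for $\Xi_{a,b}$; this is in fact a more careful justification than the paper's. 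What your route buys is robustness against the possibility that extra particles of the $\Bx$-configuration wander into $Q$; what the paper's route buys is that the eigenvalue differences become exactly affine in $\xi$, making the interval computation trivial. Both lead to the same two-term bound $h_L(2s)$ once one splits on the $\fF_{V,Q}$-event $\{\nu_R(2s;\om)<C'R^{A'}(2s)^{b'}\}$, as you do. The residual factor $1/|g|$ in your interval length is harmless once $|g|\ge 1$ (the paper itself tacitly drops $g$ in the analogous step).
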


\proof
Let $Q$ be a parallelepiped which satisfies the conditions \eqref{eq:cond.WS},
$$
\begin{array}{l}
 \Pi_{\cJ_1} \bball_L(\Bx) \cup \Pi_{\cJ_2} \bball_L(\By) \;  \subset Q,\\
\Pi_{\cJ^c_2} \bball_L(\By) \cap Q = \varnothing, \\
\end{array}
$$
for some $\cJ_1, \cJ_2 \subset [1,N]$ with  $|\cJ_1|=n_1 > n_2 =|\cJ_2|$.
In terms of the sample mean $\xi=\xi_{Q}$ and the fluctuations
$\{\eta_x, \, x\in Q \}$  defined  in subsection \ref{ssec:assumptions.V},
operators $\BH_{\bball_{L'}(\Bx)}(\omega)$,  $\BH_{\bball_{L''}(\By)}(\omega)$ read as follows:
\begin{equation}\label{eq:Ham.decomp}
\begin{array}{l}
\BH_{\bball_{L'}(\Bx)}(\omega) = n_1 \xi(\omega) \, \one + \BA(\omega), \;
\BH_{\bball_{L''}(\By)}(\omega) = n_2 \xi(\omega) \,\one + \BB(\omega)
\end{array}
\end{equation}
where operators $\BA(\omega)$ and $\BB(\omega)$ are $\fF_{V,Q}$-measurable. Let
$$
\begin{array}{l}
\{ E'_1, \ldots, E'_{M'}\}, \, \,
M' = \,\,|\bball_{L'}(\Bx)|, \\
\{ E''_1, \ldots, E''_{M''}\},
M'' = |\bball_{L''}(\By)|
\end{array}
$$
be the sets of eigenvalues of $\BH_{\bball_{L'}(\Bx)}$ and of $\BH_{\bball_{L''}(\By)})$ counted with multiplicities.  Owing to Eqn~\eqref{eq:Ham.decomp}, these eigenvalues can be represented as follows:
\def\tE{{\tilde E}}
$$
\begin{array}{l}
E'_j(\omega) = n_1\xi(\omega) + \tE'_j(\omega),\;
E''_j(\omega) = n_2\xi(\omega) + \tE''_j(\omega),
\end{array}
$$
where the random variables
$\tE'_j(\omega)$ and $\tE''_j(\omega)$ are $\fF_{V,Q}$-measurable. Therefore,
\begin{equation}\label{eq:lam.mu.xi}
E'_i(\omega) - E''_j(\omega) =  n\xi(\omega) + (\tE'_j(\omega) - \tE''_j(\omega)),
\end{equation}
with $n:=n_1-n_2 \ge 1$, owing to our assumption.
Further, we can write
$$
\begin{array}{l}
\pr{ \dist(\sigma(\BH_{\bball_{L'}(\Bx)}), \sigma(\BH_{\bball_{L''}(\By)})) \le s }
= \pr{ \exists\, i,j:\, |E'_i - E''_j| \le s } \\
\displaystyle \le \sum_{i=1}^{M'} \sum_{j=1}^{M''} \pr{ |E'_i - E''_j| \le s }
 \displaystyle = \sum_{i=1}^{M'} \sum_{j=1}^{M''}
     \esm{ \pr{ |E'_i - E''_j| \le s \,|\, \fF_{V,Q}}}.
\end{array}
$$
For all $i$ and $j$, we have
$$
\bal
\pr{ |E'_i - E''_j| \le s \,|\, \fF_{V,Q}}
&= \pr{ |(n_1 - n_2)\xi + \tE'_i - \tE''_j| \le s \,|\, \fF_{V,Q}} \\
&= \pr{ \xi \in \left[ \frac{\tE''_j - \tE'_i}{n} - \frac{s}{n},
        \frac{\tE''_j - \tE'_i  }{n} + \frac{s}{n} \right]  \,\Big|\, \fF_{V,Q}} \\
&\le \nu_L( 2 n^{-1}s \,|\, \fF_{V,Q})
\le \nu_L( 2s \,|\, \fF_{V,Q}).
\eal
$$
Therefore,
$$
\bal
\pr{ \dist(\sigma(\BH_{\bball_{L''}(\Bx)}), \sigma(\BH_{\bball_{L''}(\By)}))  \le s }
     & \le \displaystyle   M' M'' \nu_L \left( |n_1 - n_2|^{-1} s \right) \\
     &\le  \displaystyle |\bball_{L''}(\Bx)| \cdot |\bball_{L''}(\By)|\, \nu_L (2s).
     \;\; \qed
\eal
$$

Now Theorem \ref{thm:PCT} follows from Lemma \ref{lem:lemma.PCT} combined with
Lemma \ref{lem:dist.are.WS}.

\section*{Acknowledgements.}

It is a pleasure to thank Tom Spencer, Boris Shapiro, Abel Klein and Misha Goldstein for stimulating and fruitful discussions of localization techniques; the organizers of the program
\textit{"Mathematics and Physics of Anderson Localization: 50 Years After"} at the Isaac Newton Institute, Cambridge, UK (2008);  Shmuel Fishman, Boris Shapiro and the Department of Physics of Technion, Israel (2009), for their warm hospitality.

\begin{bibdiv}
\begin{biblist}

\bib{A58}{article}{
   author={Anderson, P. W.},
   title={Absence of diffusion in certain random lattices},
   journal={Phys. Rev. },
   volume={109},
   date={1958},
   pages={1492--1505},
}

\bib{A94}{article}{
   author={Aizenman, M.},
   title={Localization at weak disorder: Some elementary bounds},
   journal={Rev. Math. Phys.},
   volume={6},
   date={1994},
   pages={1163--1182},
}

\bib{AG73}{article}{
   author={Amrein, W.},
   author={Georgescu, V.},
   title={On the characterization of bound states and scattering states in quantum mechanics},
   journal={Helv. Phys. Acta},
   volume={46},
   date={1973},
   pages={635--658},
}

\bib{AW09a}{article}{
   author={Aizenman, M.},
   author={Warzel, S.},
   title={Localization bounds for multiparticle systems},
   journal={Comm. Math. Phys.},
   volume={290},
   date={2009},
   number={3},
   pages={903--934},
}

\bib{AW09b}{misc}{
   author={Aizenman, M.},
   author={W{a}rzel, S.},
   title={Complete dynamical localization in disordered quantum
multi-particle systems},
   status={arXiv:math-ph/0909:5432 (2009)},
   date={2009},
   pages={},
}

\bib{ASFH01}{article}{
   author={Aizenman, M.},
   author={Shenker, J. H.},
   author={Fridrich, R. M.},
   author={Hundertmark, D.},
   title={Finite-volume fractional-moment criteria for Anderson localization},
   journal={Comm. Math. Phys.},
   volume={224},
   date={2001},
   pages={219--253},
}


\bib{C10}{misc}{
   author={C{h}ulaevsky, V.},
   title={A remark on  charge transfer processes  in multi-particle systems},
   status={arXiv:math-ph/1005.3387},
   date={2010},
}

\bib{C11}{misc}{
   author={Ch{u}laevsky, V.},
   title={Direct Scaling Analysis of localization in disordered systems. I.  Single-particle systems},
   status={arXiv:math-ph/1102.5059},
   date={2011},
}

\bib{CS09a}{article}{
   author={Chulaevsky, V.},
   author={Suhov, Y.},
   title={Wegner bounds for a two-particle tight binding model},
   journal={Commun. Math. Phys.},
   volume={283},
   date={2008},
   number={2},
   pages={479--489},
}

\bib{CS09b}{article}{
   author={C{h}ulaevsky, V.},
   author={Suhov, Y.},
   title={Multi-particle Anderson localisation: Induction on the number of particles},
   journal={Math. Phys. Anal. Geom.},
   volume={12},
   date={2009},
   number={2},
   pages={117--139},
}

\bib{CBS11}{article}{
   author={Chulaevsky, V.},
   author={Boutet de Monvel, A.},
   author={Suhov, Y.},
   title={Dynamical localization for a multi-particle model with an alloy-type external random potential},
   journal={Nonlinearity},
   volume={24},
   number={5},
   pages={1451--1472}
   date={2011},
}

\bib{CFKS87}{book}{
   author={Cycon, H.L.},
   author={Froese, R.G.},
   author={Kirsch, W.},
   author={Simon, B.},
   title={Schr\"{o}dinger operators},
   publisher={Springer-Verlag},
   place={Berlin Heidelberg},
   date={1987},
}

\bib{DK89}{article}{
   author={von Dreifus, H.},
   author={Klein, A.},
   title={A new proof of localization in the Anderson tight binding model},
   journal={Comm. Math. Phys.},
   volume={124},
   date={1989},
   pages={285--299},
}

\bib{DK91}{article}{
   author={von D{r}eifus, H.},
   author={Klein, A.},
   title={Localization for random Schr\"{o}dinger operators with correlated potentials},
   journal={Comm. Math. Phys.},
   volume={140},
   date={1991},
   pages={133--147},
}

\bib{DS01}{article}{
   author={Damanik, D.},
   author={Stollmann, P.},
   title={Multi-scale analysis implies strong dynamical localization},
   journal={Geom. Funct. Anal.},
   volume={11},
   date={2001},
   number={1},
   pages={11--29},
}

\bib{E78}{article}{
   author={Enss, V.},
   title={Asymptotic completeness for quantum-mechanical potential scattering. Short-range potentials},
   journal={Commun. Math. Phys.},
   volume={61},
   date={1978},
   pages={285--281},
}

\bib{E11}{misc}{
   author={Ekanga, T.},
   title={A simplified scaling analysis of multi-particle Anderson-type models at weak disorder
   or with weak interaction},
   date={2011},
   status={private communication},
}

\bib{FMSS85}{article}{
   author={Fr\"{o}hlich, J.},
   author={Martinelli, F.},
   author={Scoppola, E.},
   author={Spencer, T.},
   title={Constructive proof of localization in the Anderson tight binding
   model},
   journal={Comm. Math. Phys.},
   volume={101},
   date={1985},
   pages={21--46},
}

\bib{G10}{misc}{
   author={Gaume, M.},
   title={An extension of the multi-particle Wegner-type bound for weakly
    decoupled Hamiltonians},
   date={2010},
   status={private communication},
}

\bib{GD98}{article}{
   author={Germinet, F.},
   author={De Bi\`{e}vre, S.},
   title={Dynamical Localization for Discrete and Continuous Random Schr\"{o}dinger Operators},
   journal={Comm. Math. Phys.},
   volume={194},
   date={1998},
   pages={323--341},
}

\bib{GK01}{article}{
   author={Germinet, F.},
   author={Klein, A.},
   title={Bootstrap Multi-Scale Analysis and localization in random media},
   journal={Commun. Math. Phys.},
   volume={222},
   date={2001},
   pages={415--448},
}

\bib{Kato}{book}{
   author={Kato, T.},
   title={Perturbation theory for linear operators},
   publisher={Springer-Verlag},
   date={1976},
}

\bib{KZ03}{misc}{
   author={Klopp, F.},
   author={Zenk, H.},
   title={The integrated density of states for an interacting multielectron homogeneous model},
   status={preprint, Universit\'{e} Paris-Nord},
   date={2003},
}

\bib{Spe88}{article}{
   author={Spencer, T.},
   title={Localization for random and quasi-periodic potentials},
   journal={J. Stat. Phys.},
   volume={51},
   date={1988},
   pages={1009--1019},
}

\end{biblist}
\end{bibdiv}
\end{document}